
\documentclass[a4paper,10pt]{article}


\usepackage{graphicx}
\usepackage[colorlinks=true,linkcolor=blue,urlcolor=blue]{hyperref}

\usepackage{tikz}
\tikzstyle{axis}=[]
\tikzstyle{function}=[thick]
\tikzset{font=\footnotesize}
\newcommand{\tikzwidth}{0.75}
\newcommand{\tikzheight}{0.3}
\newcommand{\tabfigheight}{0.18\textheight}

\usepackage{enumitem}
\setenumerate[1]{label=(\alph*)}
\setenumerate[2]{label=(\roman*)}

\usepackage{pdfsync}

\usepackage{filecontents}

\usepackage{needspace}

\usepackage{amsfonts}
\usepackage{amsmath}
\usepackage{amsthm}

\newtheorem{theorem}{Theorem}[section]
\newtheorem{proposition}[theorem]{Proposition}
\newtheorem{lemma}[theorem]{Lemma}

\theoremstyle{definition}
\newtheorem{definition}[theorem]{Definition}
\newtheorem{example}[theorem]{Example}
\newtheorem{construction}[theorem]{Construction}

\theoremstyle{remark}
\newtheorem{remark}[theorem]{Remark}

\numberwithin{equation}{section}
\numberwithin{theorem}{section}

\usepackage{dcolumn}
\newcolumntype{d}{@{}D{.}{.}{3}@{}}

\DeclareMathOperator{\successors}{succ}
\DeclareMathOperator{\dom}{dom}
\DeclareMathOperator{\epi}{epi}
\DeclareMathOperator{\conv}{conv}
\DeclareMathOperator{\cone}{cone}
\DeclareMathOperator{\cl}{cl}



\begin{document}

\title{American and Bermudan options in currency markets with proportional transaction costs}

\author{Alet Roux\thanks{Department of Mathematics, University of York, Heslington, YO10 5DD, United Kingdom. Email: alet.roux@york.ac.uk} \and Tomasz Zastawniak\thanks{Department of Mathematics, University of York, Heslington, YO10 5DD, United Kingdom. Email: tomasz.zastawniak@york.ac.uk}}

\date{}

\maketitle

\begin{abstract}
The pricing and hedging of a general class of options (including American, Bermudan and European options) on multiple assets are studied in the context of currency markets where trading is subject to proportional transaction costs, and where the existence of a risk-free num\'eraire is not assumed. Constructions leading to algorithms for computing the prices, optimal hedging strategies and stopping times are presented for both long and short option positions in this setting, together with probabilistic (martingale) representations for the option prices.

\noindent \emph{Keywords:} American options, optimal stopping, proportional transaction costs, currencies.

\noindent \emph{MSC}: 91G20, 91G60, 60G40.
\end{abstract}

\section{Introduction}

{We} consider the pricing and hedging of a wide class of options within the model of foreign exchange markets proposed by {Kabanov \cite{kabanov1999}}, where proportional transaction costs are modelled as bid-ask spreads between currencies. This model has been well studied; see e.g.~{\cite{kabanov_rasonyi_stricker2002,kabanov_stricker2001b,schachermayer2004}}.

The results of this paper apply to any option that can be described in full by a payoff process together with an exercise policy specifying the circumstances in which it can be exercised at each date up to its expiration. {This includes} American, Bermudan and European options. For such options, we compute the ask price (seller's price, upper hedging price) as well as the bid price (buyer's price, lower hedging price), and derive probabilistic {(martingale)} representations for these prices. We also construct optimal superhedging trading strategies for the buyer and the seller, together with optimal stopping times consistent with the exercise {policy}.

{American options are being traded and hedged in large volumes throughout financial markets where transaction costs in the form of bid ask spreads are commonplace. The theory of American options under transaction costs developed to-date does not fully address the practical significance of the pricing and hedging problem in that it offers non-constructive existence proofs and tackles the short (seller's) position in American options only. This paper goes some way towards bridging the gap between the known theoretical results cited later in this introduction and practical considerations of being able to compute some prices, hedging strategies and stopping times for American options under transaction costs. In doing so this paper also provides alternative constructive proofs of the known results, and extends these results to both parties to the option contract, that is, not just the holder of a {short} position (the seller) but also to the party holding a long position (the buyer)
in the option.}

It is well known in complete models without transaction costs that the best stopping time for the holder of an American or Bermudan option is also the most expensive stopping time for the seller to hedge against, and that hedging against this particular stopping time protects the seller against all other stopping times. {Chalasani and Jha \cite{chalasani_jha2001}} observed that this is no longer the case for American options in the presence of proportional transaction costs: to hedge against all {(ordinary)} stopping times, the seller must in effect be protected against a certain \emph{randomised} stopping time {(see Definition~\ref{def:randomised-stopping-time})}. Thus the optimal stopping times of the buyer and seller of an American option no longer coincide, and it may cost the seller more to hedge against all stopping times than to hedge against the best stopping time for the buyer. {This} is true in general for any option that allows more than one exercise time (i.e.~any non-European option).

There is a geometrical explanation for this apparent lack of symmetry. For both parties to an option, the price, optimal stopping time and optimal superhedging strategy solve a linear optimization problem over the set of superhedging strategies. The superhedging strategies for the seller form a convex set. In contrast, each superhedging strategy for the buyer hedges against a specific stopping time, so that a convex combination of two superhedging strategies for different stopping times may no longer be a superhedging strategy for the buyer. Thus the pricing problem~\eqref{eq:pi-a} for the seller is convex, whereas if the exercise policy allows more than one stopping time, then the pricing problem~\eqref{eq:pi-b} for the buyer is a mixed integer programming problem that is generally not convex (not even in the friction-free case; for American options see {\cite{Pennanen_King2004}}).

The linear optimization problems~\eqref{eq:pi-a} and \eqref{eq:pi-b} both grow exponentially with the number of time steps, even for options with path-independent payoffs (see {\cite{Chen_Palmer_Sheu2008,rutkowski1998}} for results on European options). Various special cases of European and American options have been studied in binomial two-asset models with proportional transaction costs. The replication of European options has been well studied {(see e.g.~\cite{bensaid_lesne_pages_scheinkman1992,boyle_vorst1992,palmer2001a,perrakis_lefoll1997})}, and the first algorithm (with exponential running time) for computing the bid and ask prices for European options was established by Edirisinghe, Naik and Uppal \cite{edirisinghe_naik_uppal1993}. In a similar technical setting, {Koci\'nski \cite{kocinski1999,kocinski2001b}} studied the exact replication of American options, {Perrakis and Lefoll \cite{perrakis_lefoll2000,perrakis_lefoll2004}} investigated the pricing of American call and put options, and {Tokarz and Zastawniak
\cite{tokarz_zastawniak2006}} worked with general American options under small proportional transaction costs. Recently, {Loehne and Rudloff \cite{Loehne_Rudloff2011}} established {an algorithm} for finding the set of superhedging strategies for European {(but not for American or Bermudan)} options in a similar technical setting to the present paper.

{The main contribution of this paper is to provide constructive and efficient algorithms for computing the option prices, optimal
hedging strategies and stopping times for both the long and short positions in
American-style options in multi-asset markets under proportional transaction
costs. Another goal is to establish in a constructive manner probabilistic
(martingale) representations for American-style options for both the seller's
(long) and buyer's (short) positions in such options.

Previous work in this direction involves non-constructive representation
theorems for the short position in American options. This
includes the pioneering paper by {Chalasani and Jha \cite{chalasani_jha2001}}, who treated
American options with cash settlement and no transaction costs at the time of settlement in a single-stock market model.
Moreover, for American options in currency markets, {Bouchard and Temam \cite{bouchard_temam2005}}
established dual representations for the set of initial endowments that allow
to superhedge the short position. Their work, based on a non-constructive
existence argument, allows for a general setting based on an arbitrary
probability space. Similar work has been carried out in a continuous time model {\cite{Bouchard_Chassagneux2009,DeValliere_Denis_Kabanov2009}}.

The convex duality methods deployed in these papers do not, however, lend
themselves to studying hedging or pricing for the opposite party to an
American option contract, namely the option's buyer, as this involves an
inherently non-convex optimisation problem. Ideas going beyond convex duality
are necessary and are developed here.

The constructions and numerical  algorithms put
forward in the present paper call naturally for a discretisation. It is a
reasonable compromise between admitting models based or arbitrary probability spaces and possibly continuous time (such work involves topological and
functional analytic questions of theoretical interest, but non-constructive existence proofs) and being able to actually compute the prices, hedging strategies
and stopping times (as demanded by the applied nature of the problem in hand), and the dual counterparts thereof.

The constructive results for American-style options in multi-asset markets
under transaction costs are new. Similar questions were studied by {Loehne and Rudloff \cite{Loehne_Rudloff2011}} for European options, also in the discrete setting. Their
results on European options are covered by the present work as a special case.
In fact, even when specialised to European options, our results are still slightly more general as we are able to relax the
robust no-arbitrage condition of {Schachermayer \cite{schachermayer2004}} that was assumed {in
\cite{Loehne_Rudloff2011}}, and require just the weak no-arbitrage property
(2.4) of {Kabanov and Stricker \cite{kabanov_stricker2001b}}.}

The {proofs} of the main results (Theorems \ref{th:seller} and \ref{th:buyer}) {include} constructions of the sets of superhedging strategies and stopping times for both the buyer and seller, together with the approximate martingales and pricing measures involved in the martingale representations of both the bid and ask price of an option with general exercise policy (subject to mild regularity conditions) on multiple assets under proportional transaction {costs in} a general discrete time setting. Such constructions extend and improve upon each of the various special cases mentioned above, as well as the results we previously reported for European and American options in two-asset models \cite{Roux_Tokarz_Zastawniak2008,Roux_Zastawniak2009}. These constructions are efficient in that their running length grows only polynomially with the number of time steps when pricing options with {path-dependent} payoffs and exercise policies in recombinant tree models.

The paper is organised as follows. In Section~\ref{sec:preliminaries} we fix the notation, specify the market model with transaction costs, and review {various notions concerning convex sets and functions}, randomised stopping times and approximate martingales. The notion of an exercise policy is introduced in Section~\ref{sec:exercise-policy}. The main pricing and hedging results for the buyer and seller are presented in Section \ref{sec:main-results} as Theorems \ref{th:seller} and \ref{th:buyer}, and various special cases are discussed. Section~\ref{sec:seller} is devoted to the proof of Theorem \ref{th:seller} for the seller, while Theorem \ref{th:buyer} is proved in Section \ref{sec:buyer}. In Section \ref{sec:numerical} the constructions in Sections~\ref{sec:seller} {and}~\ref{sec:buyer} are applied to two realistic examples. Appendix~\ref{appendix} gives the proof of a technical lemma used in the proof of Theorem \ref{th:seller}.

\section{Preliminaries and notation}
\label{sec:preliminaries}

\subsection{Convex sets and functions}

For any set $A\subseteq\mathbb{R}^d$, define
\[
 \sigma_i(A) := \{x=(x^1,\ldots,x^d)\in A : x^i = 1\},
\]
and define the \emph{cone generated by $A$} as
\[
 \cone A := \{\lambda x:\lambda\ge0,x\in A\}.
\]
We say that a non-empty cone $C\subseteq\mathbb{R}^d$ is \emph{compactly $i$-generated} if $\sigma_i(C)$ is compact, non-empty and $C=\cone \sigma_i(C)$.

Let $\cdot$ denote the scalar product in {$\mathbb{R}^d$}. For any {non-empty convex cone} $A\subseteq\mathbb{R}^d$, denote by $A^\ast$ the polar of $-A$, i.e.
\[
 A^\ast := \{y\in\mathbb{R}^d:y\cdot x \ge0\text{ for all }x\in A\}.
\]
If $A$ is a non-empty closed convex cone, then $A^\ast$ is also a non-empty closed convex cone {\cite[Theorem 14.1]{rockafellar1996}}.

The \emph{effective domain} of any convex function $f:\mathbb{R}^d\rightarrow\mathbb{R}\cup\{+\infty,-\infty\}$ is defined as
\[
 \dom f := \{y\in\mathbb{R}^d : f(y)<\infty\}.
\]
The \emph{epigraph} of $f$ is defined as
\[
 \epi f := \{(y_0,y)\in\mathbb{R}\times\mathbb{R}^d:y_0\ge f(y)\}.
\]
The function $f$ is called \emph{proper} if $\epi f\neq\emptyset$ and $f(y)>-\infty$ for all $y\in\mathbb{R}^d$.

Define the \emph{convex hull} $\conv A$ of any set $A\subseteq\mathbb{R}^d$ as the smallest convex set containing $A$. Define the \emph{convex hull} of a finite collection $g_1,\ldots,g_n:\mathbb{R}^d\rightarrow\mathbb{R}\cup\{\infty\}$ of proper convex functions as the greatest convex function majorised by $g_1,\ldots,g_n$, equivalently
\[
 \conv\{g_1,\ldots,g_n\}(x) := \inf \sum_{k=1}^n\alpha_kg_k(x_k)
\]
for each $x\in\mathbb{R}^d$, where the infimum is taken over all $x_k\in\mathbb{R}^d$ and $\alpha_k\ge0$ for $k=1,\ldots,n$ such that
\begin{align*}
 \sum_{k=1}^n\alpha_k &= 1, & \sum_{k=1}^n\alpha_kx_k &= x.
\end{align*}
Also note that
\[
 \dom \conv\{g_1,\ldots,g_n\} = \conv\left[\bigcup_{k=1}^n\dom g_k\right].
\]

The \emph{closure} $\cl f$ of a proper convex function $f:\mathbb{R}^d\rightarrow\mathbb{R}\cup\{\infty\}$ is defined as the unique function whose epigraph is
\begin{equation} \label{eq:closure-of-function}
 \epi (\cl f) = \overline{\epi f}.
\end{equation}
If $f$ is not proper, then $\cl f$ is defined as the constant function $-\infty$. A proper convex function $f$ is called \emph{closed} if $f=\cl f$, equivalently if $\epi f$ is closed.

Define the \emph{support function} $\delta^\ast_A:\mathbb{R}^d\rightarrow\mathbb{R}\cup\{\infty\}$ of a non-empty convex set $A\subseteq\mathbb{R}^d$ as
\[
 \delta^\ast_A (x) := \sup\{x\cdot y: y\in A\}.
\]
The function $\delta^\ast_A$ is convex, proper and positively homogeneous. If $A$ is closed, then $\delta^\ast_A$ is closed {\cite[Theorem 13.2]{rockafellar1996}}.
We shall make use of the identity
\begin{equation} \label{eq:lem:support-of-cone:Rd}
 \delta^\ast_{\mathbb{R}^d}(y)=
\begin{cases}
 0 &\text{if } y=0,\\
 \infty &\text{if } y\neq0.
\end{cases}
\end{equation}

\subsection{Proportional transaction costs in a currency market model}

{We} consider a market model with~$d$ assets (henceforth referred to as currencies following the terminology of {Kabanov \cite{kabanov1999} and others}) and discrete trading dates~$t=0,1,\ldots,T$ on a finite probability space~$(\Omega,\mathcal{F},{\mathbb{P}})$ with filtration~$(\mathcal{F}_t)$. The exchange rates between the currencies are represented as an adapted matrix-valued process~$(\pi^{ij}_t)_{i,j=1}^d$, where for any~$t=0,\ldots,T$ and~$i,j=1,\ldots,d$ the quantity~$\pi_t^{ij}>0$ is the amount in currency~$i$ that needs to be exchanged in order to receive one unit of currency~$j$ at time~$t$.

 We assume without loss of generality that $\mathcal{F}_0$ is trivial, that $\mathcal{F}_T = 2^\Omega$ and that ${\mathbb{P}}(\omega)>0$ for all $\omega\in\Omega$. Let~\(\Omega_t\) be the collection of atoms (called \emph{nodes}) of~\(\mathcal{F}_t\) at any time~\(t\). A node~\(\nu\in\Omega_{t+1}\) at time~\(t+1\) is called a \emph{successor} of a node~\(\mu\in\Omega_t\) at time~\(t\) if~\(\nu\subseteq\mu\). Denote the collection of successors of any node~\(\mu\) by~\(\successors \mu\).

We write~$\mathcal{L}_t$ for the family of~$\mathcal{F}_t$-measurable~$\mathbb{R}^d$-valued random variables, where for convenience~\mbox{$\mathcal{L}_{-1}:=\mathcal{L}_0$}. Throughout this paper we shall implicitly and uniquely identify random variables in $\mathcal{L}_t$ with functions on $\Omega_t$, and we shall throughout adopt the notation
\[
 \mathcal{A}^\mu:=\{X^{\mu}:X\in\mathcal{A}\}\text{ for all }\mathcal{A}\subseteq\mathcal{L}_t, \mu\in\Omega_t.
\]

Writing~$\mathcal{L}_t^+$ for the family of non-negative random variables in~$\mathcal{L}_t$, a portfolio~$x=(x^1,\ldots,x^d)\in\mathcal{L}_t$ is called {\emph{solvent}} whenever it can be exchanged into a portfolio in~$\mathcal{L}_t^+$ without additional investment, i.e.\ if there exist~$\mathcal{F}_t$-measurable random variables~$\beta^{ij}\ge0$ for~$i,j=1,\ldots,d$ such that
\begin{equation}\label{eq:1.1}
	x^j + \sum_{i=1}^d\beta^{ij} - \sum_{i=1}^d\beta^{ji}\pi_t^{ji}\ge0 \text{ for all } j.
\end{equation}
Here~$\beta^{ij}$ represents the number of units of currency~$j$ obtained by exchanging currency~$i$. {The solvency condition~\eqref{eq:1.1} can be written as
\[
	x\in\mathcal{K}_t,
\]
where $\mathcal{K}_t$ is the convex cone in~$\mathcal{L}_t$ generated by the unit vectors $e^i$, $i=1,\ldots,d$ forming the canonical basis in $\mathbb{R}^d$ and the vectors $e^i\pi^{ij}_t-e^j$, $i,j=1,\ldots,d$. We refer to~$\mathcal{K}_t$ as the \emph{solvency cone}. Observe that $\mathcal{K}_t$ is a polyhedral cone and therefore closed.}

A \emph{self-financing strategy}~$y=(y_t)$ is a predictable $\mathbb{R}^d$-valued process with initial value $y_0\in\mathcal{L}_0=\mathbb{R}^d$ such that
\[
	y_t - y_{t+1}\in\mathcal{K}_t \text{ for all }t<T.
\]
Denote the set of all self-financing strategies by~$\Phi$.

The model with transaction costs is said to satisfy the \emph{weak no-arbitrage property} $(\mathrm{NA}^\mathrm{w})$ of {Kabanov and Stricker \cite{kabanov_stricker2001b}} if
\begin{equation}\label{eq:NA}
\left\{y_T : y\in\Phi \text{ and } y_0=0\right\} \cap \mathcal{L}^{+}_T=\left\{0\right\}.
\end{equation}
This formulation is formally different but equivalent to that of {\cite{kabanov_stricker2001b}}, and was introduced by {Schachermayer \cite{schachermayer2004}}, who called it simply the no-arbitrage property.

We have the following fundamental result.

\begin{theorem}[\cite{kabanov_stricker2001b,schachermayer2004}] \label{th:ftap}
	The model satisfies the weak no-arbitrage property if and only if there exist a probability measure~$\mathbb{Q}$ equivalent to~${\mathbb{P}}$ and an~$\mathbb{R}^d$-valued $\mathbb{Q}$-martingale $S=(S^1_t,\ldots,S^d_t)$ such that
	\begin{equation}\label{eq:1.2}
		0 < S^j_t\le\pi^{ij}_tS^i_t \text{ for all }i,j,t.
	\end{equation}
\end{theorem}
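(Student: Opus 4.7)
The plan is to prove the two implications separately; finiteness of $\Omega$ keeps the argument in finite-dimensional geometry and avoids any delicate closedness issues. For the direction from $(\mathbb{P},S)$ to $(\mathrm{NA}^\mathrm{w})$, the key observation is that \eqref{eq:1.2} together with $S^j_t > 0$ is equivalent to saying $S_t \in \mathcal{S}_t^*$ pointwise: indeed, $\mathcal{S}_t$ is generated by $\mathcal{L}^+_t$ together with the vectors $-(e^i - e^j\pi^{ji}_t)$, so its polar consists of the $y \in \mathbb{R}^d$ with $y \geq 0$ and $y^j \leq \pi^{ij}_t y^i$ for all $i,j$. For any self-financing $y$ with $y_0 = 0$, the solvency increment $y_t - y_{t+1} \in \mathcal{S}_t$ pairs with $S_t$ to give $(y_t - y_{t+1})\cdot S_t \geq 0$. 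Predictability of $y_{t+1}$ combined with the martingale property gives $\mathbb{E}_{\mathbb{P}}[y_{t+1}\cdot S_{t+1}\mid\mathcal{F}_t] = y_{t+1}\cdot S_t$, and telescoping yields $\mathbb{E}_{\mathbb{P}}[y_T\cdot S_T] \leq 0$. If in addition $y_T \in \mathcal{L}^+_T$, the strict positivity of $S_T$ forces $y_T = 0$.

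For the converse I would use separation in $\mathbb{R}^{d|\Omega_T|}$. The set $A := \{y_T : y\in\Phi,\ y_0 = 0\}$ is the image under the linear map $y \mapsto y_T$ of the polyhedral cone of self-financing strategies starting from the origin, hence is itself a closed polyhedral convex cone. Condition \eqref{eq:NA} says that $A$ meets $\mathcal{L}^+_T$ only at the origin. Since $\mathcal{L}^+_T$ is polyhedral with extreme rays $\mathbb{R}_+ e^i \mathbf{1}_{\{\omega\}}$ for $\omega \in \Omega_T$ and $i = 1,\dots,d$, separating each such ray from $A$ and summing the resulting linear functionals yields a single vector $\eta \in \mathbb{R}^{d|\Omega_T|}$ with strictly positive entries such that $\eta\cdot a \leq 0$ for every $a \in A$.

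Setting $\tilde S_T(\omega) := \eta(\omega)/\mathbb{Q}(\omega)$ and $\tilde S_t := \mathbb{E}_{\mathbb{Q}}[\tilde S_T\mid\mathcal{F}_t]$ gives a strictly positive $\mathbb{Q}$-martingale. The crucial verification is that $\tilde S_t \in \mathcal{S}_t^*$ pointwise: for any $\mathcal{F}_t$-measurable $\xi \in \mathcal{S}_t$, the strategy that stays at $0$ up to time $t$, moves to $-\xi$ and then holds is self-financing (because $0 - (-\xi) = \xi \in \mathcal{S}_t$), so $\eta \cdot (-\xi) \leq 0$; localising $\xi$ to arbitrary events in $\mathcal{F}_t$ and applying the tower property gives $\tilde S_t\cdot\xi \geq 0$ $\mathbb{Q}$-a.s., which in view of the explicit description of $\mathcal{S}_t^*$ is precisely \eqref{eq:1.2} for $\tilde S$. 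Finally I would pass to an equivalent probability measure by setting $d\mathbb{P}/d\mathbb{Q} := \tilde S^1_T/\mathbb{E}_{\mathbb{Q}}[\tilde S^1_T]$ and $S^i_t := \tilde S^i_t/\tilde S^1_t$; a Bayes rule computation confirms the $\mathbb{P}$-martingale property, and \eqref{eq:1.2} is preserved on dividing by the strictly positive scalar $\tilde S^1_t$. The hardest step is the separation argument: one needs both the closedness of $A$ (guaranteed here by its polyhedral structure) and a componentwise strictly positive $\eta$, since the subsequent change of num\'eraire requires $\tilde S^1_T > 0$ $\mathbb{Q}$-almost surely.
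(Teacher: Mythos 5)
First, a point of comparison: the paper does not prove Theorem~\ref{th:ftap} at all --- it is imported from Kabanov--Stricker and Schachermayer --- so there is no in-paper argument to measure you against. Your proposal is the standard self-contained proof for finite $\Omega$. The easy direction (identifying \eqref{eq:1.2} with $S_t\in\mathcal{S}_t^\ast$ pointwise plus strict positivity, pairing the solvency increments with $S_t$, and telescoping via predictability) is correct, as are the closedness of $A$ by polyhedrality, the verification that $\tilde{S}_t\in\mathcal{S}_t^\ast$ via the strategy that moves to $-\xi$ at time $t$ and holds, and the final Bayes-rule normalisation. (That last step is not actually needed for the statement as given: once $\tilde{S}$ is a strictly positive $\mathbb{Q}$-martingale with $\tilde{S}_t\in\mathcal{S}_t^\ast$, the pair $(\mathbb{Q},\tilde{S})$ already satisfies the theorem, since $\mathbb{Q}$ is equivalent to itself.)

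There is, however, one genuine gap, and it sits exactly where you locate the difficulty: the separation step. For each extreme ray $r=e^i\mathbf{1}_{\{\omega\}}$ you produce $\eta_r$ with $\eta_r\cdot a\le 0$ on $A$ and $\eta_r\cdot r>0$, and then assert that $\eta=\sum_r\eta_r$ has strictly positive entries. This does not follow: each $\eta_r$ is controlled only on $A$ and on its own ray, so its remaining coordinates may be negative and can destroy positivity of the sum elsewhere. Already with $A=\{0\}$ in $\mathbb{R}^2$ the separators $\eta_1=(1,-10)$ for $e^1$ and $\eta_2=(-10,1)$ for $e^2$ are admissible, yet their sum $(-9,-9)$ is nowhere positive. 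The standard repair is either (i) to separate each ray from the larger polyhedral cone $A-\mathcal{L}_T^+$ --- one checks that $(A-\mathcal{L}_T^+)\cap\mathcal{L}_T^+=\{0\}$ follows from \eqref{eq:NA} just as for $A$ --- so that every $\eta_r$ is automatically nonnegative on $\mathcal{L}_T^+$ and the sum is then strictly positive on each ray; or (ii) to separate the closed cone $A$ from the compact simplex $\{x\in\mathcal{L}_T^+:\sum_{\omega,i}x^i(\omega)=1\}$ in a single stroke, which gives $\eta\cdot a\le 0<\beta\le\eta\cdot b$ and hence strict positivity at every vertex $e^i\mathbf{1}_{\{\omega\}}$. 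With either fix the remainder of your argument goes through unchanged.
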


\begin{remark} \label{rem:St*-generation}
Condition~\eqref{eq:1.2} can equivalently be written as
\[
	S_t\in\mathcal{K}^\ast_t\setminus\{0\}\text{ for all }t.
\]
If the model satisfies the weak no-arbitrage property, then $\mathcal{K}^\ast_t$ is a non-empty polyhedral cone, and it is compactly $i$-generated with
\begin{multline*}
 \sigma_i(\mathcal{K}_t^{\ast\mu}) = \left\{(s^1,\ldots,s^d)\in\mathbb{R}^d:s^i=1,\tfrac{1}{\pi^{ji\mu}_t}\le s^j\le\pi^{ij\mu}_t\text{ for all }j\neq i,\right. \\
\left. s^j \le \pi^{kj\mu}_ts^k\text{ for all }j\neq i,k\neq i\right\}
\end{multline*}
for all $\mu\in\Omega_t$.
\end{remark}

\begin{definition}[Equivalent martingale pair]
A pair~$(\mathbb{Q},S)$ satisfying the conditions of Theorem~\ref{th:ftap} is called an \emph{equivalent martingale pair}.
\end{definition}

Denote the family of equivalent martingale pairs by $\mathcal{P}$. Let
\[
 \mathcal{P}^i:=\{(\mathbb{Q},S)\in\mathcal{P}:S^i_t=1\text{ for all }t\}
\]
for all $i=1,\ldots,d$.

We assume from here on that the model satisfies the weak no-arbitrage property, so that $\mathcal{P}\neq\emptyset$, equivalently $\mathcal{P}^i\neq\emptyset$ for all $i$.

\begin{example}
\label{Exl:1}\upshape Consider three assets, where asset 3 is a cash account.
Suppose that in a friction-free market assets~1 and~2 can be bought/sold,
respectively, for $S^{1}=12$ and $S^{2}=8$ units of cash (asset 3). The
friction-free exchange rate matrix would then be%
\[
\left[
\begin{array}
[c]{ccc}%
1 & S^{2}/S^{1} & 1/S^{1}\\
S^{1}/S^{2} & 1 & 1/S^{2}\\
S^{1} & S^{2} & 1
\end{array}
\right]  =\left[
\begin{array}
[c]{ccc}%
1 & 2/3 & 1/12\\
3/2 & 1 & 1/8\\
12 & 8 & 1
\end{array}
\right]  .
\]
Now assume that whenever an asset~$i$ is exchanged into a different asset~$j$,
transaction costs are charged at a fixed rate $k\geq0$ against asset~$i$,
resulting in each off-diagonal exchange rate increased by a factor $1+k$. If
$k=\frac{1}{3}$, the exchange rate matrix becomes%
\[
\pi=\left[
\begin{array}
[c]{ccc}%
1 & (1+k)S^{2}/S^{1} & (1+k)/S^{1}\\
(1+k)S^{1}/S^{2} & 1 & (1+k)/S^{2}\\
(1+k)S^{1} & (1+k)S^{2} & 1
\end{array}
\right]  =\left[
\begin{array}
[c]{ccc}%
1 & 8/9 & 1/9\\
1/4 & 1 & 1/6\\
16 & 32/3 & 1
\end{array}
\right]  .
\]
The cone $\mathcal{K}$ consisting of solvent portfolios $(x^{1},x^{2},x^{3})$
and the section $\sigma_{3}(\mathcal{K}^{\ast})$, which generates the
cone~$\mathcal{K}^{\ast}$, are shown in Figures~\ref{s_and_sstar.png}(a), (b),
respectively.
\begin{figure}
\includegraphics[
natheight=1.969200in,
natwidth=4.479700in,
height=1.6025in,
width=3.6123in
]%
{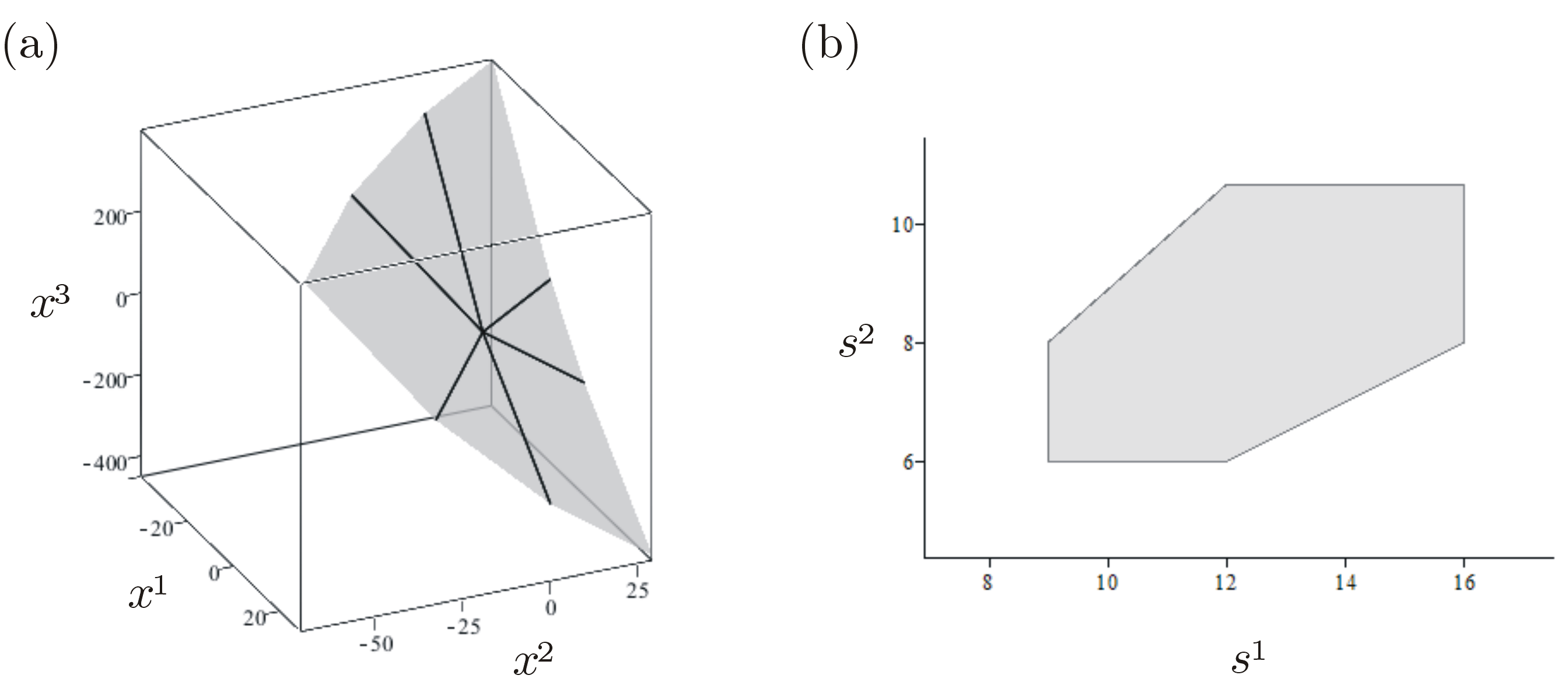}%
\caption{Solvency cone $\mathcal{K}$ and section $\sigma_{3}%
(\mathcal{K}^{\ast})$ of the cone $\mathcal{K}^{\ast}$, Example~\ref{Exl:1}}%
\label{s_and_sstar.png}%
\end{figure}
\end{example}

\subsection{Randomised stopping times}

\begin{definition}[Randomised stopping time] \label{def:randomised-stopping-time}
A \emph{randomised (or mixed) stopping time}~\(\chi=(\chi_t)\) is a non-negative adapted process such that
\[
	\sum_{t=0}^T\chi_t = 1.
\]
We write~$\mathcal{X}$ for the collection of all randomised stopping times.
\end{definition}

Let $\mathcal{T}$ be the set of (ordinary) stopping times. Any stopping time~$\tau\in\mathcal{T}$ can be {identified with} the randomised stopping time~$\chi^\tau=(\chi^\tau_t)\in\mathcal{X}$ defined by
\[
	\chi^\tau_t:=\mathbf{1}_{\{\tau=t\}}
\]
for all $t$. Here $\mathbf{1}_A$ denotes the indicator function of $A\subseteq\Omega$.

For any adapted process~$X=(X_t)$ and~$\chi\in\mathcal{X}$, define the \emph{value of~$X$ at~$\chi$} as
\[
	X_\chi := \sum_{t=0}^T\chi_tX_t.
\]
Moreover, define the processes~\(\chi^\ast=(\chi^\ast_t)\) and~\(X^{\chi\ast}_{\phantom t}=(X^{\chi\ast}_t)\) as
\begin{align*}
	\chi_t^\ast &:= \sum_{s=t}^T\chi_s, & X_t^{\chi\ast} &:= \sum_{s=t}^T\chi_sX_s
\end{align*}
for all~$t$. Observe that~\(\chi^\ast\) is a predictable process since
\[
	\chi_t^\ast = 1 - \sum_{s=0}^{t-1}\chi_s
\]
whenever~$t>0$. For notational convenience define
\begin{align}\label{eq:1.4}
	\chi_{T+1}^\ast &:= 0, & X_{T+1}^{\chi\ast} &:= 0.
\end{align}

\begin{definition}[{$\chi$}-approximate martingale pair] \label{def:chi-eqv-pricing-measure-and-pricing-process}
For any~$\chi\in\mathcal{X}$ the pair~$(\mathbb{Q},S)$ is called a~\emph{\(\chi\)-approximate martingale pair} if~$\mathbb{Q}$ is a probability measure and~$S$ an adapted process satisfying
\begin{align*}
	S_t&\in\mathcal{K}_t^\ast\setminus\{0\}, & \mathbb{E}_{\mathbb{Q}}(S_{t+1}^{\chi\ast}|\mathcal{F}_t) &\in\mathcal{K}_t^\ast
\end{align*}
for all~$t$. If in addition~$\mathbb{Q}$ is equivalent to~${\mathbb{P}}$, then~$(\mathbb{Q},S)$ is called a~\(\chi\)-approximate \emph{equivalent} martingale pair.
\end{definition}

Denote the family of \(\chi\)-approximate martingale pairs~$(\mathbb{Q},S)$ by~$\bar{\mathcal{P}}(\chi)$, and write $\mathcal{P}(\chi)$ for the family of~\(\chi\)-approximate equivalent martingale pairs. For an ordinary stopping time $\tau\in\mathcal{T}$ we write $\mathcal{P}(\tau):=\mathcal{P}(\chi^\tau)$ and $\bar{\mathcal{P}}(\tau):=\bar{\mathcal{P}}(\chi^\tau)$ and say that $(\mathbb{Q},S)$ is a $\tau$-approximate (equivalent) martingale pair whenever it is a $\chi^\tau$-approximate (equivalent) martingale pair.

For any $\chi\in\mathcal{X}$ and $i=1,\ldots,d$ define
\begin{align*}
 \bar{\mathcal{P}}^i(\chi)&:=\{(\mathbb{Q},S)\in\bar{\mathcal{P}}(\chi):S^i_t=1\text{ for all }t\},\\
\mathcal{P}^i(\chi)&:=\{(\mathbb{Q},S)\in\mathcal{P}(\chi):S^i_t=1\text{ for all }t\}.
\end{align*}
Noting that
$\mathcal{P} \subseteq \mathcal{P}(\chi) \subseteq \bar{\mathcal{P}}(\chi)$,
it follows that
$\mathcal{P}^i \subseteq \mathcal{P}^i(\chi) \subseteq \bar{\mathcal{P}}^i(\chi)$,
and the weak no-arbitrage property implies that all these families are non-empty.

We have the following simple result.

\begin{lemma} \label{lem:closure}
	Fix any $i=1,\ldots,d$, and let~$\xi$ be any adapted $\mathbb{R}^d$-valued process. Then for any~$\delta>0$, any~$\chi\in\mathcal{X}$ and any~$(\bar{\mathbb{Q}},\bar{S})\in\bar{\mathcal{P}}^i(\chi)$ there exists a $\chi$-approximate martingale pair $(\mathbb{Q}^\delta,S^\delta)\in\mathcal{P}^i(\chi)$ such that
	\[
		\lvert\mathbb{E}_{\mathbb{Q}^\delta}((\xi\cdot S^\delta)_\chi) - \mathbb{E}_{\bar{\mathbb{Q}}}((\xi\cdot \bar{S})_\chi)\rvert < \delta.
	\]
\end{lemma}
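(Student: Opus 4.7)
The plan is a convex-mixture argument. Since the weak no-arbitrage property is in force, $\mathcal{P}^i$ is non-empty; fix any auxiliary pair $(\mathbb{P}^\ast,S^\ast)\in\mathcal{P}^i\subseteq\mathcal{P}^i(\chi)$, and write $\bar Z_t$ and $Z^\ast_t$ for the $\mathbb{Q}$-density processes of $\bar{\mathbb{P}}$ and $\mathbb{P}^\ast$. For a parameter $\epsilon\in(0,1)$ to be chosen at the end in terms of $\delta$, I would put
$$
\mathbb{P}^\epsilon:=(1-\epsilon)\bar{\mathbb{P}}+\epsilon\mathbb{P}^\ast,\quad Z^\epsilon_t:=(1-\epsilon)\bar Z_t+\epsilon Z^\ast_t,\quad S^\epsilon_t:=\frac{(1-\epsilon)\bar Z_t\bar S_t+\epsilon Z^\ast_tS^\ast_t}{Z^\epsilon_t},
$$
and then set $(\mathbb{P}^\delta,S^\delta):=(\mathbb{P}^\epsilon,S^\epsilon)$ for the winning $\epsilon$. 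The key design feature is that $Z^\epsilon_tS^\epsilon_t$ is additive in the two ingredients, so every Bayes manipulation and every expectation splits linearly.

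Membership $(\mathbb{P}^\epsilon,S^\epsilon)\in\mathcal{P}^i(\chi)$ then reduces to three routine checks. Equivalence $\mathbb{P}^\epsilon\sim\mathbb{Q}$ follows from $Z^\ast_t>0$. The inclusion $S^\epsilon_t\in\mathcal{S}_t^\ast$ follows by writing $S^\epsilon_t=\alpha_t\bar S_t+(1-\alpha_t)S^\ast_t$ with $\alpha_t:=(1-\epsilon)\bar Z_t/Z^\epsilon_t\in[0,1]$ and invoking convexity of the cone $\mathcal{S}_t^\ast$; the normalisation $(S^\epsilon_t)^i=1$ rules out $S^\epsilon_t=0$ and also places the pair in $\mathcal{P}^i$ rather than merely $\mathcal{P}$. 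For the $\chi$-approximate martingale condition, applying Bayes' formula to the $\mathcal{F}_s$-measurable variables $\chi_sS^\epsilon_s$ for each $s\ge t+1$ and summing yields
$$
\mathbb{E}_{\mathbb{P}^\epsilon}(S^{\epsilon,\chi\ast}_{t+1}\mid\mathcal{F}_t)=\frac{(1-\epsilon)\bar Z_t}{Z^\epsilon_t}\,\mathbb{E}_{\bar{\mathbb{P}}}(\bar S^{\chi\ast}_{t+1}\mid\mathcal{F}_t)+\frac{\epsilon Z^\ast_t}{Z^\epsilon_t}\,\mathbb{E}_{\mathbb{P}^\ast}(S^{\ast,\chi\ast}_{t+1}\mid\mathcal{F}_t),
$$
a nonnegative combination of two elements of $\mathcal{S}_t^\ast$, and hence in $\mathcal{S}_t^\ast$.

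The same additivity controls the error. Using $\mathbb{E}_{\mathbb{Q}}(Z^\epsilon_T\mid\mathcal{F}_t)=Z^\epsilon_t$ and the identity $Z^\epsilon_tS^\epsilon_t=(1-\epsilon)\bar Z_t\bar S_t+\epsilon Z^\ast_tS^\ast_t$ term by term in $(\xi\cdot S^\epsilon)_\chi=\sum_t\chi_t\,\xi_t\cdot S^\epsilon_t$, I obtain
$$
\mathbb{E}_{\mathbb{P}^\epsilon}((\xi\cdot S^\epsilon)_\chi)=(1-\epsilon)\,\mathbb{E}_{\bar{\mathbb{P}}}((\xi\cdot\bar S)_\chi)+\epsilon\,\mathbb{E}_{\mathbb{P}^\ast}((\xi\cdot S^\ast)_\chi),
$$
so the quantity to be bounded equals $\epsilon\,|\mathbb{E}_{\mathbb{P}^\ast}((\xi\cdot S^\ast)_\chi)-\mathbb{E}_{\bar{\mathbb{P}}}((\xi\cdot\bar S)_\chi)|$. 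Both expectations are finite because the probability space is finite, so picking $\epsilon$ small enough drives the error below $\delta$.

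I do not anticipate a genuine obstacle: this is the standard \emph{thicken with an equivalent pair} device. The one cosmetic point is the handling of nodes where $\bar Z_t$ vanishes, on which Bayes' formula for $\bar{\mathbb{P}}$ is undefined; this causes no harm because $\bar Z$ is a nonnegative $\mathbb{Q}$-martingale, so $\bar Z_s=0$ for all $s\ge t$ on $\{\bar Z_t=0\}$, and the coefficient $(1-\epsilon)\bar Z_t/Z^\epsilon_t$ vanishes there, so the offending conditional expectation need not be evaluated at all.
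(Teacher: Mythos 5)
Your construction is exactly the paper's: the paper takes $\mathbb{P}^\delta=(1-\varepsilon)\bar{\mathbb{P}}+\varepsilon\mathbb{P}$ for an auxiliary $(\mathbb{P},S)\in\mathcal{P}^i$ and sets $S^\delta_t:=(1-\varepsilon)\bar S_t\,\mathbb{E}_{\mathbb{P}^\delta}\bigl(\tfrac{d\bar{\mathbb{P}}}{d\mathbb{P}^\delta}\big|\mathcal{F}_t\bigr)+\varepsilon S_t\,\mathbb{E}_{\mathbb{P}^\delta}\bigl(\tfrac{d\mathbb{P}}{d\mathbb{P}^\delta}\big|\mathcal{F}_t\bigr)$, which is precisely your $S^\epsilon_t$ written via Radon--Nikodym derivatives instead of $\mathbb{Q}$-density processes. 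The verification details you supply (and the paper omits), including the linear splitting of $Z^\epsilon_tS^\epsilon_t$ and the treatment of nodes where $\bar Z_t=0$, are correct.
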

\begin{proof}
	The weak no-arbitrage property guarantees the existence of some~$(\mathbb{Q},S)\in\mathcal{P}^i\subseteq\mathcal{P}^i(\chi)$. If $\mathbb{E}_{\mathbb{Q}}((\xi\cdot S)_\chi)= \mathbb{E}_{\bar{\mathbb{Q}}}((\xi\cdot \bar{S})_\chi)$, then the claim holds with $\mathbb{Q}^\delta:=\mathbb{Q}$ and $S^\delta:=S$. If not, fix $\varepsilon := \min\left\{1,\frac{\delta}{2}/\lvert\mathbb{E}_{\mathbb{Q}}((\xi\cdot S)_\chi) - \mathbb{E}_{\bar{\mathbb{Q}}}((\xi\cdot \bar{S})_\chi)\rvert\right\}$, let $\mathbb{Q}^\delta := (1-\varepsilon)\bar{\mathbb{Q}} + \varepsilon\mathbb{Q}$ and
	\[
		S^\delta_t := (1-\varepsilon)\bar{S}_t\mathbb{E}_{\mathbb{Q}^\delta}\!\!\left(\left.\tfrac{d\bar{\mathbb{Q}}}{d\mathbb{Q}^\delta}\right|\mathcal{F}_t\right) + \varepsilon S_t \mathbb{E}_{\mathbb{Q}^\delta}\!\!\left(\left.\tfrac{d\mathbb{Q}}{d\mathbb{Q}^\delta}\right|\mathcal{F}_t\right) \text{ for all }t.
	\]
\end{proof}

\section{Exercise policies}
\label{sec:exercise-policy}

In the next section and onwards we will consider the pricing and hedging of an option that may only be exercised in certain situations, namely at any time~$t$ the owner of the option is only allowed to exercise on a subset $\mathcal{E}_t$ of $\Omega$. This setting contains a wide class of options, for example:
\begin{itemize}
 \item A European option corresponds to $\mathcal{E}_T=\Omega$ and $\mathcal{E}_t=\emptyset$ for all $t<T$.

 \item A Bermudan option with exercise dates $t_1<\ldots<t_n$ corresponds to
\begin{equation} \label{eq:bermudan-exercise-policy}
 \mathcal{E}_t =
\begin{cases}
 \Omega & \text{if }t=t_1,\ldots,t_n, \\
 \emptyset & \text{otherwise.}
\end{cases}
\end{equation}

 \item An American option corresponds to $\mathcal{E}_t=\Omega$ for all $t$.

 \item An American-style option with random expiration date $\tau\in\mathcal{T}$ corresponds to $\mathcal{E}_t=\{\tau\ge t\}$ for all $t$.
\end{itemize}

The introduction of exercise policies allows the unification {of} results for specific types of options, most notably European and American {ones}. More immediately, we {shall use} exercise policies as a theoretical tool in Section \ref{sec:buyer} when deriving the pricing and hedging theorem for the buyer {of an option with a general exercise policy} from {the} corresponding results for the seller of a related European-style option.

An exercise policy is formally defined as follows.

\begin{definition}[Exercise policy]
An \emph{exercise policy} $\mathcal{E}\equiv(\mathcal{E}_t)$ is a sequence of subsets of $\Omega$ such that $\mathcal{E}_t\in\mathcal{F}_t$ for all $t$,
\begin{equation} \label{eq:def:exercise-policy:1}
 \bigcup_{s=t+1}^T\mathcal{E}_s\in\mathcal{F}_t \text{ for all }t<T,
\end{equation}
and
  \begin{equation} \label{eq:def:exercise-policy:2}
   \bigcup_{t=0}^T\mathcal{E}_t = \Omega.
  \end{equation}
\end{definition}

The condition $\mathcal{E}_t\in\mathcal{F}_t$ is consistent with the intuitive notion of allowing the buyer to make exercise decisions based on information available at time $t$. Condition \eqref{eq:def:exercise-policy:1} is consistent with allowing the buyer to determine on the basis of information currently available whether or not there are future opportunities for exercise. Condition \eqref{eq:def:exercise-policy:2} ensures that there is at least one opportunity to exercise the option in each scenario.

Define the sequence $\mathcal{E}^\ast=(\mathcal{E}^\ast_t)$ of sets associated with an exercise policy $\mathcal{E}$ as
\[
 \mathcal{E}^\ast_t:=\bigcup_{s=t}^T\mathcal{E}_s
\]
for all $t$. For each $t$, the set $\mathcal{E}^\ast_t$ contains those scenarios in which it is possible to exercise the option in at least one of the time steps $t, \ldots,T$. Write $\mathcal{E}^\ast_{T+1}:=\emptyset$ for convenience.

For an exercise policy $\mathcal{E}=(\mathcal{E}_t)$, define the sets of \emph{randomised and ordinary stopping times consistent with $\mathcal{E}$} as
\begin{align*}
 \mathcal{X}^\mathcal{E} &:= \{\chi\in\mathcal{X} : \{\chi_t > 0\}\subseteq\mathcal{E}_t \text{ for all } t\},\\
 \mathcal{T}^\mathcal{E} &:= \{\tau\in\mathcal{T} :  \{\tau =t \}\subseteq\mathcal{E}_t \text{ for all } t\}.
\end{align*}
The following result specifies the relationship between $\mathcal{E}$, $\mathcal{T}^\mathcal{E}$ and $\mathcal{X}^\mathcal{E}$.

\begin{proposition} \label{prop:E-equivalent-T^E}
 For all $t$, \[\mathcal{E}_t=\bigcup_{\tau\in\mathcal{T}^\mathcal{E}}\{\tau=t\}=\bigcup_{\chi\in\mathcal{X}^\mathcal{E}}\{\chi_t>0\}.\]
\end{proposition}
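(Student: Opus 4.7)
The plan is to prove the chain
\[
\bigcup_{\tau\in\mathcal{T}^\mathcal{E}}\{\tau=t\}\;\subseteq\;\bigcup_{\chi\in\mathcal{X}^\mathcal{E}}\{\chi_t>0\}\;\subseteq\;\mathcal{E}_t\;\subseteq\;\bigcup_{\tau\in\mathcal{T}^\mathcal{E}}\{\tau=t\},
\]
of which the first two inclusions are routine and the third is the only substantive step. The first follows from the embedding $\tau\mapsto\chi^\tau$, which lands in $\mathcal{X}^\mathcal{E}$ and satisfies $\{\chi^\tau_t>0\}=\{\tau=t\}$; the second is immediate from the definition of $\mathcal{X}^\mathcal{E}$.

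For the nontrivial inclusion, fix $\omega\in\mathcal{E}_t$ and let me exhibit an ordinary stopping time $\tau\in\mathcal{T}^\mathcal{E}$ with $\tau(\omega)=t$. Write $\mu^{(s)}\in\Omega_s$ for the atom of $\mathcal{F}_s$ containing $\omega$. Then $\Omega=\mu^{(0)}\supseteq\mu^{(1)}\supseteq\cdots\supseteq\mu^{(t)}$, and since $\mathcal{E}_t\in\mathcal{F}_t$ we have $\mu^{(t)}\subseteq\mathcal{E}_t$. The sets $\mu^{(t)}$ and $\mu^{(s)}\setminus\mu^{(s+1)}$ for $0\le s<t$ partition $\Omega$, so one can define
\[
\tau(\omega'):=\begin{cases} t &\text{if } \omega'\in\mu^{(t)},\\ \min\{r\ge s+1:\omega'\in\mathcal{E}_r\} &\text{if } \omega'\in\mu^{(s)}\setminus\mu^{(s+1)},\ 0\le s<t.\end{cases}
\]
The minimum in the second clause is attained: condition \eqref{eq:def:exercise-policy:1} gives $\mathcal{E}^\ast_{s+1}\in\mathcal{F}_s$, and from $\omega\in\mathcal{E}_t\subseteq\mathcal{E}^\ast_{s+1}$ (note $s+1\le t$) together with $\omega\in\mu^{(s)}$ it follows that the whole atom satisfies $\mu^{(s)}\subseteq\mathcal{E}^\ast_{s+1}$, so every $\omega'\in\mu^{(s)}$ admits some exercise time $r\ge s+1$. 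By construction $\tau(\omega)=t$.

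The main technical obstacle is verifying that $\tau$ is adapted. The tempting shortcut of taking a generic first-exercise stopping time and overriding it to equal $t$ on $\mu^{(t)}$ fails, because $\mu^{(t)}$ is only $\mathcal{F}_t$-measurable and not $\mathcal{F}_s$-measurable for $s<t$. In the layered recipe above, instead, one would compute $\{\tau=r\}$ directly and observe that it is a finite Boolean combination of sets of the form $\mu^{(s)},\mu^{(s+1)}$ with $s+1\le r$ and $\mathcal{E}_q,\Omega\setminus\mathcal{E}_q$ with $q\le r$: for $r=t$ the set $\{\tau=t\}$ equals $\mu^{(t)}\cup\bigcup_{s=0}^{t-1}\bigl[(\mu^{(s)}\setminus\mu^{(s+1)})\cap\mathcal{E}_t\cap\bigcap_{q=s+1}^{t-1}(\Omega\setminus\mathcal{E}_q)\bigr]$, and analogous expressions (with the upper index on $s$ equal to $\min(r,t)-1$) describe $\{\tau=r\}$ for $r\ne t$. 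Since $\mu^{(s)},\mu^{(s+1)}\in\mathcal{F}_{s+1}\subseteq\mathcal{F}_r$ whenever $s+1\le r$, and $\mathcal{E}_q\in\mathcal{F}_q\subseteq\mathcal{F}_r$ for $q\le r$, each such set is $\mathcal{F}_r$-measurable. Consistency with $\mathcal{E}$ is then immediate: on $\mu^{(t)}$ one has $\tau=t$ with $\mu^{(t)}\subseteq\mathcal{E}_t$, and on every $\mu^{(s)}\setminus\mu^{(s+1)}$ the value $\tau(\omega')$ is chosen to land in $\mathcal{E}_{\tau(\omega')}$. Hence $\tau\in\mathcal{T}^\mathcal{E}$, completing the proof.
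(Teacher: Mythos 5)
Your proof is correct, and the hard inclusion $\mathcal{E}_t\subseteq\bigcup_{\tau\in\mathcal{T}^\mathcal{E}}\{\tau=t\}$ is handled by a genuinely different construction from the paper's. The paper builds, for each fixed target time $t'$, a single \emph{scenario-independent} stopping time $\tau'$ by partitioning $\Omega$ according to the exercise policy itself: before $t'$ it stops at $s$ only on $\mathcal{E}_s\setminus\mathcal{E}^\ast_{s+1}$ (where stopping early is forced because no later exercise is possible, hence cannot interfere with $\mathcal{E}_{t'}$), it stops at $t'$ on all of $\mathcal{E}_{t'}$, and afterwards it stops at the first remaining opportunity; this yields the stronger conclusion $\{\tau'=t'\}=\mathcal{E}_{t'}$ in one stroke, with measurability of each piece immediate from $\mathcal{E}_s\in\mathcal{F}_s$ and condition \eqref{eq:def:exercise-policy:1}, and with no reference to atoms. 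Your construction is instead \emph{pointwise}: it follows the chain of atoms $\mu^{(0)}\supseteq\cdots\supseteq\mu^{(t)}$ through a fixed $\omega\in\mathcal{E}_t$, declares $\tau=t$ on $\mu^{(t)}$, and on each branch-off set $\mu^{(s)}\setminus\mu^{(s+1)}$ takes the first exercise time after $s$; condition \eqref{eq:def:exercise-policy:1} enters to show $\mu^{(s)}\subseteq\mathcal{E}^\ast_{s+1}$ so that this minimum exists, and adaptedness requires the explicit Boolean decomposition of $\{\tau=r\}$ that you supply (your remark that one cannot simply override a generic first-exercise rule on the non-predictable set $\mu^{(t)}$ is exactly the right caution). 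The trade-off: the paper's argument is shorter, atom-free, and produces one stopping time covering all of $\mathcal{E}_{t'}$, whereas yours needs the union over $\tau$ in the statement (one stopping time per atom of $\mathcal{F}_t$ inside $\mathcal{E}_t$) and leans on the finiteness of the probability space through the atom structure; on the other hand, your layered first-exercise rule is perhaps the more natural object from the buyer's perspective. Both are complete proofs of the proposition.
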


\begin{proof}
 For the first equality, it is clear from the definition of $\mathcal{T}^\mathcal{E}$ that
\[
 \bigcup_{\tau\in\mathcal{T}^\mathcal{E}}\{\tau=t\} \subseteq \mathcal{E}_t \text{ for all } t.
\]
We now show for any $t^\prime=0,\ldots,T$ that there exists a stopping time $\tau^\prime\in\mathcal{T}^\mathcal{E}$ such that $\{\tau^\prime=t^\prime\}=\mathcal{E}_{t^\prime}$.
Define
\[
 \mathcal{E}^\prime_t :=
\begin{cases}
 \mathcal{E}_t\setminus\mathcal{E}^\ast_{t+1} &\text{if } t<t^\prime,\\
 \mathcal{E}_{t^\prime}  &\text{if } t=t^\prime,\\
 \mathcal{E}_t\setminus\bigcup_{s=0}^{t-1}\mathcal{E}^\prime_s &\text{if } t>t^\prime,
\end{cases}
\]
so that $\mathcal{E}^\prime_0,\ldots,\mathcal{E}^\prime_T$ is a sequence of mutually disjoint sets in $\Omega$ with $\mathcal{E}^\prime_t\subseteq\mathcal{E}_t$ and $\mathcal{E}^\prime_t\in\mathcal{F}_t$ for all $t$. Moreover it is a partition of $\Omega$ since
\begin{align*}
 \bigcup_{t=0}^T\mathcal{E}^\prime_t
&= \left\{\bigcup_{t=0}^{t^\prime-1}\mathcal{E}^\prime_t\right\}\cup\mathcal{E}^\prime_{t^\prime} \cup \left\{\mathcal{E}^\ast_{t^\prime+1}\setminus\bigcup_{t=0}^{t^\prime}\mathcal{E}^\prime_t\right\} \\
&= \left\{\bigcup_{t=0}^{t^\prime-1}\left[\mathcal{E}_t\setminus\mathcal{E}^\ast_{t+1}\right]\right\}\cup\mathcal{E}^\ast_{t^\prime}
= \Omega.
\end{align*}
The random variable
\[
 \tau^\prime := \sum_{t=0}^T t\mathbf{1}_{\mathcal{E}^\prime_t}
\]
is therefore a stopping time in $\mathcal{T}^\mathcal{E}$ with $\{\tau^\prime = t^\prime\} = \mathcal{E}^\prime_{t^\prime} = \mathcal{E}_{t^\prime}$ as required.

The second equality holds because
\[
 \mathcal{E}_t =
\bigcup_{\tau\in\mathcal{T}^\mathcal{E}}\{\tau=t\} \subseteq
\bigcup_{\chi\in\mathcal{X}^\mathcal{E}}\{\chi_t>0\} \subseteq
\mathcal{E}_t
\]
for all $t$.
\end{proof}

\section{Main results and discussion}
\label{sec:main-results}

An \emph{option} consists of an adapted $\mathbb{R}^d$-valued payoff process $\xi=(\xi_t)$ and an exercise policy $\mathcal{E}$. The seller delivers the portfolio $\xi_\tau\in\mathcal{L}_\tau$ to the buyer at a stopping time $\tau\in\mathcal{T}^\mathcal{E}$ chosen by the buyer among the stopping times consistent with $\mathcal{E}$.

\subsection{Pricing and hedging for the seller}

Consider the hedging and pricing problem for the seller of the option $(\xi,\mathcal{E})$. A self-financing trading strategy $y\in\Phi$ is said to \emph{superhedge $(\xi,\mathcal{E})$ for the seller} if
\begin{equation} \label{eq:def:super-hedge}
 y_\tau-\xi_\tau\in\mathcal{K}_\tau \text{ for all } \tau\in\mathcal{T}^\mathcal{E}.
\end{equation}

\begin{definition}[Ask price]
 The \emph{ask price} or \emph{seller's price} or \emph{upper hedging price} of~$(\xi,\mathcal{E})$ at time $0$ in terms of any currency $i=1,\ldots,d$ is defined as
\begin{equation} \label{eq:pi-a}
 p^a_i(\xi,\mathcal{E}) := \inf\{x\in\mathbb{R}:y\in\Phi\text{ with }y_0=xe^i\text{ superhedges }(\xi,\mathcal{E})\text{ for the seller}\}.
\end{equation}
\end{definition}

The interpretation of the ask price is that an endowment of at least~\(p^a_i(\xi,\mathcal{E})\) units of asset~\(i\) at time~\(0\) would enable an investor to settle the option without risk. A superhedging strategy $y$ for the seller is called \emph{optimal} if $y_0=p^a_i(\xi,\mathcal{E})e^i$.

Our main aims are to compute the option price~$p^a_i(\xi,\mathcal{E})$ algorithmically and to find a probabilistic dual representation for it, to construct the set of initial endowments that allow the seller to superhedge, and to construct an optimal superhedging strategy~$y\in\Phi$ for the seller. To this end, consider the following construction.

\begin{construction} \label{const:seller}
 For all $t$, let
\begin{equation} \label{eq:construction:Ut}
                                                         \mathcal{U}^a_t :=
\begin{cases}
 \xi_t + \mathcal{K}_t & \text{on } \mathcal{E}_t,\\
\mathbb{R}^d & \text{on }\Omega\setminus\mathcal{E}_t.
\end{cases}\end{equation}
Define
\begin{align*}
 \mathcal{V}^a_T := \mathcal{W}^a_T &:= \mathcal{L}_T,\\
 \mathcal{Z}^a_T &:= \mathcal{U}^a_T.
\end{align*}
For $t<T$, let
\begin{align}
 \mathcal{W}^a_t &:= \mathcal{Z}^a_{t+1} \cap \mathcal{L}_t,\nonumber\\
  \mathcal{V}^a_t &:= \mathcal{W}^a_t + \mathcal{K}_t,\label{eq:Wa:def}\\
 \mathcal{Z}^a_t &:=\mathcal{U}^a_t\cap\mathcal{V}^a_t.\label{eq:Za:def}
\end{align}
\end{construction}

For each $t$ the set $\mathcal{U}^a_t$ is the collection of portfolios in $\mathcal{L}_t$ that allows the seller to settle the option at time $t$. We shall demonstrate in {Proposition~\ref{prop:seller:equivalence-construction} that} for each $t<T$ the sets $\mathcal{V}^a_t$, $\mathcal{W}^a_t$ and $\mathcal{Z}^a_t$ have natural interpretations as collections of portfolios that are of importance to the seller of the option. The set $\mathcal{W}^a_t$ is the collection of portfolios at time $t$ that allow the seller to settle the option in the future (at time $t+1$ or later). The set $\mathcal{V}^a_t$ consists of those portfolios that may be rebalanced at time $t$ into a portfolio in $\mathcal{W}^a_t$, and $\mathcal{Z}^a_t$ consists of all portfolios that allow the seller to {remain solvent after settling} the option at time $t$ or any time in the future.

\begin{remark} \label{rem:const:seller:when-is-R^d}
 On $\mathcal{E}_t$, where exercise is allowed, the set $\mathcal{U}^a_t$ is a translation of~$\mathcal{K}_t$, so it is non-empty and polyhedral. It is then straightforward to show by backward induction that the following holds for all $t$:
\begin{itemize}
 \item  $\mathcal{V}^a_t$, $\mathcal{W}^a_t$, $\mathcal{Z}^a_t$ are all non-empty.
 \item $\mathcal{V}^a_t=\mathcal{W}^a_t=\mathcal{Z}^a_t=\mathbb{R}^d$ on $\Omega\setminus\mathcal{E}^\ast_t$.
 \item $\mathcal{Z}^a_t=\mathcal{V}^a_t$ on $\Omega\setminus\mathcal{E}_t$ and $\mathcal{Z}^a_t=\mathcal{U}^a_t$ on $\Omega\setminus\mathcal{E}^\ast_{t+1}$.
 \item $\mathcal{V}^a_t$ and $\mathcal{W}^a_t$ are polyhedral on $\mathcal{E}^\ast_{t+1}$ and $\mathcal{Z}^a_t$ is polyhedral on $\mathcal{E}^\ast_t$.
\end{itemize}
Note in particular that the non-empty set $\mathcal{Z}^a_0$ is polyhedral since $\mathcal{E}^\ast_0=\Omega$.
\end{remark}

The main pricing and hedging result for the seller reads as follows.

\begin{theorem} \label{th:seller}
The set $\mathcal{Z}^a_0$ is the collection of initial endowments allowing the seller to superhedge $(\xi,\mathcal{E})$, and
\begin{align*}
 p^a_i(\xi,\mathcal{E})
&= \max_{\chi\in\mathcal{X}^\mathcal{E}}\max_{(\mathbb{Q},S)\in\bar{\mathcal{P}}^i(\chi)} \mathbb{E}_\mathbb{Q}((\xi\cdot S)_\chi)\\
&= \max_{\chi\in\mathcal{X}^\mathcal{E}}\sup_{(\mathbb{Q},S)\in\mathcal{P}^i(\chi)} \mathbb{E}_\mathbb{Q}((\xi\cdot S)_\chi)\\
&= \min\{x\in\mathbb{R}:xe^i\in\mathcal{Z}^a_0\}\\
&= -\min\{Z^a_0(s):s\in\sigma_i(\mathbb{R}^d)\},
\end{align*}
where $Z^a_0$ is the support function of $-\mathcal{Z}_0^a$. An optimal superhedging strategy $y\in\Phi$ for the seller can be constructed algorithmically, and so can a randomised stopping time $\hat{\chi}\in\mathcal{X}^\mathcal{E}$ and $\hat{\chi}$-approximate martingale pair $(\hat{\mathbb{Q}},\hat{S})\in\bar{\mathcal{P}}^i(\hat{\chi})$ such that
\begin{equation} \label{eq:seller:1}
 \mathbb{E}_{\hat{\mathbb{Q}}}((\xi\cdot \hat{S})_{\hat{\chi}}) = p^a_i(\xi,\mathcal{E}).
\end{equation}
\end{theorem}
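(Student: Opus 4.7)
The plan is to prove the theorem in three stages: first the primal characterisation of $\mathcal{Z}^a_0$ as the set of admissible initial endowments, then a dual recursion for the support functions of these sets, and finally the identification of the dual with the martingale-pair formulation. I would begin by establishing, through backward induction on $t = T, T-1, \ldots, 0$, that $y_t \in \mathcal{Z}^a_t$ if and only if there is some $y \in \Phi$ whose value at time $t$ is $y_t$ that superhedges $(\xi,\mathcal{E})$ in the sense of \eqref{eq:def:super-hedge}, restricted to $\tau \geq t$. The base case $t = T$ is immediate from $\mathcal{Z}^a_T = \mathcal{U}^a_T$. The induction step uses the interpretations foreshadowed in the excerpt: $\mathcal{W}^a_t = \mathcal{Z}^a_{t+1} \cap \mathcal{L}_t$ captures the $\mathcal{F}_t$-measurable positions that can be carried to a superhedging position tomorrow, $\mathcal{V}^a_t = \mathcal{W}^a_t + \mathcal{S}_t$ adds the portfolios that solve into such a position, and $\mathcal{Z}^a_t = \mathcal{U}^a_t \cap \mathcal{V}^a_t$ combines settling today (on $\mathcal{E}_t$) with carrying forward. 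This yields $\pi^a_i(\xi,\mathcal{E}) = \min\{x \in \mathbb{R} : x e^i \in \mathcal{Z}^a_0\}$, and the optimal strategy is then built by a forward sweep: starting with $y_0 = \pi^a_i(\xi,\mathcal{E}) e^i \in \mathcal{Z}^a_0 \subseteq \mathcal{V}^a_0$, at each step pick $y_{t+1} \in \mathcal{W}^a_t$ with $y_t - y_{t+1} \in \mathcal{S}_t$, which is possible exactly because $y_t \in \mathcal{V}^a_t$.

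Next I would pass to supports. Writing $Z^a_t, V^a_t, W^a_t, U^a_t$ for the support functions of the negatives of the corresponding sets, the recursion in Construction~\ref{const:seller} translates into a recursion on these closed proper convex functions: intersections become convex hulls of support functions, Minkowski sums become pointwise additions, and the measurability constraint $\mathcal{Z}^a_{t+1} \cap \mathcal{L}_t$ becomes a conditional expectation operator that sums the support contributions of the successor nodes. Specifically, $U^a_t(s)$ equals $\xi_t \cdot s$ where $s \in \mathcal{S}^\ast_t$ on $\mathcal{E}_t$ (using \eqref{eq:lem:support-of-cone:Rd} off $\mathcal{E}_t$); $V^a_t(s) = W^a_t(s) + \delta^\ast_{\mathcal{S}_t}(s)$, so finiteness forces $s \in \mathcal{S}^\ast_t$; and $Z^a_t = \cl\conv\{U^a_t, V^a_t\}$. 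The identity $\pi^a_i(\xi,\mathcal{E}) = -\min\{Z^a_0(s) : s \in \sigma_i(\mathbb{R}^d)\}$ then drops out of the standard support-function description of the intersection of the closed convex set $\mathcal{Z}^a_0$ with the line $\mathbb{R} e^i$.

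The dual representation then arises by unrolling the recursion. Each $\conv$ step at time $t$ introduces a non-negative weight attached to $U^a_t$, and these weights accumulate along the tree into a process $\chi$ satisfying $\sum_t \chi_t = 1$; its membership in $\mathcal{X}^\mathcal{E}$ is forced because $U^a_t$ contributes only where it is finite, that is on $\mathcal{E}_t$. Each Minkowski-sum step introduces $\delta^\ast_{\mathcal{S}_t}(\cdot)$, and finiteness of the whole tower forces the adapted process $S$ assembled from the normalised dual vectors to satisfy $S_t \in \mathcal{S}^\ast_t$ together with the conditional inequalities $\mathbb{E}_{\mathbb{P}}(S^{\chi\ast}_{t+1} \mid \mathcal{F}_t) \in \mathcal{S}^\ast_t$ that define $\bar{\mathcal{P}}^i(\chi)$. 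Normalising $S^i_t = 1$ identifies points of $\sigma_i(\mathbb{R}^d)$ with admissible dual pairs, and evaluates the support $Z^a_0$ at such a point as $-\mathbb{E}_{\mathbb{P}}((\xi \cdot S)_\chi)$. Taking the minimum over $\sigma_i(\mathbb{R}^d)$ and negating delivers the double maximum over $\chi$ and $(\mathbb{P}, S) \in \bar{\mathcal{P}}^i(\chi)$; the sup over the merely equivalent subfamily $\mathcal{P}^i(\chi)$ coincides with it by Lemma~\ref{lem:closure}. The pair $(\hat{\chi}, (\hat{\mathbb{P}}, \hat{S}))$ that realises \eqref{eq:seller:1} is read off as the minimiser in the support calculation.

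The main obstacle I anticipate is keeping the support-function recursion legitimate at each step, in particular controlling the closure operation in $Z^a_t = \cl\conv\{U^a_t, V^a_t\}$ without losing attainment, and the bookkeeping that assembles the local convex-hull weights and normalised dual vectors scattered across the tree into a single randomised stopping time $\chi$ and a single coherent pair $(\mathbb{P}, S)$. This is where the polyhedral structure in Remark~\ref{rem:const:seller:when-is-R^d} is essential: it guarantees that every relevant epigraph is polyhedral, hence that the $\conv$ operations are automatically closed and their defining infima attained, so the dual optimum is realised by an honest pair rather than merely in a limit, and the algorithmic construction of $\hat{\chi}$ and $(\hat{\mathbb{P}}, \hat{S})$ reduces to a finite linear-programming sweep through the tree.
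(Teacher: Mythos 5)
Your architecture coincides with the paper's: the primal backward/forward induction characterising $\mathcal{Z}^a_0$ and giving $\pi^a_i(\xi,\mathcal{E})=\min\{x:xe^i\in\mathcal{Z}^a_0\}$ is Proposition~\ref{prop:seller:equivalence-construction}; the support-function recursion with attained convex-hull decompositions is Lemma~\ref{lem:support-technical}; and the unrolling of the local convex-hull weights into a randomised stopping time $\hat{\chi}$ and of the successor-node weights into a measure $\hat{\mathbb{P}}$ is exactly the construction in Proposition~\ref{prop:construction-of-optimal-martingale-triple}.

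There is, however, a genuine gap: the weak-duality direction is missing. Your unrolling produces \emph{one} pair $\hat{\chi}\in\mathcal{X}^{\mathcal{E}}$, $(\hat{\mathbb{P}},\hat{S})\in\bar{\mathcal{P}}^i(\hat{\chi})$ with $\mathbb{E}_{\hat{\mathbb{P}}}((\xi\cdot\hat{S})_{\hat{\chi}})=-\min\{Z^a_0(s):s\in\sigma_i(\mathbb{R}^d)\}=\pi^a_i(\xi,\mathcal{E})$, i.e.\ it shows the value is \emph{attained}; it does not show it is the \emph{maximum}. The claimed identification of points of $\sigma_i(\mathbb{R}^d)$ with admissible dual pairs is only one-directional: an arbitrary $\chi\in\mathcal{X}^{\mathcal{E}}$ and $(\mathbb{P},S)\in\bar{\mathcal{P}}^i(\chi)$ need not arise from your recursion, so you still owe the inequality $\mathbb{E}_{\mathbb{P}}((\xi\cdot S)_\chi)\le\pi^a_i(\xi,\mathcal{E})$ for \emph{every} such pair. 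The paper supplies this by a separate backward induction on superhedging strategies (Proposition~\ref{prop:seller-superhedges-exp}): for any superhedging $y$ one proves $y_t\cdot\mathbb{E}_{\mathbb{P}}(S^{\chi\ast}_t|\mathcal{F}_t)\ge\mathbb{E}_{\mathbb{P}}((\xi\cdot S)^{\chi\ast}_t|\mathcal{F}_t)$, using $S_t\in\mathcal{S}^\ast_t$ on $\{\chi_t>0\}\subseteq\mathcal{E}_t$ at the exercise step and $\mathbb{E}_{\mathbb{P}}(S^{\chi\ast}_{t+1}|\mathcal{F}_t)\in\mathcal{S}^\ast_t$ at the rebalancing step. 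This argument (together with Lemma~\ref{lem:closure}) is also what justifies replacing the $\max$ over $\bar{\mathcal{P}}^i(\chi)$ by the $\sup$ over $\mathcal{P}^i(\chi)$. A secondary, smaller inaccuracy: what makes the convex hulls closed and their defining infima attained is not polyhedrality as such but the compact $i$-generation of the domains (compactness of $\sigma_i(\dom Z^a_{t+1})$, propagated inductively in Lemma~\ref{lem:support-technical} via Lemma~\ref{lem:convex-hull-support-functions}); the convex hull of polyhedral functions whose domain cross-sections are unbounded need not be closed.
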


Any stopping time $\hat{\chi}$ and $\hat{\chi}$-approximate martingale pair $(\hat{\mathbb{Q}},\hat{S})\in\bar{\mathcal{P}}^i(\hat{\chi})$ satisfying \eqref{eq:seller:1} are called \emph{optimal for the seller} of $(\xi,\mathcal{E})$. Note that the optimal superhedging strategy, {stopping time and approximate martingale pair are not unique in general}.

The proof of Theorem~\ref{th:seller} appears in {Section~\ref{sec:seller}, together} {with details of the construction of the optimal stopping time and approximate martingale pair for the seller. An optimal superhedging strategy can be found using the following construction with initial value $y_0=p^a_i(\xi,\mathcal{E})e^i$.}

\begin{construction} \label{const:seller:hedge}
  Take $y_0\in\mathcal{Z}^a_0$ as given. For all $t<T$ choose any
  \begin{equation}\label{eq:const:seller:hedge}
  y_{t+1}\in(y_t-\mathcal{K}_t)\cap\mathcal{W}^a_t.
  \end{equation}
 \end{construction}

 The correctness of Construction \ref{const:seller:hedge} will be demonstrated in Proposition \ref{prop:seller:equivalence-construction} below. Note that if the set $(y_t-\mathcal{K}_t)\cap\mathcal{W}^a_t$ in \eqref{eq:const:seller:hedge} is not a singleton, then there is some freedom as to the choice of $y_{t+1}$.

\subsection{Pricing and hedging for the buyer}

Consider now the pricing and hedging problem for the buyer of the option $(\xi,\mathcal{E})$. A pair $(y,\tau)$ consisting of a self-financing trading strategy $y\in\Phi$ and a stopping time $\tau\in\mathcal{T}^\mathcal{E}$ \emph{superhedges $(\xi,\mathcal{E})$ for the buyer} if
\begin{equation} \label{eq:def:super-hedge:buyer}
 y_\tau + \xi_\tau \in \mathcal{K}_\tau.
\end{equation}

\begin{definition}[Bid price]
The \emph{bid price} or \emph{buyer's price} or \emph{lower hedging price} of $(\xi,\mathcal{E})$ at time $0$ in terms of currency {$i=1,\ldots,d$} is defined as
\begin{equation} \label{eq:pi-b}
 p^b_i(\xi,\mathcal{E})
:= \sup\{-x\in\mathbb{R}:(y,\tau)\text{ with }y_0=xe^i\text{ superhedges }(\xi,\mathcal{E})\text{ for the buyer}\}.
\end{equation}
\end{definition}

The interpretation of the bid price is that $p^b_i(\xi,\mathcal{E})$ is the largest amount in currency~$i$ that can be raised at time $0$ by the owner of {the option} $(\xi,\mathcal{E})$ by setting up a self-financing trading strategy with the property that it leaves him in a solvent position after exercising $(\xi,\mathcal{E})$.  A superhedging strategy $(y,\tau)$ for the buyer is called \emph{optimal} if $y_0=-p^b_i(\xi,\mathcal{E})e^i$.

Just as in the seller's case, the aims are to {algorithmically} compute $p^b_i(\xi,\mathcal{E})$, to establish a probabilistic representation for it, to find the set of initial endowments allowing superhedging for the buyer, and to construct an optimal superhedging strategy for the buyer. The key to this is the following construction.

\begin{construction}\label{const:buyer}
 For all $t$, let
\[
                                                         \mathcal{U}^b_t :=
\begin{cases}
 -\xi_t + \mathcal{K}_t & \text{on } \mathcal{E}_t,\\
\emptyset & \text{on }\Omega\setminus\mathcal{E}_t.
\end{cases}\]
Define
\begin{align*}
 \mathcal{V}^b_T := \mathcal{W}^b_T &:= \emptyset,\\
 \mathcal{Z}^b_T &:= \mathcal{U}^b_T.
\end{align*}
For $t<T$, let
\begin{align}
 \mathcal{W}^b_t &:= \mathcal{Z}^b_{t+1} \cap \mathcal{L}_t^d,\nonumber\\
  \mathcal{V}^b_t &:= \mathcal{W}^b_t + \mathcal{K}_t,\nonumber\\
 \mathcal{Z}^b_t &:=\mathcal{U}^b_t\cup\mathcal{V}^b_t. \label{eq:Zb:def}
\end{align}
\end{construction}

For each $t$ the set $\mathcal{U}^b_t$ is the collection of portfolios in $\mathcal{L}_t$ that allows the buyer to be in a solvent position after exercising the option at time $t$. For $t<T$ the set $\mathcal{W}^b_t$ is the collection of portfolios at time $t$ that allow the buyer to superhedge the option in the future (at time $t+1$ or later), and $\mathcal{V}^b_t$ consists of those portfolios that may be rebalanced at time $t$ into a portfolio in $\mathcal{W}^b_t$. The set  $\mathcal{Z}^b_t$ consists of all portfolios that allow the buyer to {remain solvent after exercising} the option at time $t$ or any time in the future.

\begin{remark}
Construction \ref{const:buyer} differs from Construction \ref{const:seller} in two respects. Firstly, the payoff is treated differently because it is delivered by the seller and received by the buyer. Secondly, there is a union {of sets} in \eqref{eq:Zb:def} where there is an intersection in \eqref{eq:Za:def}. This encapsulates the opposing positions of the seller and the buyer: any portfolio held by the seller at time $t$ must enable him to settle the option at time $t$ or later, whereas any portfolio held by the buyer needs to enable him to achieve solvency by exercising the option, either at time~$t$ or at some point in the future. The union in \eqref{eq:Zb:def} also illustrates the fact that the pricing problem for the buyer is not convex.
\end{remark}

\begin{remark} \label{rem:const:buyer:when-is-empty} On $\mathcal{E}_t$ the set $\mathcal{U}^b_t$ is polyhedral and non-empty. It is possible to show the following by backward induction on $t$:
 \begin{itemize}
  \item $\mathcal{V}^b_t=\mathcal{W}^b_t=\mathcal{Z}^b_t=\emptyset$ on $\Omega\setminus\mathcal{E}^\ast_t$.

  \item $\mathcal{Z}^b_t=\mathcal{V}^b_t$ on $\Omega\setminus\mathcal{E}_t$ and $\mathcal{Z}^b_t=\mathcal{U}^b_t$ on $\Omega\setminus\mathcal{E}^\ast_{t+1}$.

  \item $\mathcal{V}^b_t$ and $\mathcal{W}^b_t$ on $\mathcal{E}^\ast_{t+1}$, and $\mathcal{Z}^b_t$ on $\mathcal{E}^\ast_t$ can be written as a finite union of non-empty closed polyhedral sets (but $\mathcal{V}^b_t$, $\mathcal{W}^b_t$ and $\mathcal{Z}^b_t$ are not convex in general).
 \end{itemize}
Note in particular that since $\mathcal{E}^\ast_0=\Omega$, the last item applies to $\mathcal{Z}^b_0$, so it is non-empty and closed.
\end{remark}

Here is the main pricing and hedging theorem for the buyer.

\begin{theorem} \label{th:buyer}
The set $\mathcal{Z}^b_0$ is the collection of initial endowments allowing superhedging of $(\xi,\mathcal{E})$ by the buyer, and
\begin{align*}
 p^b_i(\xi,\mathcal{E})
&= \max_{\tau\in\mathcal{T}^\mathcal{E}}\min_{(\mathbb{Q},S)\in\bar{\mathcal{P}}^i(\tau)} \mathbb{E}_\mathbb{Q}((\xi\cdot S)_\tau)\\
&= \max_{\tau\in\mathcal{T}^\mathcal{E}}\inf_{(\mathbb{Q},S)\in\mathcal{P}^i(\tau)} \mathbb{E}_\mathbb{Q}((\xi\cdot S)_\tau)\\
&= -\min\{x\in\mathbb{R}|xe^i\in\mathcal{Z}^b_0\}.
\end{align*}
An optimal superhedging strategy $(\check{y},\check{\tau})\in\Phi\times\mathcal{T}^\mathcal{E}$ with
\[
 \check{\tau} = \min\{t:\check{y}_t\in\mathcal{U}^b_t\}
\]
can be constructed algorithmically, and so can a $\check{\tau}$-approximate martingale pair $(\check{\mathbb{Q}},\check{S})\in\bar{\mathcal{P}}^i(\check{\tau})$ such that
\begin{equation} \label{eq:th:buyer:tau-approx-pair}
 \mathbb{E}_{\check{\mathbb{Q}}}((\xi\cdot \check{S})_{\check{\tau}}) = p^b_i(\xi,\mathcal{E}).
\end{equation}
\end{theorem}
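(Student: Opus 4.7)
My plan is to derive Theorem~\ref{th:buyer} from Theorem~\ref{th:seller} by reducing to a family of European sub-options. For each ordinary stopping time $\tau\in\mathcal{T}^\mathcal{E}$ define the European-style exercise policy by $\mathcal{E}^\tau_t:=\{\tau=t\}$; this is a valid exercise policy (its ``future exercise'' set $\{\tau>t\}$ lies in $\mathcal{F}_t$ because $\tau$ is a stopping time) satisfying $\mathcal{T}^{\mathcal{E}^\tau}=\{\tau\}$ and $\mathcal{X}^{\mathcal{E}^\tau}=\{\chi^\tau\}$. The buyer's superhedging condition $y_\tau+\xi_\tau\in\mathcal{S}_\tau$ is formally identical to the seller's superhedging condition $y_\tau-(-\xi_\tau)\in\mathcal{S}_\tau$ for the European option $(-\xi,\mathcal{E}^\tau)$, so for each fixed $\tau$ Theorem~\ref{th:seller} gives a direct handle on the buyer's problem with exercise forced at $\tau$.

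First I would prove by backward induction on $t$ that $\mathcal{Z}^b_t$ from Construction~\ref{const:buyer} coincides with the set of $\mathcal{F}_t$-measurable portfolios from which the buyer can superhedge $(\xi,\mathcal{E})$ during $\{t,\ldots,T\}$. The base case $\mathcal{Z}^b_T=\mathcal{U}^b_T$ is direct. For the inductive step, a portfolio superhedges at time $t$ if and only if either the buyer exercises immediately on $\mathcal{E}_t$ (contributing $\mathcal{U}^b_t$) or rebalances at $t$ to a portfolio whose time-$(t+1)$ value lies in $\mathcal{Z}^b_{t+1}$ (contributing $\mathcal{V}^b_t=\mathcal{W}^b_t+\mathcal{S}_t$); their union is $\mathcal{Z}^b_t$. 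A consequence is
\[
 \mathcal{Z}^b_0=\bigcup_{\tau\in\mathcal{T}^\mathcal{E}}\mathcal{Z}^{b,\tau}_0,
\]
where $\mathcal{Z}^{b,\tau}_0$ is the convex set of endowments superhedging when exercise is forced at $\tau$. By the equivalence above, $\mathcal{Z}^{b,\tau}_0$ coincides with the seller's superhedging set for $(-\xi,\mathcal{E}^\tau)$, hence
\[
 \pi^b_i(\xi,\mathcal{E})=\sup\{-x:xe^i\in\mathcal{Z}^b_0\}=\max_{\tau\in\mathcal{T}^\mathcal{E}}\bigl[-\pi^a_i(-\xi,\mathcal{E}^\tau)\bigr]=-\min\{x:xe^i\in\mathcal{Z}^b_0\}.
\]
Substituting each of the two dual formulas for $\pi^a_i(-\xi,\mathcal{E}^\tau)$ from Theorem~\ref{th:seller} (noting $\mathcal{X}^{\mathcal{E}^\tau}=\{\chi^\tau\}$) and applying $\max(-f)=-\min f$ yields the two max-min/max-inf formulas in the statement.

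For the constructive part, set $\check{y}_0:=-\pi^b_i(\xi,\mathcal{E})e^i\in\mathcal{Z}^b_0$ (the set is closed by Remark~\ref{rem:const:buyer:when-is-empty}) and propagate forward: at each $t$, if $\check{y}_t\in\mathcal{U}^b_t$ declare $\check{\tau}:=t$ and halt; otherwise $\check{y}_t\in\mathcal{V}^b_t=\mathcal{W}^b_t+\mathcal{S}_t$, so decompose $\check{y}_t=w+s$ measurably with $w\in\mathcal{W}^b_t\subseteq\mathcal{Z}^b_{t+1}$ and $s\in\mathcal{S}_t$, and set $\check{y}_{t+1}:=w$. Because $\mathcal{V}^b_T=\emptyset$ and $\mathcal{Z}^b_t$ is non-empty on $\mathcal{E}^\ast_t$ (Remark~\ref{rem:const:buyer:when-is-empty}), the recursion terminates with $\check{\tau}\le T$; the resulting $\check{\tau}$ is adapted, satisfies $\{\check{\tau}=t\}\subseteq\mathcal{E}_t$ since $\mathcal{U}^b_t$ vanishes off $\mathcal{E}_t$, and obeys $\check{\tau}=\min\{t:\check{y}_t\in\mathcal{U}^b_t\}$ by construction; the pair $(\check{y},\check{\tau})$ is self-financing and superhedges. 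Since $\check{y}_0=-\pi^b_i(\xi,\mathcal{E})e^i$, a sandwich argument forces $\check{\tau}$ to realise the outer maximum in the pricing formula, and the approximate martingale pair supplied by Theorem~\ref{th:seller} for the European option $(-\xi,\mathcal{E}^{\check{\tau}})$ becomes, after the sign change in the payoff, the required $(\check{\mathbb{P}},\check{S})\in\bar{\mathcal{P}}^i(\check{\tau})$ satisfying~\eqref{eq:th:buyer:tau-approx-pair}.

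The main obstacle I expect is the non-convexity of the buyer's problem, encoded in the union defining $\mathcal{Z}^b_t$. In the backward induction, care is needed to select the decomposition $\check{y}_t=w+s$ measurably on each atom of $\mathcal{F}_t$ so that the resulting strategy and stopping time are adapted rather than anticipative, and to verify scenario-by-scenario that the decomposition of $\mathcal{Z}^b_0$ does not introduce spurious endowments outside $\bigcup_\tau\mathcal{Z}^{b,\tau}_0$. This non-convexity is also what forces the buyer's optimal stopping time to be a pure rather than randomised element of $\mathcal{T}^\mathcal{E}$, and why the dual representations take the max-min/max-inf form rather than the purely maximising form of Theorem~\ref{th:seller}.
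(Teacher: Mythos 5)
Your proposal follows essentially the same route as the paper: identifying the buyer's superhedging pairs with seller's superhedging strategies for the European options $(-\xi_\tau,\tau)$, characterising $\mathcal{Z}^b_t$ by backward induction, constructing $(\check{y},\check{\tau})$ by forward propagation with exercise at the first entry into $\mathcal{U}^b_t$, and closing with the sandwich argument $\pi^b_i(\xi,\mathcal{E})=-\check{y}^i_0\le-\pi^a_i(-\xi_{\check{\tau}},\check{\tau})\le\pi^b_i(\xi,\mathcal{E})$ before importing the approximate martingale pair from the seller's construction. The only cosmetic difference is that you make the decomposition $\mathcal{Z}^b_0=\bigcup_{\tau}\mathcal{Z}^{b,\tau}_0$ explicit, which the paper leaves implicit in the identity $\pi^b_i(\xi,\mathcal{E})=\max_{\tau\in\mathcal{T}^\mathcal{E}}[-\pi^a_i(-\xi_\tau,\tau)]$.
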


A $\check{\tau}$-approximate martingale pair $(\check{\mathbb{Q}},\check{S})$ is called \emph{optimal for the buyer} if it satisfies~\eqref{eq:th:buyer:tau-approx-pair}. The proof of this theorem appears in Section \ref{sec:buyer}, together with full details of the construction of an optimal $\check{\tau}$-approximate martingale pair. An optimal superhedging strategy $(y,\tau)$ can be obtained by means of the following construction with initial choice $y_0=-p^b_i(\xi,\mathcal{E})e^i$ and with $\tau:=\tau_T$.

\begin{construction}\label{const:buyer:hedge}
Take $y_0\in\mathcal{Z}^b_0$ as given, and define
\[
 \tau_0 :=
\begin{cases}
 0 & \text{if } y_0\in\mathcal{U}^b_0,\\
 1 &\text{if } y_0\in\mathcal{Z}^b_0\setminus\mathcal{U}^b_0.
\end{cases}.
\]
For $t<T$, choose any
\begin{equation} \label{eq:const:buyer:hedge:1}
 y_{t+1}\in
 \begin{cases}
  \{y_t\}&\text{ on }\{\tau_t\le t\},\\
  (y_t-\mathcal{K}_t)\cap\mathcal{W}^b_t &\text{ on }\{\tau_t=t+1\},
 \end{cases}
\end{equation}
and define
\begin{equation} \label{eq:const:buyer:hedge:2}
 \tau_{t+1} :=
\begin{cases}
 \tau_t & \text{on } \{\tau_t\le t\},\\
 t+1 & \text{on } \{\tau_t=t+1\} \cap \{y_{t+1}\in\mathcal{U}^b_{t+1}\},\\
 t+2 & \text{on } \{\tau_t=t+1\} \cap \{y_{t+1}\in\mathcal{Z}^b_{t+1}\setminus\mathcal{U}^b_{t+1}\}.
\end{cases}
\end{equation}
\end{construction}

The correctness of Construction \ref{const:buyer:hedge} will be established in detail in Proposition \ref{prop:buyer:equivalence-construction}. Note that, similar to the seller's case, there may be some choice in the construction of this strategy due to the fact that $(y_t-\mathcal{K}_t)\cap\mathcal{W}^b_t$ in \eqref{eq:const:buyer:hedge:1} may contain more than one element.

\subsection{Special cases}

\subsubsection{European options}
\label{sec:European}

Consider a European-style option that offers the payoff $\zeta\in\mathcal{L}_\tau$ at some given stopping time $\tau\in\mathcal{T}$ (in particular, we can have $\tau=T$ for an ordinary European option with expiry time $T$). Here $\mathcal{L}_\tau$ is the set of $\mathbb{R}^d$-valued $\mathcal{F}_\tau$-measurable random variables. In our framework the payoff of such an option is the adapted process $\xi=(\xi_t)$ with
\[
 \xi_t = \zeta\mathbf{1}_{\{\tau=t\}} \text{ for all } t,
\]
and its exercise policy $\mathcal{E}=(\mathcal{E}_t)$ is given by
\[
\mathcal{E}_t:=\{\tau=t\} \text{ for all }t.
\]
It follows that $\mathcal{T}^\mathcal{E}=\{\tau\}$ and ${\mathcal{X}}^\mathcal{E}=\{\chi^\tau\}$. For clarity we denote this European option by $(\zeta,\tau)$ instead of $(\xi,\mathcal{E})$.

Observe that a trading strategy $y\in\Phi$ superhedges the option $(\zeta,\tau)$ for the seller if and only if $(y,\tau)$ superhedges $(-\zeta,\tau)$ for the buyer. It also follows directly from \eqref{eq:pi-a} and \eqref{eq:pi-b} that
\[
 p^b_i(\zeta,\tau) = -p^a_i(-\zeta,\tau).
\]
Thus the pricing and hedging problems for the buyer and seller of a European-style option are symmetrical. In particular, this means that the pricing problem for the buyer is convex, and the hedging problem for the seller does not involve any randomised stopping times.

Constructions \ref{const:seller} and \ref{const:buyer} can be simplified considerably due to the simple structure of the exercise policy. Noting that at each time step $t$ we have  $\mathcal{U}^a_t=\zeta+\mathcal{K}_t$ on $\{t=\tau\}$ and $\mathcal{U}^a_t=\mathbb{R}^d$ on $\{t\neq\tau\}$, Construction \ref{const:seller} can now be rewritten as follows for each $t$:
\begin{align}
 \mathcal{Z}^a_t&=\mathbb{R}^d &\text{on } \{t>\tau\}, \nonumber\\
 \mathcal{Z}^a_t&=\zeta+\mathcal{K}_t &\text{on } \{t=\tau\}, \label{eq:E:const:1}\\
 \mathcal{Z}^a_t&=(\mathcal{Z}^a_{t+1}\cap\mathcal{L}_t)+\mathcal{K}_t &\text{on } \{t<\tau\},\label{eq:E:const:2}
\end{align}
where the auxiliary sets $\mathcal{W}^a_t,\mathcal{V}^a_t$ are omitted, for simplicity.
Theorem \ref{th:seller} gives the ask price of $(\zeta,\tau)$ as
\begin{equation}
p^a_i(\zeta,\tau)
= \sup_{(\mathbb{Q},S)\in\mathcal{P}^i(\chi^\tau)} \mathbb{E}_\mathbb{Q}((\xi\cdot S)_{\chi^\tau})
= \sup_{(\mathbb{Q},S)\in\mathcal{P}^i(\tau)} \mathbb{E}_\mathbb{Q}(\zeta\cdot S_\tau)\label{eq:ask:E}
\end{equation}
since $\mathcal{P}(\tau)=\mathcal{P}(\chi^\tau)$ and
$ (\xi\cdot S)_{\chi^\tau} = (\xi\cdot S)_\tau = \zeta\cdot S_\tau$. A similar simplification is possible for the buyer; note that $\mathcal{V}^b_t$, $\mathcal{W}^b_t$ and $\mathcal{Z}^b_t$ are convex for all $t$. The bid price  of $(\zeta,\tau)$ is
\begin{align*}
  p^b_i(\zeta,\tau)& = -p^a_i(-\zeta,\tau) = \inf_{(\mathbb{Q},S)\in\mathcal{P}^i(\tau)} \mathbb{E}_\mathbb{Q}(\zeta\cdot S_{\tau}),
\end{align*}
which is consistent with Theorem \ref{th:buyer}.

Consider the special case $\tau=T$, which corresponds to a classical European option. The simplified construction \eqref{eq:E:const:1}--\eqref{eq:E:const:2} {leads to the same set $\mathcal{Z}^a_t$ of superhedging portfolios} as in {\cite[Theorem~2]{Loehne_Rudloff2011}}. The representations for the bid and ask prices can be simplified further by noting that
\[
S_{t+1}^{\chi^T\ast} = S_T \text{ for all }t<T.
\]
For any $(\mathbb{Q},S)\in\bar{\mathcal{P}}^i(\chi^T)$, the adapted process $\check{S}=(\check{S}_t)$ defined by
\[
    \check{S}_t:= \mathbb{E}_{\mathbb{Q}}(S_T|\mathcal{F}_t) \text{ for all }t
\]
is a $\mathbb{Q}$-martingale such that $(\mathbb{Q},\check{S})\in\bar{\mathcal{P}}^i$, and
\[
 \mathbb{E}_\mathbb{Q}(\zeta\cdot S_T) = \mathbb{E}_\mathbb{Q}(\zeta\cdot \check{S}_T).
\]
Thus the supremum in \eqref{eq:ask:E} need only be taken over $\mathcal{P}^i$, and it follows that
\begin{align*}
 p^a_i(\zeta,T)& = \sup_{(\mathbb{Q},S)\in\mathcal{P}^i} \mathbb{E}_\mathbb{Q}(\zeta\cdot S_T)= \max_{(\mathbb{Q},S)\in\bar{\mathcal{P}}^i} \mathbb{E}_\mathbb{Q}(\zeta\cdot S_T).
\end{align*}
This result extends \cite{bensaid_lesne_pages_scheinkman1992,edirisinghe_naik_uppal1993,Roux_Tokarz_Zastawniak2008} in two-asset models. Its conclusions are technically closest to the non-constructive results for currency models in {\cite{delbaen_kabanov_valkeila2002,kabanov_stricker2001b}}.

\subsubsection{Bermudan options}

The exercise policy $\mathcal{E}$ for a Bermudan option with payoff process $\xi$ that can be exercised at given times $t_1<\cdots<t_n$ is defined in \eqref{eq:bermudan-exercise-policy}. The collections of ordinary and randomised stopping times consistent with this exercise policy are
\begin{align*}
 \mathcal{T}^\mathcal{E} &= \{\tau\in\mathcal{T}:\tau\in\{t_1,\ldots,t_n\}\}, \\
 \mathcal{X}^\mathcal{E} &= \{\chi\in\mathcal{X}:\chi_t=0\text{ for all } t\notin\{t_1,\ldots,t_n\}\}.
\end{align*}
Note that $\mathcal{Z}^a_t$ and $\mathcal{Z}^b_t$ are closed non-empty strict subsets of $\mathcal{L}_t$ whenever $t\le t_n$. Theorems \ref{th:seller} and \ref{th:buyer} can then be used to compute $p^a_i(\xi,\mathcal{E})$ and $p^b_i(\xi,\mathcal{E})$. Moreover optimal superhedging strategies $y^a\in\Phi$ for the seller and $(y^b,\tau)\in\Phi\times\mathcal{T}^\mathcal{E}$ for the buyer can be constructed algorithmically.

\subsubsection{American options}\label{sec:American}

Consider an American option with expiration date $T$ that offers the payoff~$\xi_\tau$ at a stopping time $\tau\in\mathcal{T}$ chosen by the buyer.
The exercise policy $\mathcal{E}=(\mathcal{E}_t)$ satisfies $\mathcal{E}_t=\Omega$ for all $t\le T$, and the sets of stopping times consistent with the exercise policy are
\[
 \mathcal{T}^\mathcal{E}=\mathcal{T},\quad\quad
 \mathcal{X}^\mathcal{E}=\mathcal{X}.
\]
Denote this American-style option by~$\xi$ instead of $(\xi,\mathcal{E})$. Theorems~\ref{th:seller} and~\ref{th:buyer} give the ask and bid prices as
\begin{align*}
 p^a_i(\xi) &=\max_{\chi\in\mathcal{X}}\sup_{(\mathbb{Q},S)\in\mathcal{P}^i(\chi)} \mathbb{E}_\mathbb{Q}((\xi\cdot S)_\chi)=\max_{\chi\in\mathcal{X}}\max_{(\mathbb{Q},S)\in\bar{\mathcal{P}}^i(\chi)} \mathbb{E}_\mathbb{Q}((\xi\cdot S)_\chi),\\
 p^b_i(\xi) &=\max_{\tau\in\mathcal{T}}\inf_{(\mathbb{Q},S)\in\mathcal{P}^i(\tau)} \mathbb{E}_\mathbb{Q}((\xi\cdot S)_\tau)=\max_{\tau\in\mathcal{T}}\min_{(\mathbb{Q},S)\in\bar{\mathcal{P}}^i(\tau)} \mathbb{E}_\mathbb{Q}((\xi\cdot S)_\tau).
\end{align*}
This directly extends \cite{chalasani_jha2001,Roux_Zastawniak2009,tokarz_zastawniak2006} for two-asset models. In the context of currency models, this is consistent with the results in \cite{bouchard_temam2005} for the seller.

\begin{remark}
  In this work it is assumed that trading strategies are rebalanced at each time instant $t$ only after it becomes known that the option is not to be exercised at that time instant. In their work on pricing American options for the seller, Bouchard and Temam \cite{bouchard_temam2005} follow a different convention by assuming that the portfolios in a hedging strategy must be rebalanced before exercise decisions become known. The method in this paper also applies to their case, provided that the order of the operations in~\eqref{eq:Wa:def} and~\eqref{eq:Za:def} is interchanged, i.e.\ replace these equations by \[\mathcal{Z}^a_{t}:=\mathcal{W}^a_{t}\cap\mathcal{U}^a_{t}+\mathcal{K}^a_{t} \text{ for }t<T.\]
	Ask prices obtained in this way are in general higher than the ask prices presented above. This is because a superhedging strategy for the seller in this setting will also superhedge under our definition, but the converse is not always true. Because of this, superhedging as we have defined above is easier to achieve, and it is therefore more natural for traders to follow than the approach of Bouchard and Temam.
\end{remark}

Examples \ref{Exl:2} and \ref{Exl:3} below demonstrate the computation of the bid and ask prices of American options in toy models, and Examples \ref{ex:1} and \ref{ex:2} demonstrate the same in models with a more realistic flavour.

\section{Pricing and hedging for the seller}
\label{sec:seller}

This section is devoted to the proof of Theorem \ref{th:seller}. Recall that a trading strategy $y\in\Phi$ superhedges the option $(\xi,\mathcal{E})$ for the seller if \eqref{eq:def:super-hedge} holds. In view of Proposition~\ref{prop:E-equivalent-T^E}, this is equivalent to
\[
 y_t - \xi_t \in\mathcal{K}_t \text{ on } \mathcal{E}_t \text{ for all } t
\]
or
\begin{equation} \label{eq:def:super-hedge:equiv:2}
 y_t \in\mathcal{U}^a_t \text{ for all } t.
\end{equation}
We now have the following result.

\begin{proposition} \label{prop:seller-superhedges-exp}
The ask price $p^a_i(\xi,\mathcal{E})$ defined in \eqref{eq:pi-a} is finite and
\begin{align}
 p^a_i(\xi,\mathcal{E})
&\ge \sup_{\chi\in\mathcal{X}^\mathcal{E}}\max_{(\mathbb{Q},S)\in\bar{\mathcal{P}}^i(\chi)} \mathbb{E}_\mathbb{Q}((\xi\cdot S)_\chi)\label{eq:prop:seller-superhedges-exp:eq}\\
&= \sup_{\chi\in\mathcal{X}^\mathcal{E}}\sup_{(\mathbb{Q},S)\in\mathcal{P}^i(\chi)} \mathbb{E}_\mathbb{Q}((\xi\cdot S)_\chi)\label{eq:prop:seller-superhedges-exp:ineq}.
\end{align}
\end{proposition}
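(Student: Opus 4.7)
The plan is to handle the three claims in turn: finiteness of $\pi^a_i(\xi,\mathcal{E})$, the inequality in \eqref{eq:prop:seller-superhedges-exp:eq}, and the equality in \eqref{eq:prop:seller-superhedges-exp:ineq}. Finiteness should follow by exhibiting an explicit superhedge from a large cash-$i$ position: because $\Omega$ is finite, the payoff process $\xi$ is bounded, and by the structure of the solvency cone (generated by $e^i - e^j\pi^{ji}_t$) the constant strategy $y_t \equiv xe^i$ lies in $\xi_t + \mathcal{S}_t$ on $\mathcal{E}_t$ for all sufficiently large $x$. Hence $\pi^a_i(\xi,\mathcal{E}) < \infty$; it is bounded below, e.g.\ by $-\sup|\xi|$ times a bound coming from the exchange rates, so it is finite.

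The main step is \eqref{eq:prop:seller-superhedges-exp:eq}. I would fix $x$ and $y\in\Phi$ with $y_0=xe^i$ superhedging $(\xi,\mathcal{E})$, fix $\chi\in\mathcal{X}^\mathcal{E}$ and $(\mathbb{P},S)\in\bar{\mathcal{P}}^i(\chi)$, and prove $\mathbb{E}_\mathbb{P}((\xi\cdot S)_\chi)\le x$. Using \eqref{eq:def:super-hedge:equiv:2} and Proposition~\ref{prop:E-equivalent-T^E}, $y_t-\xi_t\in\mathcal{S}_t$ on $\{\chi_t>0\}\subseteq\mathcal{E}_t$, so $\chi_t(y_t-\xi_t)\cdot S_t\ge0$, which sums to
\[
 \mathbb{E}_\mathbb{P}((\xi\cdot S)_\chi)\le\mathbb{E}_\mathbb{P}\sum_{t=0}^T\chi_t\,y_t\cdot S_t.
\]
The self-financing property $y_t-y_{t+1}\in\mathcal{S}_t$ together with $\mathbb{E}_\mathbb{P}(S_{t+1}^{\chi\ast}\mid\mathcal{F}_t)\in\mathcal{S}_t^\ast$ and $\mathcal{F}_t$-measurability of $y_t,y_{t+1}$ gives
\[
 0 \le \sum_{t=0}^{T-1}\mathbb{E}_\mathbb{P}\bigl((y_t-y_{t+1})\cdot S_{t+1}^{\chi\ast}\bigr).
\]
Now I would perform Abel summation using $S_{t+1}^{\chi\ast}-S_t^{\chi\ast}=-\chi_t S_t$ and the convention \eqref{eq:1.4}, together with $\sum_{t}\chi_t=1$ and $S_t^i=1$ (so that $y_0\cdot S_0^{\chi\ast}=xe^i\cdot\sum_t\chi_tS_t=x$), to collapse the expression into
\[
 0 \le x - \mathbb{E}_\mathbb{P}\sum_{t=0}^T\chi_t\,y_t\cdot S_t.
\]
Combining the two displays yields $\mathbb{E}_\mathbb{P}((\xi\cdot S)_\chi)\le x$; taking the infimum over admissible $x$ and then suprema over $\chi$ and $(\mathbb{P},S)$ delivers \eqref{eq:prop:seller-superhedges-exp:eq}.

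For \eqref{eq:prop:seller-superhedges-exp:ineq}, the inequality $\ge$ is immediate from $\mathcal{P}^i(\chi)\subseteq\bar{\mathcal{P}}^i(\chi)$. For $\le$, I would apply Lemma~\ref{lem:closure}: any $(\bar{\mathbb{P}},\bar S)\in\bar{\mathcal{P}}^i(\chi)$ is a limit, in the value $\mathbb{E}((\xi\cdot S)_\chi)$, of equivalent pairs $(\mathbb{P}^\delta,S^\delta)\in\mathcal{P}^i(\chi)$, so the two suprema agree. The main technical delicacy is bookkeeping in the Abel summation, especially the boundary terms at $t=0$ and $t=T$ and the appearance of the factor $\sum_t\chi_t=1$ that forces the $S_t^i=1$ normalisation of $\bar{\mathcal{P}}^i(\chi)$ into the argument; everything else is a direct consequence of the cone/dual cone duality $\mathcal{S}_t$ versus $\mathcal{S}_t^\ast$.
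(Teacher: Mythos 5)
Your proposal is correct and follows essentially the same route as the paper: the two key inequalities (superhedging paired with $S_t\in\mathcal{S}_t^\ast$ on $\{\chi_t>0\}$, and self-financing paired with $\mathbb{E}_\mathbb{P}(S_{t+1}^{\chi\ast}\mid\mathcal{F}_t)\in\mathcal{S}_t^\ast$) are exactly those of the paper, which telescopes them by backward induction on the conditional expectations \eqref{eq:prop:seller-superhedges-exp:induction} rather than by your global Abel summation --- an equivalent piece of bookkeeping. One small caveat: for the lower bound in the finiteness claim, a bound ``coming from the exchange rates'' alone is not a proof; the clean argument (and the paper's) is that \eqref{eq:prop:seller-superhedges-exp:eq} together with $\mathcal{P}^i(\chi)\neq\emptyset$ (which requires the standing weak no-arbitrage assumption) already bounds $\pi^a_i(\xi,\mathcal{E})$ from below, so you should invoke your own main inequality there rather than a separate estimate.
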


\begin{proof}
We show by backward induction below that if $\chi\in\mathcal{X}^\mathcal{E}$, $(\mathbb{Q},S)\in\bar{\mathcal{P}}^i(\chi)$ and $y\in\Phi$ with  $y_0=xe^i$ superhedges $(\xi,\mathcal{E})$ for the seller, then
\begin{equation} \label{eq:prop:seller-superhedges-exp:induction}
 y_t\cdot\mathbb{E}_\mathbb{Q}(S_t^{\chi\ast}|\mathcal{F}_t) \ge \mathbb{E}_\mathbb{Q}((\xi\cdot S)_t^{\chi\ast}|\mathcal{F}_t)
\end{equation}
for all $t$. The property $S^i\equiv1$ then gives
\[
 x = xe^i\cdot\mathbb{E}_\mathbb{Q}(S_\chi) = xe^i\cdot\mathbb{E}_\mathbb{Q}(S_0^{\chi\ast}|\mathcal{F}_0) \ge \mathbb{E}_\mathbb{Q}((\xi\cdot S)_0^{\chi\ast}|\mathcal{F}_0) = \mathbb{E}_\mathbb{Q}((\xi\cdot S)_\chi),
\]
and the inequality \eqref{eq:prop:seller-superhedges-exp:eq} is immediate. The equality \eqref{eq:prop:seller-superhedges-exp:ineq} follows directly from Lemma~\ref{lem:closure}. The property $-\infty < p^a_i(\xi,\mathcal{E}) < \infty$ holds true since $\mathcal{P}^i(\chi)\neq\emptyset$ and $(\xi,\mathcal{E})$ has a trivial superhedging strategy for the seller, given by $(y^1_t,\ldots,y^d_t)$ where
\begin{equation} \label{eq:seller:trivial-superhedging-strategy}
 y^j_t := \max\{\xi^{j\omega}_s:s=0,\ldots,T,\omega\in\mathcal{E}_s\}
\end{equation}
for all $j$ and $t$.

Observe that for any $t$ we have $\chi_t=0$ on $\Omega\setminus\mathcal{E}_t$, and on $\mathcal{E}_t$ we have $y_t-\xi_t\in\mathcal{K}_t$, so that $y_t\cdot S_t \ge \xi_t\cdot S_t$ since $S_t\in\mathcal{K}^\ast_t$ (see Definition \ref{def:chi-eqv-pricing-measure-and-pricing-process}). This means that
\[
 \chi_t y_t\cdot S_t \ge \chi_t \xi_t\cdot S_t \text{ for all } t.
\]
To prove \eqref{eq:prop:seller-superhedges-exp:induction} by backward induction, first note that at time $T$,
\[
 y_T\cdot\mathbb{E}_\mathbb{Q}(S_T^{\chi\ast}|\mathcal{F}_T) = \chi_Ty_T\cdot S_T \ge \chi_T\xi_T\cdot S_T = \mathbb{E}_\mathbb{Q}((\xi\cdot S)_T^{\chi\ast}|\mathcal{F}_T).
\]
Suppose for some $t<T$ that
\[
 y_{t+1}\cdot\mathbb{E}_\mathbb{Q}(S_{t+1}^{\chi\ast}|\mathcal{F}_{t+1}) \ge \mathbb{E}_\mathbb{Q}((\xi\cdot S)_{t+1}^{\chi\ast}|\mathcal{F}_{t+1}).
\]
The self-financing condition $y_t - y_{t+1}\in\mathcal{K}_t$ together with $\mathbb{E}_\mathbb{Q}(S_{t+1}^{\chi\ast}|\mathcal{F}_t)\in\mathcal{K}^\ast_t$ (see Definition \ref{def:chi-eqv-pricing-measure-and-pricing-process}) gives
\[
 y_t\cdot \mathbb{E}_\mathbb{Q}(S_{t+1}^{\chi\ast}|\mathcal{F}_t) \ge y_{t+1}\cdot \mathbb{E}_\mathbb{Q}(S_{t+1}^{\chi\ast}|\mathcal{F}_t).
\]
Combining this with the inductive assumption, we obtain
\begin{align*}
 y_t\cdot \mathbb{E}_\mathbb{Q}(S_t^{\chi\ast}|\mathcal{F}_t)
&=  \chi_ty_t\cdot S_t + y_t\cdot \mathbb{E}_\mathbb{Q}(S_{t+1}^{\chi\ast}|\mathcal{F}_t) \\
&\ge  \chi_t\xi_t\cdot S_t + y_{t+1}\cdot \mathbb{E}_\mathbb{Q}(S_{t+1}^{\chi\ast}|\mathcal{F}_t) \\
&=  \mathbb{E}_\mathbb{Q}(\chi_t\xi_t\cdot S_t + y_{t+1}\cdot \mathbb{E}_\mathbb{Q}(S_{t+1}^{\chi\ast}|\mathcal{F}_{t+1})|\mathcal{F}_t) \\
&\ge  \mathbb{E}_\mathbb{Q}(\chi_t\xi_t\cdot S_t + \mathbb{E}_\mathbb{Q}((\xi\cdot S)_{t+1}^{\chi\ast}|\mathcal{F}_{t+1})|\mathcal{F}_t) \\
&=  \mathbb{E}_\mathbb{Q}(\chi_t\xi_t\cdot S_t + (\xi\cdot S)_{t+1}^{\chi\ast}|\mathcal{F}_t) \\
&=  \mathbb{E}_\mathbb{Q}((\xi\cdot S)_t^{\chi\ast}|\mathcal{F}_t).
\end{align*}
This concludes the inductive step.
\end{proof}

The next result shows that $\mathcal{Z}^a_0$ is the set of initial endowments of self-financing trading strategies that allow the seller to superhedge $(\xi,\mathcal{E})$. It also links Construction \ref{const:seller} with the problem of computing the ask price in \eqref{eq:pi-a}.

\begin{proposition} \label{prop:seller:equivalence-construction}
We have
\[
 \mathcal{Z}_0^a = \{y_0\in\mathbb{R}^d:y=(y_t)\in\Phi \text{ superhedges } (\xi,\mathcal{E}) \text{ for the seller}\}
\]
and
\begin{equation} \label{eq:prop:seller:equivalence-construction:1}
 p^a_i(\xi,\mathcal{E}) = \min\{x\in\mathbb{R}|xe^i\in\mathcal{Z}^a_0\}.
\end{equation}
Moreover, Construction \ref{const:seller:hedge} yields a superhedging strategy $y\in\Phi$ for the seller for any $y_0\in\mathcal{Z}^a_0$. In particular, for $y_0=p^a_i(\xi,\mathcal{E})e^i$ Construction~\ref{const:seller:hedge} gives an optimal superhedging strategy $y$ for the seller.
\end{proposition}

\begin{proof}
We establish that $y\in\Phi$ superhedges $(\xi,\mathcal{E})$ for the seller if and only if $y_t\in\mathcal{Z}^a_t$ for all $t$. Equation \eqref{eq:prop:seller:equivalence-construction:1} then follows directly from \eqref{eq:pi-a}. The minimum in \eqref{eq:prop:seller:equivalence-construction:1} is attained because $\mathcal{Z}_0^a$ is polyhedral, hence closed, and $p^a_i(\xi,\mathcal{E})$ is finite by Proposition~\ref{prop:seller-superhedges-exp}.

If $y\in\Phi$ superhedges $(\xi,\mathcal{E})$ for the seller, then it satisfies \eqref{eq:def:super-hedge:equiv:2}, and clearly $y_T\in\mathcal{Z}^a_T$. For any $t<T$ suppose inductively that $y_{t+1}\in\mathcal{Z}^a_{t+1}$. We have $y_{t+1}\in\mathcal{W}^a_t$ since $y$ is predictable, and $y_t\in\mathcal{V}^a_t$ since it is self-financing. Thus $y_t\in\mathcal{V}^a_t\cap\mathcal{U}^a_t=\mathcal{Z}^a_t$, which concludes the inductive step.

For the converse, fix any $y_0\in\mathcal{Z}^a_0$ and apply Construction \ref{const:seller:hedge}. We now show by induction that the resulting process $y=(y_t)$ satisfies $y_t\in\mathcal{Z}^a_t$ for all~$t$. For any $t\ge0$, suppose by induction that
 $y_t\in\mathcal{L}_{(t-1)\vee0}\cap \mathcal{Z}^a_t$. This means that $y_t\in\mathcal{V}^a_t=\mathcal{W}^a_t+\mathcal{K}_t$, and so $(y_t-\mathcal{K}_t)\cap\mathcal{W}_t\neq\emptyset$. By \eqref{eq:const:seller:hedge} we have both $y_{t+1}\in\mathcal{W}^a_t=\mathcal{Z}^a_{t+1}\cap\mathcal{L}_t$ and $y_t-y_{t+1}\in\mathcal{K}_t$, which concludes the inductive step. The process $y=(y_t)$ that has been constructed is clearly predictable, self-financing and satisfies \eqref{eq:def:super-hedge:equiv:2} since $\mathcal{Z}^a_t\subseteq\mathcal{U}^a_t$ for all~$t$. Thus it superhedges $(\xi,\mathcal{E})$ for the seller, which establishes the correctness of Construction \ref{const:seller:hedge}.

It is now straightforward to see that Construction \ref{const:seller:hedge} with the initial choice $y_0:=p^a_i(\xi,\mathcal{E})e^i\in\mathcal{Z}^a_0$ results in an optimal superhedging strategy $y\in\Phi$ for the seller of $(\xi,\mathcal{E})$.
\end{proof}

Consider now the following result, which will be proved later in this section.

\begin{proposition} \label{prop:construction-of-optimal-martingale-triple}
 There exist $\hat{\chi}\in\mathcal{X}^\mathcal{E}$, $(\hat{\mathbb{Q}},\hat{S})\in\bar{\mathcal{P}}^i(\hat{\chi})$ such that
\begin{equation}
 \mathbb{E}_{\hat{\mathbb{Q}}}((\xi\cdot \hat{S})_{\hat{\chi}}) = -\min\{Z^a_0(s):s\in\sigma_i(\mathbb{R}^d)\},\label{eq:dual-construction:5}
\end{equation}
where $Z^a_0$ is the support function of $-\mathcal{Z}^a_0$.
\end{proposition}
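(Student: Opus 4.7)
\medskip

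\noindent\textbf{Proof plan for Proposition~\ref{prop:construction-of-optimal-martingale-triple}.}

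The plan is to apply support-function calculus to Construction~\ref{const:seller} in order to obtain a recursion for $Z^a_t$ (the support function of $-\mathcal{Z}^a_t$) at every node, and then to read off $\hat\chi$, $\hat{\mathbb{P}}$, $\hat S$ from a minimizer of $Z^a_0$ over $\sigma_i(\mathbb{R}^d)$. First I would note that $Z^a_0$ is a closed proper polyhedral convex function by Remark~\ref{rem:const:seller:when-is-R^d}, and that $-\min\{Z^a_0(s):s\in\sigma_i(\mathbb{R}^d)\}=\pi^a_i(\xi,\mathcal{E})$ is finite by Proposition~\ref{prop:seller-superhedges-exp}. Consequently the minimum is attained at some $\hat s_0\in\sigma_i(\mathbb{R}^d)$, i.e.\ $\hat s_0^i=1$ and $Z^a_0(\hat s_0)=-\pi^a_i(\xi,\mathcal{E})$. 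This $\hat s_0$ serves as the seed for a forward construction.

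The key tool is the dual translation of the three recursions in Construction~\ref{const:seller}. For the sum in~\eqref{eq:Wa:def}, support functions add, giving $V^a_t=W^a_t+\delta^\ast_{\mathcal{S}_t^\ast}$ (i.e.\ $V^a_t=W^a_t$ on $\mathcal{S}_t^\ast$ and $+\infty$ off it). For the intersection in~\eqref{eq:Za:def}, under the polyhedrality established in Remark~\ref{rem:const:seller:when-is-R^d} the support function of $-\mathcal{Z}^a_t$ is the infimal convolution of $U^a_t$ and $V^a_t$, attained by Rockafellar~\cite[Cor.~19.3.4]{rockafellar1996}: for each $s$ there exist $\alpha_t\in[0,1]$ and a split $s=u+v$ with $Z^a_t(s)=\alpha_t\,U^a_t(u/\alpha_t)+(1-\alpha_t)\,V^a_t(v/(1-\alpha_t))$, where the coefficient $\alpha_t$ is the mass assigned to exercise at time $t$. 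Finally, the restriction $\mathcal{W}^a_t=\mathcal{Z}^a_{t+1}\cap\mathcal{L}_t$ is handled by the appendix lemma alluded to in the introduction: the support function of $-\mathcal{W}^a_t$ at $s\in\mathcal{L}_t$ equals the conditional infimum of $Z^a_{t+1}(\cdot)$ over $\mathcal{F}_{t+1}$-measurable vectors averaging to $s$ across the successors of each node, which yields a probability-kernel weighting $(q_{\nu|\mu})_{\nu\in\successors\mu}$ over the children.

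Armed with these dualities, I would recursively construct $\hat s_t$ on each node $\mu\in\Omega_t$, together with an exercise weight $\hat\alpha_t^\mu\ge 0$ and successor weights $\hat q_{\nu|\mu}\ge 0$ summing to $1-\hat\alpha_t^\mu$, by choosing at each step an optimizer of the inf-convolution attaining $Z^a_t(\hat s_t)$. The relevant quantities are then assembled as follows: the randomized stopping time is
\[
 \hat\chi_t^\mu := \hat\alpha_t^\mu\,\prod_{s<t}\bigl(1-\hat\alpha_s^{\mu_s}\bigr)\cdot\!\!\prod_{s<t}\hat q_{\mu_{s+1}|\mu_s},
\]
where $\mu_s$ denotes the ancestor of $\mu$ in $\Omega_s$; the probability $\hat{\mathbb{P}}$ is obtained by combining the $\hat q_{\nu|\mu}$ with any equivalent reference weights (invoking Lemma~\ref{lem:closure} only if a component would vanish — note however that the proposition asks for a member of $\bar{\mathcal{P}}^i(\hat\chi)$ not $\mathcal{P}^i(\hat\chi)$, so strict equivalence is not required and $\hat{\mathbb{P}}$ may be defined directly); and the adapted price process $\hat S_t$ is the dual vector $\hat s_t$ normalized so that $\hat S_t^i=1$ (the normalization $\hat s_0^i=1$ propagates through the construction because exchanging one unit of currency $i$ keeps the $i$-th coordinate of each $\hat s_t$ on $\sigma_i(\mathcal{S}_t^\ast)$).

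The verification has three parts: (i) $\hat S_t\in\mathcal{S}_t^\ast\setminus\{0\}$ and the approximate-martingale inequality $\mathbb{E}_{\hat{\mathbb{P}}}(\hat S_{t+1}^{\hat\chi\ast}|\mathcal{F}_t)\in\mathcal{S}_t^\ast$ follow from the $V^a_t=W^a_t+\delta^\ast_{\mathcal{S}_t^\ast}$ recursion forcing all propagated dual vectors into $\mathcal{S}_t^\ast$; (ii) $\hat\chi\in\mathcal{X}^{\mathcal{E}}$ because $U^a_t$ is $+\infty$ outside $\mathcal{E}_t$ (where $\mathcal{U}^a_t=\mathbb{R}^d$ makes $-s\cdot\xi_t$ replaced by the indicator of $\{0\}$), so any exercise weight $\hat\alpha_t^\mu>0$ forces $\mu\in\mathcal{E}_t$; (iii) telescoping the recursion gives $\mathbb{E}_{\hat{\mathbb{P}}}((\xi\cdot\hat S)_{\hat\chi})=-Z^a_0(\hat s_0)$, which is~\eqref{eq:dual-construction:5}. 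The main obstacle I foresee is step (iii), specifically translating the conditional inf-convolution (that defines $W^a_t$ from $Z^a_{t+1}\cap\mathcal{L}_t$) into the correct conditional-expectation identity, since the closure operation in~\eqref{eq:closure-of-function} a priori prevents attainment; polyhedrality of all sets involved is essential and is precisely where the technical appendix lemma is needed.
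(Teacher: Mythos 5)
Your plan follows essentially the same route as the paper: attain the minimum of $Z^a_0$ on $\sigma_i(\mathbb{R}^d)$, dualise the intersection $\mathcal{Z}^a_t=\mathcal{U}^a_t\cap\mathcal{V}^a_t$ into a convex combination of $U^a_t$ and $V^a_t$ (your infimal convolution of positively homogeneous functions is exactly the paper's $\conv\{U^a_t,V^a_t\}$ from Lemma~\ref{lem:support-technical}), dualise $\mathcal{W}^a_t=\bigcap_{\nu\in\successors\mu}\mathcal{Z}^{a\nu}_{t+1}$ into a weighting over successors, and telescope. Your reliance on polyhedrality for attainment is a legitimate substitute for the paper's compact $i$-generation argument in Lemma~\ref{lem:convex-hull-support-functions}, and you correctly identify attainment (rather than attainment up to closure) as the crux.

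There is, however, one step that fails as written: the assembly of $\hat\chi$. A randomised stopping time must satisfy $\sum_{t=0}^T\hat\chi_t(\omega)=1$ \emph{pointwise in $\omega$}, whereas your formula
$\hat\chi_t^\mu=\hat\alpha_t^\mu\prod_{s<t}(1-\hat\alpha_s^{\mu_s})\prod_{s<t}\hat q_{\mu_{s+1}|\mu_s}$
folds the transition weights into the stopping time; along a fixed path the factor $\prod_{s<t}\hat q_{\mu_{s+1}|\mu_s}$ is generally strictly less than the survival probability $\prod_{s<t}(1-\hat\alpha_s^{\mu_s})$ of that path (and you additionally double-count by including both products, given that your $\hat q_{\cdot|\mu}$ already sum to $1-\hat\alpha_t^\mu$). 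The exercise intensities and the transition kernel must be kept separate: normalise the successor weights to sum to $1$, set $\hat\chi_t=\hat\alpha_t\prod_{s<t}(1-\hat\alpha_s)$, and build $\hat{\mathbb{P}}$ from the product of the normalised kernels alone, as the paper does. A second, more minor omission: Definition~\ref{def:chi-eqv-pricing-measure-and-pricing-process} requires $\hat S_t\in\mathcal{S}^\ast_t\setminus\{0\}$ at \emph{every} node, including those in $\Omega\setminus\mathcal{E}^\ast_t$ where the dual recursion degenerates ($Z^a_t=\delta^\ast_{\mathbb{R}^d}$) and produces no candidate; one must pad $\hat S_t$ and the kernel there using a fixed reference pair $(\mathbb{P},S)\in\mathcal{P}^i$, which your sketch only gestures at.
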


With Proposition~\ref{prop:construction-of-optimal-martingale-triple} in hand, the proof of Theorem \ref{th:seller} is straightforward, so we provide it now.

\begin{proof}[Proof of Theorem \ref{th:seller}]
Note from \eqref{eq:prop:seller:equivalence-construction:1} that
\begin{align*}
p^a_i(\xi,\mathcal{E}) &=
 \min\{x\in\mathbb{R}:xe^i\in\mathcal{Z}^a_0\} \\
&= \min\{x\in\mathbb{R}:-xe^i\in-\mathcal{Z}^a_0\} \\
&= \min\{x\in\mathbb{R}:-xe^i\cdot y \le Z^a_0(y)\text{ for all }y\in\mathbb{R}^d\} \\
&\le \inf\{x\in\mathbb{R}:-xe^i\cdot s \le Z^a_0(s)\text{ for all }s\in\sigma_i(\mathbb{R}^d)\} \\
&= \inf\{x\in\mathbb{R}:-x \le Z^a_0(s)\text{ for all }s\in\sigma_i(\mathbb{R}^d)\} \\
&= -\sup\{x\in\mathbb{R}:x \le Z^a_0(s)\text{ for all }s\in\sigma_i(\mathbb{R}^d)\} \\
&= -\inf\{Z^a_0(s):s\in\sigma_i(\mathbb{R}^d)\} \\
&= -\min\{Z^a_0(s):s\in\sigma_i(\mathbb{R}^d)\} \\
&= \mathbb{E}_{\hat{\mathbb{Q}}}((\xi\cdot \hat{S})_{\hat{\chi}}),
\end{align*}
where the last equality follows from Proposition~\ref{prop:construction-of-optimal-martingale-triple}. Combining this with Proposition~\ref{prop:seller-superhedges-exp} completes the proof.
\end{proof}

Note from the proof above that the randomised stopping time $\hat{\chi}$ and  $\hat{\chi}$-approx\-imate martingale pair $(\hat{\mathbb{Q}},\hat{S})$ of Proposition~\ref{prop:construction-of-optimal-martingale-triple} are optimal for the seller.

The remainder of this section is devoted to establishing Proposition~\ref{prop:construction-of-optimal-martingale-triple}. For all~$t$, let $U^a_t$, $V^a_t$, $W^a_t$, $Z^a_t$ be the support functions of $-\mathcal{U}^a_t$, $-\mathcal{V}^a_t$, $-\mathcal{W}^a_t$, $-\mathcal{Z}^a_t$, respectively.

\begin{remark} \label{remark:construction:when-is-R^d:dual}
Remark \ref{rem:const:seller:when-is-R^d} gives $Z^a_t=V^a_t$ on $\Omega\setminus\mathcal{E}_t$ and $Z^a_t=U^a_t$ on $\Omega\setminus\mathcal{E}^\ast_{t+1}$ for all~$t$. This means that $V^a_t=W^a_t=\delta^\ast_{\mathbb{R}^d}$ on $\Omega\setminus\mathcal{E}^\ast_{t+1}$, whence $Z^a_t=\delta^\ast_{\mathbb{R}^d}$ on $\Omega\setminus\mathcal{E}^\ast_t$ for all~$t$. Moreover, these functions are all polyhedral whenever they are not equal to $\delta^\ast_{\mathbb{R}^d}$ {\cite[Corollary 19.2.1]{rockafellar1996}}.
\end{remark}

Proposition~\ref{prop:construction-of-optimal-martingale-triple} depends on the following technical result.

\begin{lemma} \label{lem:support-technical} {\ }\newline\vspace{-\baselineskip}
 \begin{enumerate}
\item For all $t$ and $y\in\mathcal{L}_t$ we have
\begin{align}
 U^a_t(y) &=
\begin{cases}
 -y\cdot\xi_t&\text{on }\{y\in\mathcal{K}^\ast_t\}\cap\mathcal{E}_t,\\
 0 &\text{on }\{y=0\}\cap(\Omega\setminus\mathcal{E}_t),\\
 \infty &\text{on }\{y\notin\mathcal{K}^\ast_t\}\cup[\{y\neq0\}\cap(\Omega\setminus\mathcal{E}_t)],
\end{cases} \label{eq:U^a_t-support-function-formula}\\
 V^a_t(y) &=
\begin{cases}
 W^a_t(y)&\text{on }\{y\in\mathcal{K}^\ast_t\},\\
 \infty &\text{on }\{y\notin\mathcal{K}^\ast_t\}.
\end{cases} \label{eq:V^a_t-support-function-formula}
\end{align}

\item\label{claim:lem:support-technical:2} Fix any $t$ and $\mu\in\Omega_t$.
\begin{enumerate}
  \item \label{claim:lem:support-technical:2a} If $\mu\subseteq\mathcal{E}_t\cap\mathcal{E}^\ast_{t+1}$, then
\begin{equation} \label{eq:claim:lem:support-technical:2}
 Z^{a\mu}_t = \conv\{U^{a\mu}_t,V^{a\mu}_t\}
\end{equation}
and $\dom Z^{a\mu}_t = \mathcal{K}^{\ast\mu}_t$. Moreover, for each $Y\in\sigma_i(\dom Z^{a\mu}_t)$ there exist $\lambda\in[0,1]$, $X\in \sigma_i(\dom V^{a\mu}_t)$ and $S\in\sigma_i(\dom U^{a\mu}_t)=\sigma_i(\mathcal{K}_t^{\ast\mu})$ such that
\begin{align*}
 Z^{a\mu}_t(Y) &= (1-\lambda)V^{a\mu}_t(X) + \lambda U^{a\mu}_t(S), & Y = (1-\lambda)X + \lambda S.
\end{align*}

  \item \label{claim:lem:support-technical:2b} If $\mu\subseteq\mathcal{E}_t\setminus\mathcal{E}^\ast_{t+1}$, then
\[
 Z^{a\mu}_t=U^{a\mu}_t
\]
and $\dom Z^{a\mu}_t=\mathcal{K}_t^{\ast\mu}$.

  \item \label{claim:lem:support-technical:2c} If $\mu\subseteq\mathcal{E}^\ast_{t+1}\setminus\mathcal{E}_t$, then
\[
 Z^{a\mu}_t = V^{a\mu}_t
\]
and $\dom Z^{a\mu}_t$ is a compactly $i$-generated cone.

\item \label{claim:lem:support-technical:2d} If $\mu\subseteq\Omega\setminus\mathcal{E}^\ast_t$, then
\[
 Z^{a\mu}_t = \delta^\ast_{\mathbb{R}^d}.
\]
\end{enumerate}

  \item\label{claim:lem:support-technical:3} For each $t<T$ and $\mu\in\Omega_t$ with $\mu\subseteq\mathcal{E}^\ast_{t+1}$, we have
\begin{equation} \label{eq:claim:lem:support-technical:3}
 W^{a{\mu}}_t = \conv\{Z^{a\nu}_{t+1}:\nu\in\successors\mu\}
\end{equation}
and $\dom W^{a{\mu}}_t$ is a compactly $i$-generated cone. Moreover, for every $X\in\sigma_i(\dom W^{a{\mu}}_t)$ there exist $p^\nu\ge0$ and $Y^\nu\in\sigma_i(\dom Z^{a\nu}_{t+1})$ for each $\nu\in\successors\mu$ such that
\begin{align*}
 W^{a{\mu}}_t(X) &= \sum_{\nu\in\successors\mu}p^\nu Z^{a\nu}_{t+1}(Y^\nu), &  X &= \sum_{\nu\in\successors\mu}p^\nu Y^\nu, & 1 &= \sum_{\nu\in\successors\mu}p^\nu.
\end{align*}
\end{enumerate}
\end{lemma}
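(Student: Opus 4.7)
The plan is to prove the three parts of the lemma in a coupled backward induction on $t$. Part~(1) is a direct computation from the definition of a support function and feeds into every other claim. Cases (2b)--(2d) reduce to part~(1) via Remark~\ref{rem:const:seller:when-is-R^d}. Part~(2a) and part~(3) both rely on the key analytic tool: a support-function-of-intersection formula for polyhedral convex sets.

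For part~(1), I would compute $U^a_t$ directly. On $\mathcal{E}_t$ the set $\mathcal{U}^a_t = \xi_t + \mathcal{S}_t$ is a translate of the closed convex cone $\mathcal{S}_t$, so
\[
    U^a_t(y) = \sup_{s \in \mathcal{S}_t} y\cdot(-\xi_t - s) = -y\cdot\xi_t + \delta^\ast_{-\mathcal{S}_t}(y),
\]
and $\delta^\ast_{-\mathcal{S}_t}(y)$ vanishes on $\mathcal{S}^\ast_t$ and equals $+\infty$ elsewhere. Off $\mathcal{E}_t$, $\mathcal{U}^a_t = \mathbb{R}^d$, so~\eqref{eq:lem:support-of-cone:Rd} gives~\eqref{eq:U^a_t-support-function-formula}. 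For $V^a_t$, the Minkowski sum $\mathcal{V}^a_t = \mathcal{W}^a_t + \mathcal{S}_t$ yields $V^a_t = W^a_t + \delta^\ast_{-\mathcal{S}_t}$, which is~\eqref{eq:V^a_t-support-function-formula}.

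The core analytic step for (2a) and (3) is the polyhedral intersection formula: if $A_1,\ldots,A_n$ are closed convex polyhedra with $\bigcap_k A_k \neq \emptyset$, then $\delta^\ast_{\bigcap_k A_k}$ equals the inf-convolution of $\delta^\ast_{A_1},\ldots,\delta^\ast_{A_n}$, and the defining infimum is attained (Rockafellar, Corollary~19.3.4). Because support functions are positively homogeneous, the inf-convolution coincides with the convex-hull-of-functions operation defined in Section~\ref{sec:preliminaries}, after the rescaling $z_k := \alpha_k y_k$. For~(2a) I would apply this to $\mathcal{Z}^{a\mu}_t = \mathcal{U}^{a\mu}_t \cap \mathcal{V}^{a\mu}_t$ (both polyhedral and both absorbing the recession direction $\mathcal{S}^\mu_t$, so their intersection is non-empty), obtaining~\eqref{eq:claim:lem:support-technical:2} and the explicit decomposition $Y = (1-\lambda)X + \lambda S$ after normalising to the section $\sigma_i$. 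For~(3) the starting point is the node-wise identity $\mathcal{W}^{a\mu}_t = \bigcap_{\nu \in \successors\mu} \mathcal{Z}^{a\nu}_{t+1}$, which holds because an $\mathcal{F}_t$-measurable random variable is constant on $\mu$ and every successor constraint must hold simultaneously; the same intersection formula then yields~\eqref{eq:claim:lem:support-technical:3} along with the convex combination $X = \sum_\nu p^\nu Y^\nu$.

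Cases (2b), (2c), (2d) follow immediately from the identifications in Remark~\ref{rem:const:seller:when-is-R^d} combined with part~(1), and the domain assertion $\dom Z^{a\mu}_t = \mathcal{S}^{\ast\mu}_t$ in (a) and (b) follows in one direction from the recession property $\mathcal{Z}^a_t + \mathcal{S}_t \subseteq \mathcal{Z}^a_t$ (both $\mathcal{U}^a_t$ and $\mathcal{V}^a_t$ absorb $\mathcal{S}_t$, so their intersection does too) and in the other from $\dom U^{a\mu}_t = \mathcal{S}^{\ast\mu}_t$ with $Z^a_t \le U^a_t$. The main obstacle, and the reason a backward induction is needed, is propagating the compactly $i$-generated property through part~(3): $\dom W^{a\mu}_t$ is a cone whose section $\sigma_i$ should be the convex hull of the compact sections $\sigma_i(\dom Z^{a\nu}_{t+1})$ supplied by the inductive hypothesis, and I need this convex hull to remain compact and to generate $\dom W^{a\mu}_t$. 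This hinges on Remark~\ref{rem:St*-generation} for $\mathcal{S}^\ast_t$ together with a separation argument excluding directions in $\sigma_i(\mathbb{R}^d)$ along which the infimum in the convex-hull formula would blow up.
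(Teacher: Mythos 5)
Your proposal is correct in substance but routes the two hard steps, (2a) and (3), through a different tool than the paper. The paper proves a bespoke auxiliary result (Lemma~\ref{lem:convex-hull-support-functions}): for closed convex sets whose support functions have compactly $i$-generated domains, the support function of the intersection equals the \emph{unclosed} convex hull of the support functions with attained infimum; this is derived from Rockafellar's Corollary~16.5.1 plus a careful argument with generating functions on the compact sections (Corollaries~9.8.2--9.8.3) to remove the closure operation and obtain attainment. You instead invoke the polyhedral inf-convolution formula (Rockafellar, Corollary~19.3.4), which delivers exactness and attainment for free, and you correctly observe that for positively homogeneous functions inf-convolution coincides with the convex hull of functions via the rescaling $z_k:=\alpha_k y_k$. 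Since Remark~\ref{rem:const:seller:when-is-R^d} guarantees that all the sets in Construction~\ref{const:seller} are polyhedral wherever they are proper subsets of $\mathbb{R}^d$, this is a legitimate and somewhat shorter path; the trade-off is that the paper's Lemma~\ref{lem:convex-hull-support-functions} does not need polyhedrality, only compact $i$-generation of the domains. Note that compact $i$-generation is still indispensable on your route too, both to normalise the attaining points onto the sections $\sigma_i(\dom U^{a\mu}_t)$, $\sigma_i(\dom V^{a\mu}_t)$, $\sigma_i(\dom Z^{a\nu}_{t+1})$ (every nonzero element of such a cone has strictly positive $i$-th coordinate, so the weights $\lambda$, $p^\nu$ are nonnegative and sum to $1$) and to close the backward induction, so you have not escaped that part of the paper's argument.

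Two justifications need repair. First, a common recession cone does \emph{not} imply that two polyhedra intersect (two parallel translates of a half-line are disjoint yet share a recession cone); the non-emptiness of $\mathcal{Z}^{a\mu}_t$ and $\mathcal{W}^{a\mu}_t$ should instead be taken from Remark~\ref{rem:const:seller:when-is-R^d}, or re-derived from the trivial superhedging strategy~\eqref{eq:seller:trivial-superhedging-strategy} as the paper does. Second, the closing ``separation argument'' for the compact $i$-generation of $\dom W^{a\mu}_t$ is both vague and unnecessary: the inf-convolution formula gives $\dom W^{a\mu}_t=\sum_{\nu\in\successors\mu}\dom Z^{a\nu}_{t+1}$, and a finite Minkowski sum of compactly $i$-generated cones $C_\nu$ equals $\cone\conv\bigl[\bigcup_\nu\sigma_i(C_\nu)\bigr]$ with $\sigma_i$-section $\conv\bigl[\bigcup_\nu\sigma_i(C_\nu)\bigr]$, which is compact as the convex hull of a compact set; this is the analogue of~\eqref{eq:lem:convex-hull-support-functions:4} and closes the induction directly.
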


The proof of Lemma \ref{lem:support-technical} is deferred to Appendix \ref{appendix}.

\begin{proof}[Proof of Proposition~\ref{prop:construction-of-optimal-martingale-triple}]
We construct the process $\hat{S}=(\hat{S}_t)$ by backward recursion, together with auxiliary adapted processes $\hat{X}=(\hat{X}_t)$, $\hat{Y}=(\hat{Y}_t)$, $\hat{\lambda}=(\hat{\lambda}_t)$ and predictable $\hat{p}=(\hat{p}_t)$. Fix any $(\mathbb{Q},S)\in\mathcal{P}^i$.

As $\mathcal{E}^\ast_0=\Omega$, Lemma \ref{lem:support-technical}\ref{claim:lem:support-technical:2} ensures that $\sigma_i(\dom Z^a_0)$ is non-empty and compact, and there exists $\hat{Y}_0\in\sigma_i(\dom Z^a_0)$ such that
\begin{equation}
 Z^a_0(\hat{Y}_0) = \min\{Z^a_0(s):s\in\sigma_i(\mathbb{R}^d)\}.
 \label{eq:Z0ofY0}
\end{equation}
Note that $\hat{Y}_0$ is an appropriate starting value for the recursion below since $\Omega\setminus\mathcal{E}^\ast_0=\emptyset$.

For any $t\ge0$, suppose that $\hat{Y}_t$ is an $\mathcal{F}_t$-measurable random variable such that $\hat{Y}_t\in\sigma^i(\dom Z^a_t)$ on $\mathcal{E}_t^\ast$ and $\hat{Y}_t = S_t$ on $\Omega\setminus\mathcal{E}_t^\ast$. For any $\mu\in\Omega_t$ we now construct $\hat{\lambda}_t^\mu\in[0,1]$, $\hat{X}_t\in\sigma_i(\mathcal{K}_t^{\ast})$ and $\hat{S}_t\in\sigma_i(\mathcal{K}_t^{\ast})$ such that
\begin{equation}
 \hat{Y}_t^\mu = (1-\hat{\lambda}_t^\mu)\hat{X}_t^\mu + \hat{\lambda}_t^\mu \hat{S}_t^\mu. \label{eq:dual-construction:6}
\end{equation}
There are four possibilities:
\begin{itemize}
  \item If $\mu\subseteq\mathcal{E}_t\cap\mathcal{E}^\ast_{t+1}$, then Lemma \ref{lem:support-technical}\ref{claim:lem:support-technical:2}\ref{claim:lem:support-technical:2a} ensures the existence of $\hat{\lambda}^\mu_t\in[0,1]$, $\hat{X}_t^\mu\in \sigma_i(\dom V^{a\mu}_t)$ and $\hat{S}_t^\mu\in\sigma_i(\mathcal{K}_t^{\ast\mu})$ satisfying \eqref{eq:dual-construction:6} and
\begin{equation}
 Z^{a\mu}_t(\hat{Y}_t^\mu) = (1-\hat{\lambda}_t^\mu)V^{a\mu}_t(\hat{X}_t^\mu) + \hat{\lambda}_t^\mu U^a_t(\hat{S}_t^\mu).  \label{eq:dual-construction:71}
\end{equation}
This possibility does not arise when $t=T$ because $\mathcal{E}_{T+1}^\ast=\emptyset$.

  \item If $\mu\subseteq\mathcal{E}_t\setminus\mathcal{E}^\ast_{t+1}$, then Lemma \ref{lem:support-technical}\ref{claim:lem:support-technical:2}\ref{claim:lem:support-technical:2b} applies. Choosing $\hat{\lambda}^\mu_t:=1$, $\hat{X}_t^\mu:=S_t^\mu$ and $\hat{S}_t^\mu:=\hat{Y}_t^\mu$ yields \eqref{eq:dual-construction:6} and
\begin{equation}
 Z^{a\mu}_t(\hat{Y}_t^\mu) = U^a_t(\hat{S}_t^\mu) = \hat{\lambda}_t^\mu U^a_t(\hat{S}_t^\mu).  \label{eq:dual-construction:72}
\end{equation}

  \item If $\mu\subseteq\mathcal{E}^\ast_{t+1}\setminus\mathcal{E}_t$, then Lemma \ref{lem:support-technical}\ref{claim:lem:support-technical:2}\ref{claim:lem:support-technical:2c} gives \eqref{eq:dual-construction:6} and
\begin{equation}
 Z^{a\mu}_t(\hat{Y}_t^\mu) = V^{a\mu}_t(\hat{X}_t^\mu) = (1-\hat{\lambda}_t^\mu)V^{a\mu}_t(\hat{X}_t^\mu)  \label{eq:dual-construction:73}
\end{equation}
after defining $\hat{\lambda}^\mu_t:=0$, $\hat{X}_t^\mu:=\hat{Y}_t^\mu$ and $\hat{S}_t^\mu:=S_t^\mu$. This possibility does not arise when $t=T$ because $\mathcal{E}_{T+1}^\ast=\emptyset$.

\item If $\mu\not\subseteq\mathcal{E}^\ast_t$, then $\hat{Y}_t^\mu=S_t^\mu$ by the recursive assumption, and \[Z^{a\mu}_t=V^{a\mu}_t=U^{a\mu}_t=\delta^\ast_{\mathbb{R}^d}\]
by Remark \ref{remark:construction:when-is-R^d:dual} and Lemma \ref{lem:support-technical}\ref{claim:lem:support-technical:2}\ref{claim:lem:support-technical:2d}. Defining $\lambda^\mu_t:=0$ and $\hat{S}_t^\mu:=\hat{X}_t^\mu:=S_t^\mu$ gives \eqref{eq:dual-construction:6}. This possibility does not arise when $t=0$ because $\mathcal{E}_0^\ast=\Omega$.
\end{itemize}

Note that $\hat{X}_t\in\sigma_i(\dom V^a_t)$ on $\mathcal{E}^\ast_{t+1}$ and $\hat{X}_t=S_t$ on $\Omega\setminus\mathcal{E}^\ast_{t+1}$. For any $t<T$ and $\mu\in\Omega_t$ we now construct $(\hat{Y}^\nu_t)_{\nu\in\successors\mu}$ and $(\hat{p}_{t+1}^\nu)_{\nu\in\successors\mu}$ such that
\begin{align}
 1 &= \sum_{\nu\in\successors\mu}p_{t+1}^\nu,\label{eq:dual-construction:8}\\
  \hat{X}_t^\mu &= \sum_{\nu\in\successors\mu}p_{t+1}^\nu \hat{Y}_{t+1}^\nu,  \label{eq:dual-construction:3}
\end{align}
There are two possibilities:
\begin{itemize}
 \item If $\mu\subseteq\mathcal{E}^\ast_{t+1}$, then $\hat{X}_t\in\sigma_i(\dom V^a_t)$ and Lemma \ref{lem:support-technical}\ref{claim:lem:support-technical:3} assures the existence of  $(\hat{Y}^\nu_t)_{\nu\in\successors\mu}$ and $(\hat{p}_{t+1}^\nu)_{\nu\in\successors\mu}$ with $\hat{p}_{t+1}^\nu\in[0,1]$, $\hat{Y}_{t+1}^\nu\in\sigma_i(\dom Z^{a\nu}_{t+1})$ for all $\nu\in\successors\mu$ satisfying \eqref{eq:dual-construction:8}--\eqref{eq:dual-construction:3} and
\begin{equation}
  V^{a\mu}_t(\hat{X}_t^\mu) = W^{a\nu}_t(\hat{X}_t^\mu) = \sum_{\nu\in\successors\mu}\hat{p}_{t+1}^\nu Z^{a\nu}_{t+1}(\hat{Y}_{t+1}^\nu).  \label{eq:dual-construction:4}
\end{equation}

  \item If $\mu\not\subseteq\mathcal{E}^\ast_{t+1}$, then defining $\hat{Y}_{t+1}^\nu:=S^\nu_{t+1}$ and $\hat{p}_{t+1}^\nu:=\mathbb{Q}(\nu|\mu)$ for all $\nu\in\successors\mu$ gives \eqref{eq:dual-construction:8}--\eqref{eq:dual-construction:3}.
\end{itemize}
This concludes the recursive step.

The probability measure $\hat{\mathbb{Q}}$ is defined as
\[
 \hat{\mathbb{Q}}(\omega) := \prod_{t=0}^{T-1}\hat{p}_{t+1}^\omega \text{ for all } \omega.
\]
Then \eqref{eq:dual-construction:8}--\eqref{eq:dual-construction:4} gives
\begin{align}
 V^a_t(\hat{X}_t) &= \mathbb{E}_{\hat{\mathbb{Q}}}(Z^a_{t+1}(\hat{Y}_{t+1})|\mathcal{F}_t) \text{ on } \mathcal{E}^\ast_{t+1}, & \hat{X}_t &= \mathbb{E}_{\hat{\mathbb{Q}}}(\hat{Y}_{t+1}|\mathcal{F}_t) \label{eq:dual-construction:84}
\end{align}
for all $t<T$.

The randomised stopping time $\hat{\chi}=(\hat{\chi}_t)$ is defined by $\hat{\chi}_0:=\hat{\lambda}_0$ and
\[
 \hat{\chi}_t := \hat{\lambda}_t\left[1-\sum_{s=0}^{t-1}\hat{\chi}_s\right]
\]
for $t>0$. It is clear from the construction that $\hat\lambda_t=0$ on $\Omega\setminus\mathcal{E}_t$ for all $t$, which implies that $\hat{\chi}\in\mathcal{X}^\mathcal{E}$. Observe also that
\begin{align*}
\hat{\lambda}_t\hat{\chi}_t^\ast &= \hat{\chi}_t, & (1-\hat{\lambda}_t)\hat{\chi}_t^\ast &= \hat{\chi}_t^\ast - \chi_t = \hat{\chi}_{t+1}^\ast
\end{align*}
for all $t$; recall that $\hat{\chi}_{T+1}^\ast=0$ by definition. It follows from \eqref{eq:dual-construction:6} that
\[
 \hat{\chi}_t^\ast \hat{Y}_t = \hat{\chi}_{t+1}^\ast \hat{X}_t + \hat{\chi}_t \hat{S}_t
\]
for all $t$. Equations \eqref{eq:U^a_t-support-function-formula} and \eqref{eq:dual-construction:72} give
\begin{equation} \label{eq:dual-construction:72-cons}
 \hat{\chi}_t^\ast Z^a_t(\hat{Y}_t) = \hat{\chi}_tU^a_t(\hat{S}_t) = - \hat{\chi}_t \xi_t\cdot\hat{S}_t \text{ on }\mathcal{E}_t\setminus\mathcal{E}^\ast_{t+1}.
\end{equation}
Since $\hat{\chi}_t=0$ on $\Omega\setminus\mathcal{E}_t$, equations \eqref{eq:dual-construction:71} and \eqref{eq:dual-construction:73} may be combined with \eqref{eq:U^a_t-support-function-formula} to yield
\begin{equation} \label{eq:dual-construction:713}
 \hat{\chi}_t^\ast Z^a_t(\hat{Y}_t) = \hat{\chi}_{t+1}^\ast V^a_t(\hat{X}_t) - \hat{\chi}_t \xi_t\cdot\hat{S}_t \text{ on }\mathcal{E}^\ast_{t+1}.
\end{equation}

It is possible to show by backward induction that
\begin{equation}
 \mathbb{E}_{\hat{\mathbb{Q}}}(\hat{S}_{t+1}^{\hat{\chi}\ast}|\mathcal{F}_t) = \hat{\chi}_{t+1}^\ast \hat{X}_t \label{eq:dual-construction:1}
\end{equation}
for all $t$. At time $t=T$ this follows from the notational conventions $\hat{S}^{\hat{\chi}\ast}_{T+1}=0$ and $\hat{\chi}^\ast_{T+1}=0$. Suppose that \eqref{eq:dual-construction:1} holds for some $t>0$. Then
\begin{align*}
 \mathbb{E}_{\hat{\mathbb{Q}}}(\hat{S}_t^{\hat{\chi}\ast}|\mathcal{F}_{t-1})
&= \mathbb{E}_{\hat{\mathbb{Q}}}(\hat{\chi}_t\hat{S}_t + \hat{S}_{t+1}^{\hat{\chi}\ast}|\mathcal{F}_{t-1}) \\
&= \mathbb{E}_{\hat{\mathbb{Q}}}(\hat{\chi}_t\hat{S}_t + \mathbb{E}_{\hat{\mathbb{Q}}}(\hat{S}_{t+1}^{\hat{\chi}\ast}|\mathcal{F}_t)|\mathcal{F}_{t-1}) \\
&= \mathbb{E}_{\hat{\mathbb{Q}}}(\hat{\chi}_t\hat{S}_t + \hat{\chi}_{t+1}^\ast \hat{X}_t|\mathcal{F}_{t-1}) \\
&= \hat{\chi}^\ast_t\mathbb{E}_{\hat{\mathbb{Q}}}(\hat{Y}_t|\mathcal{F}_{t-1}) = \hat{\chi}_t^\ast \hat{X}_{t-1},
\end{align*}
which concludes the inductive step.

We also show by backward induction that
\begin{equation} \label{eq:dual-construction:2}
  \hat{\chi}_t^\ast Z^a_t(\hat{Y}_t) = -\mathbb{E}_{\hat{\mathbb{Q}}}((\xi\cdot\hat{S})_t^{\hat{\chi}\ast}|\mathcal{F}_t) \text{ on } \mathcal{E}^\ast_t
\end{equation}
for all $t$. If $t=T$ then $\mathcal{E}^\ast_{T+1}=\emptyset$ and \eqref{eq:dual-construction:72-cons} gives
\[
 \hat{\chi}_T^\ast Z^a_T(\hat{Y}_T) = -\hat{\chi}_T\xi_T\cdot\hat{S}_T = -(\xi\cdot S)^{\hat{\chi}\ast}_T = -\mathbb{E}_{\hat{\mathbb{Q}}}((\xi\cdot S)^{\hat{\chi}\ast}_T|\mathcal{F}_T)
\]
on $\mathcal{E}_T=\mathcal{E}^\ast_T$. Suppose now that
\[
  \hat{\chi}_{t+1}^\ast Z^a_{t+1}(\hat{Y}_{t+1}) = -\mathbb{E}_{\hat{\mathbb{Q}}}((\xi\cdot\hat{S})_{t+1}^{\hat{\chi}\ast}|\mathcal{F}_{t+1}) \text{ on } \mathcal{E}^\ast_{t+1}
\]
 holds for some $t<T$. On the set $\mathcal{E}_t\setminus\mathcal{E}^\ast_{t+1}$ we have $\hat{\chi}^\ast_{t+1}=0$, whence $(\xi\cdot S)^{\hat{\chi}\ast}_{t+1}=0$. Equation \eqref{eq:dual-construction:72-cons} then gives
\[
 \hat{\chi}_t^\ast Z^a_t(\hat{Y}_t)
= - \hat{\chi}_t\xi_t\cdot\hat{S}_t - \mathbb{E}_{\hat{\mathbb{Q}}}((\xi\cdot S)^{\hat{\chi}\ast}_{t+1}|\mathcal{F}_t) \\
= -\mathbb{E}_{\hat{\mathbb{Q}}}((\xi\cdot S)^{\hat{\chi}\ast}_t|\mathcal{F}_t).
\]
On $\mathcal{E}_{t+1}^\ast$ the equations \eqref{eq:dual-construction:84} and \eqref{eq:dual-construction:713} give
\begin{align*}
 \hat{\chi}_t^\ast Z^a_t(\hat{Y}_t)
&= \hat{\chi}_{t+1}^\ast V^a_t(\hat{X}_t) - \hat{\chi}_t \xi_t\cdot\hat{S}_t \\
&= \hat{\chi}_{t+1}^\ast \mathbb{E}_{\hat{\mathbb{Q}}}(Z^a_{t+1}(\hat{Y}_{t+1})|\mathcal{F}_t) - \hat{\chi}_t \xi_t\cdot\hat{S}_t \\
&= - \mathbb{E}_{\hat{\mathbb{Q}}}((\xi\cdot\hat{S})_{t+1}^{\hat{\chi}\ast}|\mathcal{F}_t) - \hat{\chi}_t \xi_t\cdot\hat{S}_t \\
&= - \mathbb{E}_{\hat{\mathbb{Q}}}((\xi\cdot\hat{S})_t^{\hat{\chi}\ast}|\mathcal{F}_t).
\end{align*}
This concludes the inductive step since $\mathcal{E}^\ast_t=[\mathcal{E}_t\setminus\mathcal{E}^\ast_{t+1}]\cup\mathcal{E}_{t+1}^\ast$.

To summarize, we have constructed a randomised stopping time $\hat{\chi}\in\mathcal{X}^\mathcal{E}$, a probability measure $\hat{\mathbb{Q}}$ and an adapted process $\hat{S}$ such that $\hat{S}_t\in\sigma_i(\mathcal{K}_t^\ast)$ and
\[
 \mathbb{E}_{\hat{\mathbb{Q}}}(S^{\hat{\chi}\ast}_{t+1}|\mathcal{F}_t) = \hat{\chi}^\ast_{t+1} \hat{X}_t \in\dom V^a_t\subseteq\mathcal{K}^\ast_t
\]
for all $t$. Equation \eqref{eq:dual-construction:2} moreover gives
\begin{align*}
  \mathbb{E}_{\hat{\mathbb{Q}}}((\xi\cdot\hat{S})_{\hat{\chi}}) = -Z^a_0(\hat{Y}_0)
\end{align*}
which together with \eqref{eq:Z0ofY0} leads to \eqref{eq:dual-construction:5} and completes the proof of Proposition~\ref{prop:construction-of-optimal-martingale-triple}.
\end{proof}

\begin{example}
\label{Exl:2}\upshape Consider a single-step model with four nodes at
time~$1$, that is, $\Omega=\{\omega_{1},\omega_{2},\omega_{3},\omega
_{4}\}$, and three assets. We take asset~3 to be a cash account with zero
interest rate, take the cash prices of assets 1 and 2 in a friction-free
market in Table~\ref{tab:1}, and introduce transaction costs at the rate $k=\frac{1}{6}$ in a similar
manner as in Example~\ref{Exl:1}, with the matrix-valued exchange rate process%
\[
\pi_{t}=\left[
\begin{array}
[c]{ccc}%
1 & (1+k)S_{t}^{2}/S_{t}^{1} & (1+k)/S_{t}^{1}\\
(1+k)S_{t}^{1}/S_{t}^{2} & 1 & (1+k)/S_{t}^{2}\\
(1+k)S_{t}^{1} & (1+k)S_{t}^{2} & 1
\end{array}
\right]
\]
for $t=0,1$. Consider the American option with payoff process $\xi$ in Table \ref{tab:1} in this model; its exercise policy is $\mathcal{E}_0=\mathcal{E}_1=\Omega$.

\begin{table}
\caption{Friction-free cash prices and American option payoff, Example~\ref{Exl:2}}
\label{tab:1}       
\begin{tabular}{ccccccc}
\hline\noalign{\smallskip}
& $S_{0}^{1}$ & $S_{0}^{2}$ & $S_{1}^{1}$ & $S_{1}^{2}$ & $\xi_{0}=(\xi_{0}^{1},\xi_{0}^{2},\xi_{0}^{3})$ & $\xi_{1}=(\xi_{1}^{1}%
,\xi_{1}^{2},\xi_{1}^{3})$ \\
\noalign{\smallskip}\hline\noalign{\smallskip}
$\omega_{1}$ & $10$ & $20$ & $\phantom18$ & $18$ & $(1,-1,33)$ & $(-1,1,10)$\\
$\omega_{2}$ & $10$ & $20$ & $12$ & $18$ & $(1,-1,33)$ & $(-2,1,10)$\\
$\omega_{3}$ & $10$ & $20$ & $\phantom18$ & $22$ & $(1,-1,33)$ & $(-1,2,10)$\\
$\omega_{4}$ & $10$ & $20$ & $12$ & $22$ & $(1,-1,33)$ & $(-2,2,10)$ \\
\noalign{\smallskip}\hline
\end{tabular}
\end{table}

Construction~\ref{const:seller} is
formulated in terms of the convex sets $\mathcal{U}_{t}^{a}$, $\mathcal{V}%
_{t}^{a}$, $\mathcal{W}_{t}^{a}$, $\mathcal{Z}_{t}^{a}$, but it is easier to
visualise it by drawing the support functions $U_{t}^{a}$, $V_{t}^{a}$, $W_{t}%
^{a}$, $Z_{t}^{a}$ of $-\mathcal{U}_{t}^{a}$, $-\mathcal{V}_{t}^{a}$, $-\mathcal{W}_{t}^{a}$, $-\mathcal{Z}_{t}^{a}$ or indeed the sections
$-\sigma_{3}(\epi U_{t}^{a})$, $-\sigma_{3}(\epi V_{t}^{a})$, $-\sigma
_{3}(\epi W_{t}^{a})$, $-\sigma_{3}(\epi Z_{t}^{a})$, which are shown in
Figure~\ref{algorithm_seller.png} for the above single-step model with transaction costs.
Observe that all the polyhedra in Figure~\ref{algorithm_seller.png} are
unbounded below, but have been truncated when drawing the pictures.

The construction proceeds as follows:
\begin{itemize}
\item We start with $-\sigma_{3}(\epi U_{1}^{a})=-\sigma_{3}(\epi Z_{1}^{a})$
for all four nodes at time~$1$, represented by the four dark gray polyhedra in
Figure~\ref{algorithm_seller.png}(a). These are computed using~\eqref{eq:U^a_t-support-function-formula} in
Lemma~\ref{lem:support-technical}.

\item We then take the convex hull of these four polyhedra to obtain
$-\sigma_{3}(\epi W_{0}^{a})$, the semi-transparent gray polyhedron in
Figures~\ref{algorithm_seller.png}(a), (b), (c), (d). Formula~\eqref{eq:claim:lem:support-technical:3} in
Lemma~\ref{lem:support-technical} is used here.

\item Next, $-\sigma_{3}(\epi V_{0}^{a})$, the dark gray polyhedron in
Figure~\ref{algorithm_seller.png}(b), is the intersection of $-\sigma
_{3}(\epi
W_{0}^{a})$ and $-\sigma_{3}(\mathcal{K}_{0}^{\ast})$, according~\eqref{eq:V^a_t-support-function-formula} in
Lemma~\ref{lem:support-technical}.

\item Then we take $-\sigma_{3}(\epi U_{0}^{a})$, the dark gray polyhedron in
Figure~\ref{algorithm_seller.png}(c). This is computed using~\eqref{eq:U^a_t-support-function-formula} in
Lemma~\ref{lem:support-technical}.

\item Finally, we obtain $-\sigma_{3}(\epi Z_{0}^{a})$, the dark gray
polyhedron in Figure~\ref{algorithm_seller.png}(d), as the convex hull of
$-\sigma_{3}(\epi V_{0}^{a})$ and $-\sigma_{3}(\epi U_{0}^{a})$, according to
\eqref{eq:claim:lem:support-technical:2} in Lemma~\ref{lem:support-technical}.
\end{itemize}

\begin{figure}[th]
\includegraphics[
natheight=1.179600in,
natwidth=4.733100in,
height=1.2142in,
width=4.7867in
]%
{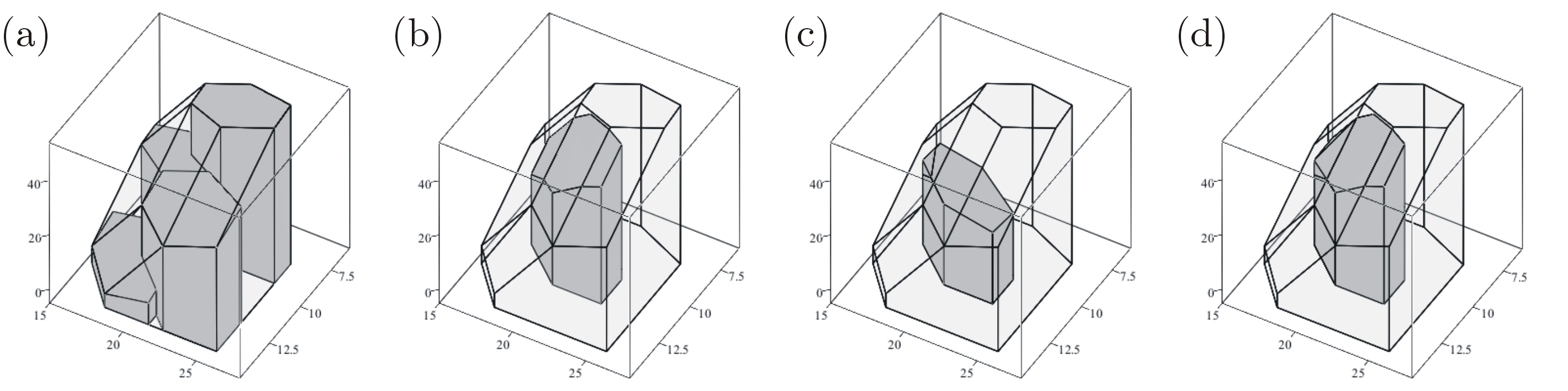}%
\caption{Construction~\ref{const:seller} expressed in terms of $-\sigma_{3}(\epi U_0^{a})$, $-\sigma_{3}%
(\epi V_0^{a})$, $-\sigma_{3}(\epi
W_0^{a})$, $-\sigma_{3}(\epi Z_0^{a})$, Example~\ref{Exl:2}}%
\label{algorithm_seller.png}%
\end{figure}

\noindent The ask price of the American option is the maximum of $-\sigma_{3}(\epi
Z_{0}^{a})$; see Theorem~\ref{th:seller}. The polyhedron $-\sigma_{3}(\epi Z_{0}^{a})$ has
$10$ vertices:\begin{gather*}
(10,120/7,181/7),(60/7,132/7,262/7),(35/3,22,106/3),(35/3,70/3,38),\\
(60/7,120/7,184/7),(35/3,20,950/33),(11,132/7,194/7),(66/7,22,310/7),\\
(60/7,20,3170/77),(10,70/3,134/3),
\end{gather*}
and its highest point turns out to be at $\frac{134}{3}%
\cong44.67$. This is the ask price of the American option.
\end{example}

\section{Pricing and hedging for the buyer}
\label{sec:buyer}

Recall that a pair $(y,\tau)$ consisting of a self-financing trading strategy $y\in\Phi$ and stopping time $\tau\in\mathcal{T}^\mathcal{E}$ superhedges the option $(\xi,\mathcal{E})$ for the buyer if \eqref{eq:def:super-hedge:buyer} holds, equivalently if $y_\tau\in\mathcal{U}^b_\tau$.

The next result shows that the set $\mathcal{Z}_0^b$ given by Construction \ref{const:buyer} is the collection of initial endowments allowing the buyer to superhedge $(\xi,\mathcal{E})$, and that it can be used to compute the bid price directly.

\begin{proposition} \label{prop:buyer:equivalence-construction}
We have
\begin{multline*}
\begin{aligned}
 \mathcal{Z}_0^b &= \{y_0\in\mathbb{R}^d:(y,\tau)\in\Phi\times\mathcal{T}^\mathcal{E} \text{ superhedges } (\xi,\mathcal{E}) \text{ for the buyer}\} \\
 &= \{y_0\in\mathbb{R}^d:(y,\tau)\in\Phi\times\mathcal{T}^\mathcal{E} \text{ superhedges } (\xi,\mathcal{E}) \text{ for the buyer} \\
\end{aligned}\\
 \text{ and }\tau = \min\{t:y_t\in\mathcal{U}^b_t\}\}
\end{multline*}
and
\begin{equation} \label{eq:prop:buyer:equivalence-construction:1}
 p^b_i(\xi,\mathcal{E}) = -\min\{x\in\mathbb{R}|xe^i\in\mathcal{Z}^b_0\}.
\end{equation}
{Moreover, Construction \ref{const:buyer:hedge} yields a superhedging strategy $(y,\tau_T)\in\Phi\times\mathcal{T}$ for the buyer for any initial endowment $y_0\in\mathcal{Z}^b_0$. In particular, for $y_0=-p^b_i(\xi,\mathcal{E})e^i$ Construction~\ref{const:buyer:hedge} yields an optimal superhedging strategy $(y,\tau_T)$ for the buyer.}
\end{proposition}

\begin{proof}
We show below that $z\in\mathcal{Z}^b_0$ if and only if there exists a superhedging strategy $(y,\tau)$ for the buyer of $(\xi,\mathcal{E})$ with $y_0=z$ and
\begin{equation} \label{eq:prop:buyer:equivalence-construction:3}
 \tau=\min\{t:y_t\in\mathcal{U}^b_t\}.
\end{equation}
The two representations of $\mathcal{Z}^b_0$ are equivalent since if $(y,\tau)$ superhedges $(\xi,\mathcal{E})$ for the buyer and
\[
 \tau':=\min\{t:y_t\in\mathcal{U}^b_t\},
\]
then $(y,\tau')$ also superhedges $(\xi,\mathcal{E})$ for the buyer. Once the result for $\mathcal{Z}^b_0$ is established, equation \eqref{eq:prop:buyer:equivalence-construction:1} follows directly from \eqref{eq:pi-b}. The minimum is attained because~$\mathcal{Z}^b_0$ is closed and $p^b_i(\xi,\mathcal{E})$ is finite.

Suppose that $(y,\tau)\in\Phi\times\mathcal{T}^\mathcal{E}$ superhedges $(\xi,\mathcal{E})$ for the buyer and satisfies~\eqref{eq:prop:buyer:equivalence-construction:3}. We show by backward induction on $t$ that $y_t\in\mathcal{Z}^b_t\setminus\mathcal{U}^b_t$ on $\{\tau>t\}$ for all~$t$. At time $t=T$ this is trivial because $\{\tau>T\}=\emptyset$. For any $t<T$, suppose that $y_{t+1}\in\mathcal{Z}^b_{t+1}\setminus\mathcal{U}^b_{t+1}$ on $\{\tau> t+1\}$. Since $y_{t+1}\in\mathcal{U}^b_{t+1}$ on $\{\tau=t+1\}$, this means that $y_{t+1}\in\mathcal{Z}^b_{t+1}$ on $\{\tau> t\} = \{\tau\ge t+1\}$.
On $\{\tau>t\}$ we then have $y_{t+1}\in\mathcal{W}^b_t$ as $y_{t+1}\in\mathcal{L}_t$, and $y_t\in\mathcal{V}^b_t\subseteq\mathcal{Z}^b_t$ because of the self-financing property. However $y_t\notin\mathcal{U}^b_t$ on $\{\tau> t\}$ because of  \eqref{eq:prop:buyer:equivalence-construction:3}, and so $y_t\in\mathcal{Z}^b_t\setminus\mathcal{U}^b_t$ on $\{\tau>t\}$, which concludes the inductive step. Finally, $y_0\in\mathcal{U}^b_0$ if $\tau=0$ and $y_0\in\mathcal{Z}^b_0\setminus\mathcal{U}^b_0$ if $\tau>0$, and therefore $y_0\in\mathcal{Z}^b_0$.

Conversely, we can use Construction~\ref{const:buyer:hedge} to produce
sequences $y=(y_{t})$ and $(\tau_{t})$ from any initial point $y_{0}%
:=z\in\mathcal{Z}_{0}^{b}$. We shall verify that $y$ is a predictable process
and $\tau_{T}$ is a stopping time. For any $t<T$ suppose by induction that
$y_{t}$ is $\mathcal{F}_{\left(  t-1\right)  \vee0}$-measurable and $\tau_{t}$
is a stopping time (and observe that for $t=0$ these conditions are
satisfied). Then $\left\{  \tau_{t}\leq t\right\}  \in\mathcal{F}_{t}$ and
$y_{t}$~is $\mathcal{F}_{t}$-measurable, which implies that $y_{t+1}%
\mathbf{1}_{\left\{  \tau_{t}\leq t\right\}  }=y_{t}\mathbf{1}_{\left\{
\tau_{t}\leq t\right\}  }$ is $\mathcal{F}_{t}$-measurable. We also have
$y_{t+1}\mathbf{1}_{\left\{  \tau_{t}=t+1\right\}  }\in\mathcal{W}_{t}%
^{b}\subseteq\mathcal{L}_{t}$ by~\eqref{eq:const:buyer:hedge:1}. It follows
that $y_{t+1}=y_{t}\mathbf{1}_{\left\{  \tau_{t}\leq t\right\}  }%
+y_{t+1}\mathbf{1}_{\left\{  \tau_{t}=t+1\right\}  }$ is $\mathcal{F}_{t}%
$-measurable. To show that $\tau_{t+1}$ is a stopping time, we need to verify
that $\left\{  \tau_{t+1}\leq s\right\}  \in\mathcal{F}_{s}$ for each$~s$. For
any $s\leq t$ this is satisfied because $\left\{  \tau_{t+1}\leq s\right\}
=\left\{  \tau_{t}\leq s\right\}  \in\mathcal{F}_{s}$, and for any $s>t+1$ we
have $\left\{  \tau_{t+1}\leq s\right\}  =\Omega\in\mathcal{F}_{s}$. It
remains to check for $s=t+1$ that $\left\{  \tau_{t+1}\leq t+1\right\}
\in\mathcal{F}_{t+1}$. We have $\left\{  \tau_{t+1}\leq t+1\right\}  =\left\{
\tau_{t}\leq t\right\}  \cup\left\{  \tau_{t+1}=t+1\right\}  $ and because
$\left\{  \tau_{t}\leq t\right\}  \in\mathcal{F}_{t}$, we only need to observe
that $\left\{  \tau_{t+1}=t+1\right\}  =\left\{  \tau_{t}=t+1\right\}
\cap\left\{  y_{t+1}\in\mathcal{U}_{t+1}^{b}\right\}  $ belongs to
$\mathcal{F}_{t+1}$. This is so because $\left\{  \tau_{t}=t+1\right\}
=\left\{  \tau_{t}>t\right\}  \in\mathcal{F}_{t}\subseteq\mathcal{F}_{t+1}$
and $\left\{  y_{t+1}\in\mathcal{U}_{t+1}^{b}\right\}  \in\mathcal{F}_{t+1}$
since $y_{t+1}$ has already been shown to be $\mathcal{F}_{t}$-measurable and
therefore $\mathcal{F}_{t+1}$-measurable. This completes the induction
argument. Moreover, observe from~\eqref{eq:const:buyer:hedge:1} that
$y_{t}-y_{t+1}\in\mathcal{K}_{t}$ for all $t<T$, that is, $y$ is a
self-financing strategy. Combined with predictability, it means that~$y\in
\Phi$. Furthermore, observe that $\tau_{T}\leq T$. This is so
by~\eqref{eq:const:buyer:hedge:1} since $\left\{  \tau_{T}>T\right\}
=\left\{  \tau_{T}=T+1\right\}  \subseteq\left\{  y_{T}\in\mathcal{Z}_{T}%
^{b}\setminus\mathcal{U}_{T}^{b}\right\}  =\emptyset$ given that
$\mathcal{Z}_{T}^{b}=\mathcal{U}_{T}^{b}$. Since we already know that
$\tau_{T}$ is a stopping time, we can conclude that $\tau_{T}\in\mathcal{T}$.

Next we show by induction that $y_{\tau_{t}}\in\mathcal{U}_{\tau_{t}}^{b}$ on
$\left\{  \tau_{t}\leq t\right\}  $ for all~$t$. This is clearly so for $t=0$.
Suppose that $y_{\tau_{t}}\in\mathcal{U}_{\tau_{t}}^{b}$ on $\left\{  \tau
_{t}\leq t\right\}  $ for some $t<T$. By~\eqref{eq:const:buyer:hedge:1}, on
$\left\{  \tau_{t}\leq t\right\}  $ we have $\tau_{t}=\tau_{t+1}$, and so
$y_{\tau_{t+1}}\in\mathcal{U}_{\tau_{t+1}}^{b}$ by the induction hypothesis.
Moreover, on $\left\{  \tau_{t+1}=t+1\right\}  $ we have $y_{t+1}%
\in\mathcal{U}_{t+1}^{b}$, which can be written as $y_{\tau_{t+1}}%
\in\mathcal{U}_{\tau_{t+1}}^{b}$. This shows that $y_{\tau_{t+1}}%
\in\mathcal{U}_{\tau_{t+1}}^{b}$ on $\left\{  \tau_{t+1}\leq t+1\right\}
=\left\{  \tau_{t}\leq t\right\}  \cup\left\{  \tau_{t+1}=t+1\right\}  $,
completing the induction step. In particular, it follows that $y_{\tau_{T}}%
\in\mathcal{U}_{\tau_{T}}^{b}$.

Finally, we shall see that $\tau_{T}\in\mathcal{T}^{\mathcal{E}}$. We already
know that $\tau_{T}\in\mathcal{T}$. We also know that $y_{\tau_{t}}%
\in\mathcal{U}_{\tau_{t}}^{b}$ on $\left\{  \tau_{t}\leq t\right\}  $, so%
\[
\left\{  \tau_{T}=t\right\}  \subseteq\left\{  y_{t}\in\mathcal{U}_{t}%
^{b}\right\}  \subseteq\left\{  \mathcal{U}_{t}^{b}\neq\emptyset\right\}
=\mathcal{E}_{t}%
\]
for any~$t$. It means that $\tau_{T}\in\mathcal{T}^{\mathcal{E}}$.

We have verified that that $(y,\tau_{T})$ is a superhedging strategy for the
buyer of $\left(  \xi,\mathcal{E}\right)  $. To complete the proof observe
that on $\left\{  y_{0}\notin\mathcal{U}_{0}^{b},\ldots,y_{t-1}\notin
\mathcal{U}_{t-1}^{b},y_{t}\in\mathcal{U}_{t}^{b}\right\}  $ we have $\tau
_{0}=1,\ldots,\tau_{t-1}=t$ and $\tau_{t}=\cdots=\tau_{T}=t$
by~\eqref{eq:const:buyer:hedge:2}, which implies that $\tau_{T}=\min\left\{
t:y_{t}\in\mathcal{U}_{t}^{b}\right\}  $.
\end{proof}

Let us now establish Theorem \ref{th:buyer}.

\begin{proof}[Proof of Theorem \ref{th:buyer}]
Note first that $(y,\tau)$ is a superhedging strategy for the buyer of $(\xi,\mathcal{E})$ if and only if it is a superhedging strategy for the seller of the European-style option with payoff $-\xi_\tau$ and expiration date $\tau$ of Section \ref{sec:European}. Denoting the European-style option by $(-\xi_\tau,\tau)$, the bid price of $(\xi,\mathcal{E})$ defined in \eqref{eq:pi-b} can be written as
\begin{align}
 p^b_i(\xi,\mathcal{E})
&= \max_{\tau\in\mathcal{T}^\mathcal{E}}\sup\{-x\in\mathbb{R}:\exists y\in\Phi\text{ with }y_0=xe^i\text{ such that }y_\tau+\xi_\tau\in\mathcal{K}_\tau\} \nonumber \\
&= \max_{\tau\in\mathcal{T}^\mathcal{E}}[-\inf\{x\in\mathbb{R}:\exists y\in\Phi\text{ with }y_0=xe^i\text{ such that }y_\tau+\xi_\tau\in\mathcal{K}_\tau\}] \nonumber\\
&= \max_{\tau\in\mathcal{T}^\mathcal{E}}[-p^a_i(-\xi_\tau,\tau)]. \label{eq:th:buyer:2}
\end{align}
The equality \eqref{eq:th:buyer:2} shows that $p^b_i(\xi,\mathcal{E})$ is finite because $\mathcal{T}^\mathcal{E}$ is finite and the ask prices are all finite by Proposition~\ref{prop:seller-superhedges-exp}. Equation \eqref{eq:ask:E} in conjunction with Lemma~\ref{lem:closure} {gives}
\begin{equation} \label{eq:th:buyer:1}
 -p^a_i(-\xi_\tau,\tau)= \inf_{(\mathbb{Q},S)\in\mathcal{P}^i(\tau)} \mathbb{E}_\mathbb{Q}((\xi\cdot S)_\tau)  =  \min_{(\mathbb{Q},S)\in\bar{\mathcal{P}}^i(\tau)} \mathbb{E}_\mathbb{Q}((\xi\cdot S)_\tau),
\end{equation}
and so
\[
 p^b_i(\xi,\mathcal{E})
= \max_{\tau\in\mathcal{T}^\mathcal{E}}\inf_{(\mathbb{Q},S)\in\mathcal{P}^i(\tau)} \mathbb{E}_\mathbb{Q}((\xi\cdot S)_\tau)
= \max_{\tau\in\mathcal{T}^\mathcal{E}}\min_{(\mathbb{Q},S)\in\bar{\mathcal{P}}^i(\tau)} \mathbb{E}_\mathbb{Q}((\xi\cdot S)_\tau).
\]

An optimal superhedging strategy $(\check{y},\check{\tau})$ for the buyer of $(\xi,\mathcal{E})$ may be constructed using the second half of the proof of {Proposition~\ref{prop:buyer:equivalence-construction} with} $y_0:=z=-p^b_i(\xi,\mathcal{E})e^i$. Such a strategy $(\check{y},\check{\tau})$ superhedges $(-\xi_{\check{\tau}},\check{\tau})$ for the seller, so
\[
 p^b_i(\xi,\mathcal{E}) = -\check{y}^i_0 \le - p^a_i(-\xi_{\check{\tau}},\check{\tau}) \le p^b_i(\xi,\mathcal{E}),
\]
whence
\begin{equation} \label{eq:buyer-seller-eq}
-p^a_i(-\xi,\mathcal{E}^{\check{\tau}}) = p^b_i(\xi,\mathcal{E}).
\end{equation}
Thus the construction in the proof of Proposition~\ref{prop:construction-of-optimal-martingale-triple} of the optimal stopping time and approximate martingale pair for the seller of the European option $(-\xi_{\check{\tau}},\check{\tau})$ can be used to construct $\check{\chi}$ and $(\check{\mathbb{Q}},\check{S})\in\bar{\mathcal{P}}^i(\check{\chi})$ such that
\[
 \mathbb{E}_{\check{\mathbb{Q}}}((-\xi\cdot \check{S})_{\check{\chi}})=p^a_i(-\xi_{\check{\tau}},\check{\tau}).
\]
It is moreover clear from the construction in the proof of Proposition~\ref{prop:construction-of-optimal-martingale-triple} and the structure of the exercise policy of $(-\xi_{\check{\tau}},\check{\tau})$ that $\check{\chi}=\chi^{\check{\tau}}$. Thus $(\check{\mathbb{Q}},\check{S})\in\bar{\mathcal{P}}^i(\check{\tau})$ and
\[
  \mathbb{E}_{\check{\mathbb{Q}}}((\xi\cdot \check{S})_{\check{\tau}}) = \mathbb{E}_{\check{\mathbb{Q}}}((\xi\cdot \check{S})_{\check{\chi}}) =-\mathbb{E}_{\check{\mathbb{Q}}}((-\xi\cdot \check{S})_{\check{\chi}}) = -p^a_i(-\xi_{\check{\tau}},\check{\tau}) = p^b_i(\xi,\mathcal{E})
\]
as required.
\end{proof}

\begin{example}
\label{Exl:3}\upshape
Consider the computation of the bid price of the American option in Example \ref{Exl:2} using Construction \ref{const:buyer}. In contrast to the seller's case, some of the sets $\mathcal{U}_{t}^{b},\mathcal{V}_{t}^{b},\mathcal{W}_{t}^{b},\mathcal{Z}_{t}^{b}$ involved in this construction may fail to be convex, and there is no convex dual representation like that for the seller in Figure~\ref{algorithm_seller.png}.
To visualise the sets $\mathcal{U}_{t}^{b},\mathcal{V}_{t}^{b},\mathcal{W}_{t}^{b},\mathcal{Z}_{t}^{b}$ we just draw their boundaries.

The construction for the buyer proceeds as follows:
\begin{itemize}
\item The first step is to compute $\mathcal{Z}_{1}^{b}=\mathcal{U}_{1}^{b}$
in each of the four scenarios; see Figure~\ref{algorithm_buyer_u1s.png}.

\item Then we take the intersection of $\mathcal{U}_{1}^{b\omega_{1}}$,
$\mathcal{U}_{1}^{b\omega_{2}}$, $\mathcal{U}_{1}^{b\omega_{3}}$,
$\mathcal{U}_{1}^{b\omega_{4}}$ to obtain $\mathcal{W}_{0}^{b}$. This set
appears in Figure~\ref{algorithm_buyer.png}(a).

\item Next, the set $\mathcal{V}_{0}^{b}$ in Figure~\ref{algorithm_buyer.png}%
(b) is the sum $\mathcal{W}_{0}^{b}+\mathcal{K}_{0}$ of
$\mathcal{W}_{0}^{b}$ and the solvency cone~$\mathcal{K}_{0}$.

\item Then we take $\mathcal{U}_{0}^{b}$, which appears in
Figure~\ref{algorithm_buyer.png}(c).

\item Finally, the set $\mathcal{Z}_{0}^{b}$ is the union of $\mathcal{V}%
_{0}^{b}$ and $\mathcal{U}_{0}^{b}$. It appears in
Figure~\ref{algorithm_buyer.png}(d); the dark gray region belongs to
$\mathcal{V}_{0}^{b}$ (but not $\mathcal{U}_{0}^{b}$), and the light gray
region belongs to $\mathcal{U}_{0}^{b}$ (but not $\mathcal{V}_{0}^{b}$).%
\end{itemize}

\begin{figure}[th]
\includegraphics[
natheight=1.179600in,
natwidth=4.738300in,
height=1.2142in,
width=4.7919in
]%
{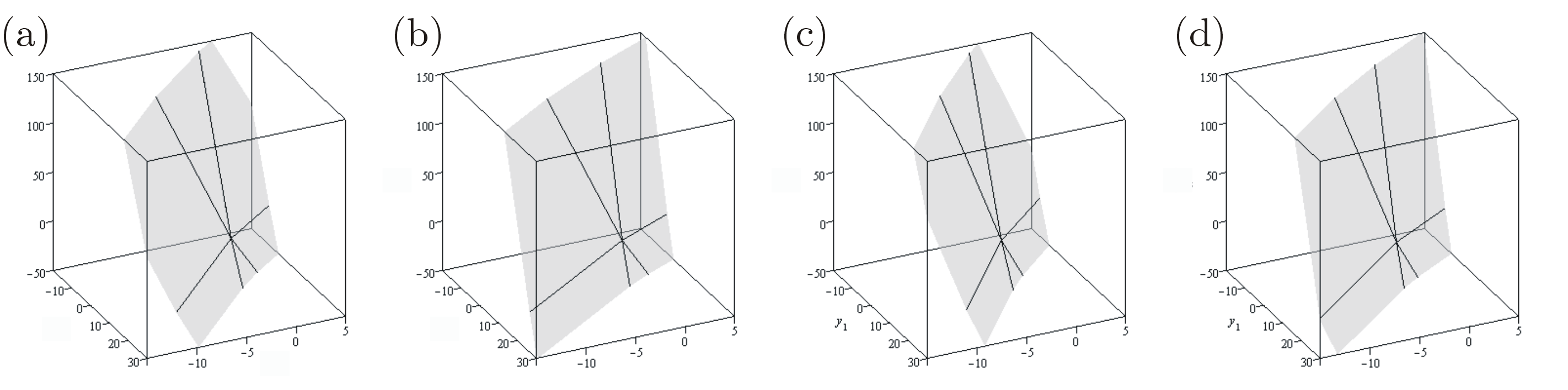}%
\caption{Sets $\mathcal{U}_{1}^{b\omega_{1}}$,
$\mathcal{U}_{1}^{b\omega_{2}}$, $\mathcal{U}_{1}^{b\omega_{3}}$,
$\mathcal{U}_{1}^{b\omega_{4}}$ in Construction \ref{const:buyer}, Example~\ref{Exl:3}}%
\label{algorithm_buyer_u1s.png}%
\end{figure}
\begin{figure}[th]
\includegraphics[
natheight=1.179600in,
natwidth=4.734800in,
height=1.2142in,
width=4.7885in
]%
{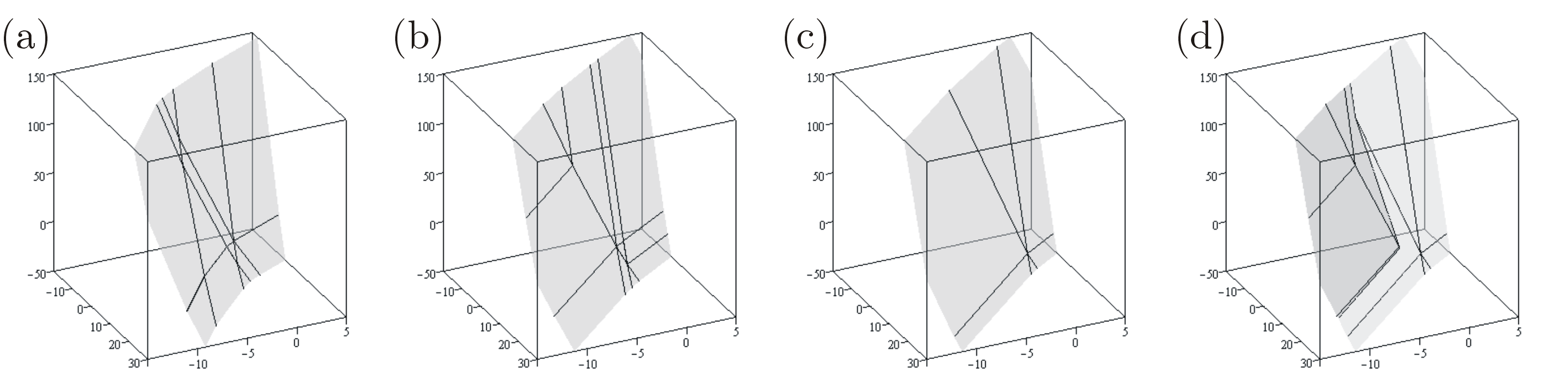}%
\caption{Sets $\mathcal{W}_{0}^{b}$, $\mathcal{V}_{0}^{b}$, $\mathcal{U}_{0}^{b}$
and $\mathcal{Z}_{0}^{b}$ in Construction \ref{const:buyer}, Example~\ref{Exl:3}}%
\label{algorithm_buyer.png}%
\end{figure}
\noindent The unbounded and non-convex set $\mathcal{Z}_{0}^{b}$ has $8$
vertices. Of these, the point $(-1,1,-33)$ is a vertex of $\mathcal{U}_{0}%
^{b}$, the points $(4,-13/2,163/2)$ and $(4,-15/7,-10)$ are vertices of~$\mathcal{V}_{0}^{b}$, and
\begin{gather*}
(-1,-39/7,361/3),(19/5,-15/7,-23/3),(39/10,-73/35,-10),\\
(4,-233/112,-89/8),(127/30,-15/7,-12)
\end{gather*}
are common to both $\mathcal{U}_{0}^{b}$ and~$\mathcal{V}_{0}^{b}$. The lowest
number $x$ such that $(0,0,x)\in\mathcal{Z}_{0}^{b}$ is $x=-\frac{59}{3}\in
\mathcal{U}_{0}^{b}$. By Theorem~\ref{th:buyer}, the bid price of the option is $-x=\frac{59}{3}\cong19.67$.

\end{example}

\section{Numerical examples\label{sec:numerical}}

{We now use the methods developed in this paper to study two examples with a realistic flavour in some detail.}

\begin{example} \label{ex:1}
 Consider a binomial tree model with two risky assets. We assume a notional friction-free exchange rate~$E=(E_{t})$ between the two
assets satisfying
\[
E_{t+1}=\varepsilon_{t}E_{t}%
\]
for~$t=0,\ldots,T-1$, where~$E_{0}=100$ is given, and where~{$(\varepsilon_{t})$}
is a sequence of independent identically distributed random variables taking
the values
\[
e^{\kappa{\Delta}+\sigma\sqrt{{\Delta}}},\quad
e^{\kappa{\Delta}-\sigma\sqrt{{\Delta}}},
\]
each with positive probability. Here~{$\Delta:=\frac{1}{T}$,} $\sigma=0.1$ is the volatility of the
exchange rate, $\kappa=0.05$ is the depreciation rate of the first asset in
terms of the second, the time horizon is $1$~year and~$T=250$ is the number of
steps in the model. We further assume that for~$t=0,\ldots,T$ the actual
exchange rates between the assets are
\[
\pi_{t}^{12}=(1+k)E_{t},\quad\pi_{t}^{11}=\pi_{t}^{22}=1,\quad\pi_{t}%
^{21}=\frac{1}{(1-k)E_{t}},
\]
where~$k=0.5\%$ is the transaction cost rate. A portfolio $y_{t}=(y_{t}%
^{1},y_{t}^{2})$ is solvent at time $t$ if and only if
\begin{equation} \label{eq:num:vartheta}
\vartheta_t(y_t):=\min\{y_{t}^{1}\pi_{t}^{21}+y_{t}^{2},y_{t}^{1}+y_{t}^{2}\pi_{t}^{12}\}\geq0.
\end{equation}

In friction-free models the owner of an option benefits from exercising it if and only if the option payoff can be converted into a non-negative number of units of one of the assets (and for this reason it is standard practice to represent options in friction-free models as non-negative cash payoffs). In the presence of transaction costs, where assets are not freely exchangeable, the situation is no longer so clear-cut, since the benefit from receiving a payoff consisting of a portfolio of assets {depends} greatly on the current position held in the underlying assets at the time that the payoff becomes available. Motivated by the work of \cite{perrakis_lefoll2004}, we make no assumption on the form of the payoff itself but award the owner of an option the right to not exercise the option at all. This is done by formally adding an extra time step $T+1$ in the model and setting the option payoff at that time to be zero.

Consider an American call option on the second asset with expiration date $T$, strike $100$ and physical delivery. This corresponds to the payoff process $\xi=(\xi_t)$ with
\[
\xi_{t}=(\xi_{t}^{1},\xi_{t}^{2})=(-100,1)
\]
for~$t\le T$ and $\xi_{T+1}=(0,0)$. We say that the option is \emph{in the money} {at time~$t$} if $\xi_t$ is a solvent portfolio at time $t$, and \emph{out of the money} if it isn't. {An implementation in C++ of} Constructions~\ref{const:seller} and~\ref{const:buyer} (see also Section~\ref{sec:American}) {gives the ask and bid prices as}
\begin{align*}
 p^a_1(\xi) &=  6.67776, & p^b_1(\xi) &= 0.101895.
\end{align*}

It is interesting to note that the optimal stopping times for the buyer and seller of the American call option are by no means unique, and also that the sets of optimal stopping times for the buyer and seller differ. To see this, consider the two scenarios $\alpha$ and $\beta$ depicted in Figure~\ref{fig:1}. The asset price histories associated with $\alpha$ and $\beta$ coincide up to time step $191$. In scenario $\alpha$, the option is in the money at all times $t$ after step $153$, whereas in scenario $\beta$ the option moves out of the money at time step $225$ and stays out of the money until maturity.

\begin{figure}
\begin{tikzpicture}[x=\tikzwidth/275*\textwidth,y=\tikzheight/60*\textwidth]
	\draw[axis,->] (0,80) -- (275,80) node[right] {$t$};
	\draw[axis,dashed] (0,100) -- (275,100);
	\foreach \x in {50, 100, 154, 191, 225, 250}
		\draw (\x,80) node[below] {$\x$} -- ++(0,1ex);
	\draw[axis,->] (0,80) -- (0,140);
	\foreach \y in {80, 90, 100, 110, 120, 130}
		\draw (0,\y) node[left] {$\y$} -- ++(1ex,0);
	\draw[function] plot file {bin1-buyer-ask} node[above] {$\pi^{12}_t(\alpha)$};
	\draw[function] plot file {bin1-buyer-bid} node[right] {$1/\pi^{21}_t(\alpha)$};
	\draw[function] plot file {bin2-buyer-ask} node[right] {$\pi^{12}_t(\beta)$};
	\draw[function] plot file {bin2-buyer-bid} node[below] {$1/\pi^{21}_t(\beta)$};
\end{tikzpicture}
\caption{Asset prices in binary model, Example~\ref{ex:1}}%
\label{fig:1}%
\end{figure}
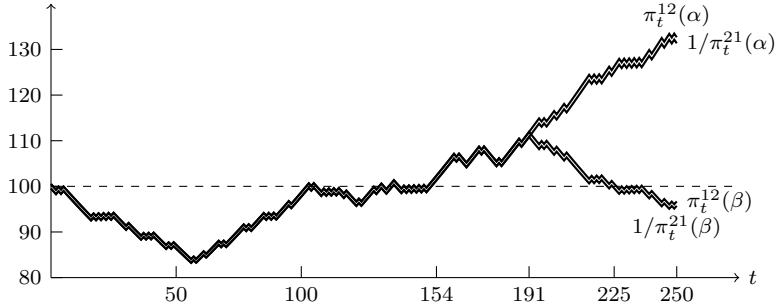

Consider first the superhedging problems for the seller of the American call option in these two {scenarios}. The optimal superhedging strategy $y=(y^1_t,y^2_t)$ for the seller can be constructed as in the proof of Proposition \ref{prop:seller:equivalence-construction} from an initial endowment of $(6.67776,0)$; see Figure~\ref{fig:2}. The optimal stopping time $\chi$ for the seller can be constructed as in the proof of Proposition \ref{prop:construction-of-optimal-martingale-triple}; see Figure~\ref{fig:3}.

In scenario $\alpha$, where the option matures in the money, the optimal superhedging strategy for the seller converges to the option payoff; in particular, it becomes a static strategy $(y^1_t(\alpha),y^2_t(\alpha))=(-100,1)$ for $t\ge218$ in this scenario. This coincides with the earliest time instant when the optimal stoping time $\chi_t(\alpha)$ becomes non-zero ($\chi^*_t(\alpha)$ becomes less than~$1$). Figure~\ref{fig:3} depicts one possibility, but note that the optimal stopping time for the seller is highly non-unique on this path.

In scenario $\beta$, where the option matures out of the money, the optimal superhedging strategy for the seller converges to zero; in particular $(y^1_t,y^2_t)=(0,0)$ for $t\ge247$. This feature results from the need for the seller to remain solvent in the event that the buyer never exercises the {option, which} is likely if the option is both close to maturity and out of the money. The amount of trading required to transform the asset holdings in scenario $\beta$ from a superhedging to a solvent position over the latter part of the model attracts high transaction costs, with the result that the optimal stopping time for the seller, shown in Figure~\ref{fig:3}, corresponds to the buyer never exercising the option.

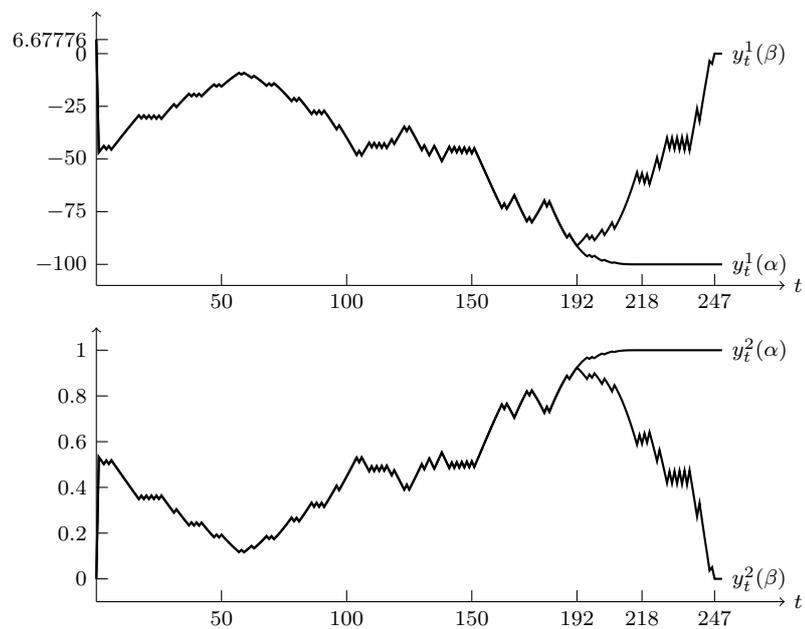
\begin{figure}
\begin{tabular}{@{}r@{}}
\begin{tikzpicture}[x=\tikzwidth/275*\textwidth,y=\tikzheight/130*\textwidth]
	\draw[axis,->] (0,-110) -- (275,-110) node[right] {$t$};
	\foreach \x in {50, 100, 150, 192, 218,247}
		\draw (\x,-110) node[below] {$\x$} -- ++(0,1ex);
	\draw[axis,->] (0,-110) -- (0,20);
	\foreach \y in {-100, -75, -50, -25, 0, 6.67776}
		\draw (0,\y) node[left] {$\y$} -- ++(1ex,0);
	\draw[function] plot file {bin1-seller-cash} node[right] {$y^1_t(\alpha)$};
	\draw[function] plot file {bin2-seller-cash} node[right] {$y^1_t(\beta)$};
\end{tikzpicture}\\
\begin{tikzpicture}[x=\tikzwidth/275*\textwidth,y=\tikzheight/1.2*\textwidth]
	\draw[axis,->] (0,-0.1) -- (275,-0.1) node[right] {$t$};
	\foreach \x in {50, 100, 150, 192, 218,247}
		\draw (\x,-0.1) node[below] {$\x$} -- ++(0,1ex);
	\draw[axis,->] (0,-0.1) -- (0,1.1);
	\foreach \y in {0,0.2,0.4,0.6,0.8,1}
		\draw (0,\y) node[left] {$\y$} -- ++(1ex,0);
	\draw[function] plot file {bin1-seller-shares} node[right] {$y^2_t(\alpha)$};
	\draw[function] plot file {bin2-seller-shares} node[right] {$y^2_t(\beta)$};
\end{tikzpicture}
\end{tabular}
\caption{Optimal superhedging strategy for seller of American call option, Example~\ref{ex:1}}%
\label{fig:2}%
\end{figure}

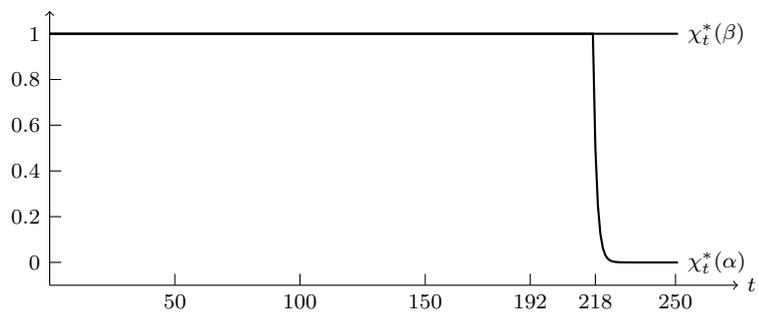
\begin{figure}
\begin{tikzpicture}[x=\tikzwidth/275*\textwidth,y=\tikzheight/1.2*\textwidth]
	\draw[axis,->] (0,-0.1) -- (275,-0.1) node[right] {$t$};
	\foreach \x in {50, 100, 150, 192, 218,250}
		\draw (\x,-0.1) node[below] {$\x$} -- ++(0,1ex);
	\draw[axis,->] (0,-0.1) -- (0,1.1);
	\foreach \y/\yext in {0,0.2,0.4,0.6,0.8,1/\phantom{6.67776}1}
		\draw (0,\y) node[left] {$\yext$} -- ++(1ex,0);
	\draw[function] plot file {bin1-seller-chiast} node[right] {$\chi^\ast_t(\alpha)$};
	\draw[function] plot file {bin2-seller-chiast} node[right] {$\chi^\ast_t(\beta)$};
\end{tikzpicture}
\caption{Optimal stopping time for seller of American call option, Example~\ref{ex:1}}%
\label{fig:3}%
\end{figure}

Consider now the superhedging and optimal exercise problems for the buyer of the American call. The optimal superhedging strategy $y=(y_t)$ and optimal stopping time $\tau$ can be constructed as in the last part of the proof of Proposition~\ref{prop:buyer:equivalence-construction}. The values of the optimal superhedging strategy in {scenarios} $\alpha$ and~$\beta$ are depicted in Figure~\ref{fig:4}.

The construction in the proof of Proposition \ref{prop:buyer:equivalence-construction} gives the optimal exercise time for the buyer in these {scenarios} as
\[\tau(\alpha)=\tau(\beta)=137.\]
At first glance this appears to be contrary to the received wisdom that it is never optimal to exercise an American call early. There is however no contradiction; it is rather the case that the optimal exercise time is not unique and this particular construction returns the earliest optimal stopping time. In particular, recall that the optimal stopping time $\tau$ constructed in Proposition~\ref{prop:buyer:equivalence-construction} is the first stopping time at which the buyer can exercise the option and remain solvent, i.e.
\[
 \tau = \min\{t:\vartheta_t(y_t+\xi_t)\ge0\},
\]
where $\vartheta_t$ is given by \eqref{eq:num:vartheta}. The values of $\vartheta_t$ in {scenarios} $\alpha$ and $\beta$ appear in Figure~\ref{fig:5}, {which} confirms why the first optimal exercise time in these {scenarios} should {be}~$137$.

\begin{figure}
\begin{tabular}{@{}r@{}}
\begin{tikzpicture}[x=\tikzwidth/275*\textwidth,y=\tikzheight/30*\textwidth]
	\draw[axis,->] (0,-5) -- (275,-5) node[right] {$t$};
	\foreach \x in {50, 100, 137,154, 192,225,250}
		\draw (\x,-5) node[below] {$\x$} -- ++(0,1ex);
	\draw[axis,->] (0,-5) -- (0,25);
	\foreach \y in {-0.101895, 5, 10, 15, 20}
		\draw (0,\y) node[left] {$\y$} -- ++(1ex,0);
	\draw[function] plot file {bin1-buyer-cash} node[right] {$y^1_t(\alpha)$};
	\draw[function] plot file {bin2-buyer-cash} node[right] {$y^1_t(\beta)$};
\end{tikzpicture}\\
\begin{tikzpicture}[x=\tikzwidth/275*\textwidth,y=\tikzheight/0.25*\textwidth]
	\draw[axis,->] (0,-0.225) -- (275,-0.225) node[right] {$t$};
	\foreach \x in {50, 100, 137,154, 192,225,250}
		\draw (\x,-0.225) node[below] {$\x$} -- ++(0,1ex);
	\draw[axis,->] (0,-0.225) -- (0,0.025);
	\foreach \y in {-0.2,-0.15,-0.1,-0.05,0}
		\draw (0,\y) node[left] {$\y$} -- ++(1ex,0);
	\draw[function] plot file {bin1-buyer-shares} node[right] {$y^2_t(\alpha)$};
	\draw[function] plot file {bin2-buyer-shares} node[right] {$y^2_t(\beta)$};
\end{tikzpicture}
\end{tabular}
\caption{Optimal superhedging strategy for buyer of American call, Example~\ref{ex:1}}%
\label{fig:4}%
\end{figure}
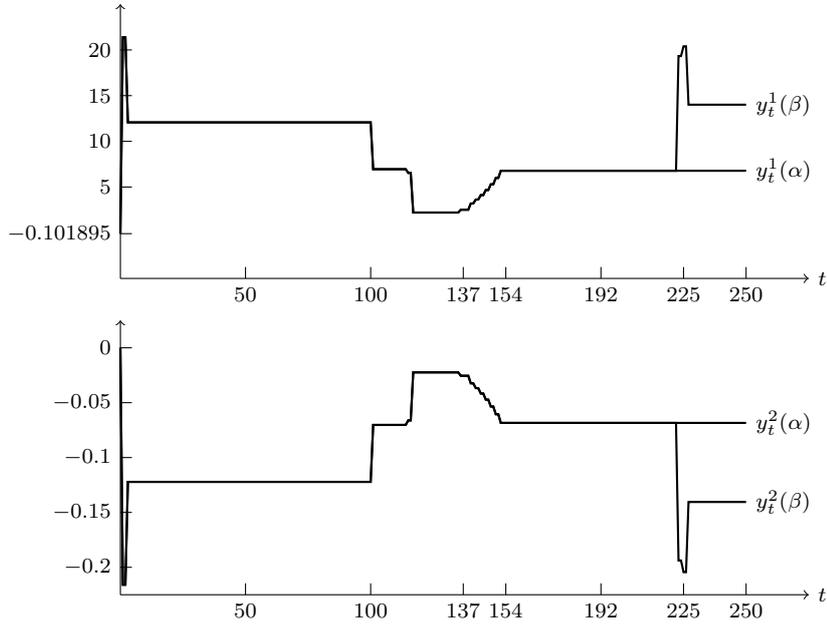

\begin{figure}
\begin{tikzpicture}[x=\tikzwidth/275*\textwidth,y=\tikzheight/55*\textwidth]
	\draw[axis,dashed] (0,0) -- (275,0);
	\draw[axis,->] (0,-20) -- (275,-20) node[right] {$t$};
	\foreach \x in {50, 100, 137, 154, 191, 225,250}
		\draw (\x,-20) node[below] {$\x$} -- ++(0,1ex);
	\draw[axis,->] (0,-20) -- (0,35);
	\foreach \y/\yext in {-10,0,10,20,30/\phantom{-0.101895},30}
		\draw (0,\y) node[left] {$\yext$} -- ++(1ex,0);
	\draw[function] plot file {bin1-buyer-theta} node[above] {$\vartheta_t(y_t+\xi_t)(\alpha)$};
	\draw[function] plot file {bin2-buyer-theta} node[below] {$\vartheta_t(y_t+\xi_t)(\beta)$};
\end{tikzpicture}
\caption{Exercise attractiveness for buyer of American call, Example~\ref{ex:1}}%
\label{fig:5}%
\end{figure}
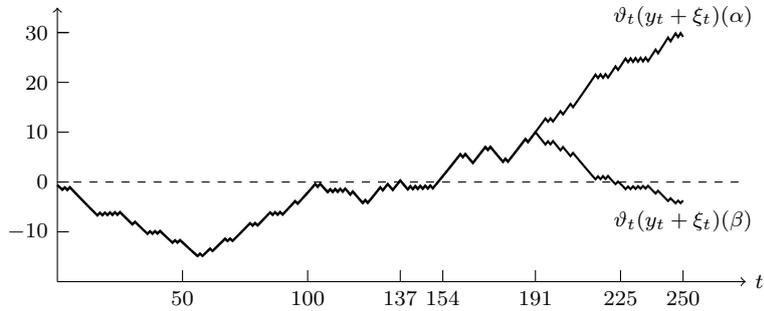
\end{example}

 \begin{example}\label{ex:2}
Consider a model with three currencies and $T=4$ steps with time horizon $1$ based on the two-asset recombinant Korn-Muller model \cite{Korn_Muller2009} with Cholesky decomposition, that is, consider the process $(S_t)$ with
\[
 S_{t+1}=(\varepsilon^1_t S^1_t, \varepsilon^2_t S^2_t,1)
\]
for $t<T$, where $\Delta=\frac{1}{T}$ is the step size and $(S^1_0,S^2_0)=(40,=50)$, and where $(\varepsilon_t)=(\varepsilon^1_t,\varepsilon^2_t)$ is a sequence of independent identically distributed random variables taking the values
\begin{align*}
 \left(e^{-\frac{1}{2}\sigma_1^2\Delta-\sigma_1\sqrt\Delta},e^{-\frac{1}{2}\sigma_2^2\Delta-(\rho+\sqrt{1-\rho^2})\sigma_2\sqrt\Delta}\right),\\
  \left(e^{-\frac{1}{2}\sigma_1^2\Delta-\sigma_1\sqrt\Delta},e^{-\frac{1}{2}\sigma_2^2\Delta-(\rho-\sqrt{1-\rho^2})\sigma_2\sqrt\Delta}\right),\\
 \left(e^{-\frac{1}{2}\sigma_1^2\Delta+\sigma_1\sqrt\Delta},e^{-\frac{1}{2}\sigma_2^2\Delta+(\rho-\sqrt{1-\rho^2})\sigma_2\sqrt\Delta}\right),\\
  \left(e^{-\frac{1}{2}\sigma_1^2\Delta+\sigma_1\sqrt\Delta},e^{-\frac{1}{2}\sigma_2^2\Delta+(\rho+\sqrt{1-\rho^2})\sigma_2\sqrt\Delta}\right),
\end{align*}
each with positive probability. Here $\sigma_1=0.15$, $\sigma_2=0.1$ and $\rho=0.5$. The exchange rates with transaction costs are modelled as
\[
 \pi^{ij}_t:=
 \begin{cases}
 \frac{S^j_t}{S^i_t}(1+k) &\text{if }i\neq j,\\
 1 & \text{if }i=j,
 \end{cases}
\]
for $i,j=1,\ldots,3$ and $t\le T$, where $k=0.005$.

The pricing and hedging constructions of Section \ref{sec:main-results} was implemented by means of the Maple \emph{Convex} package {\cite{Franz2009}} for an American put option with physical delivery on a basket containing one unit each of the first two currencies and with strike $95$ in the third currency, i.e.
\[
 \xi_t=(\xi_1^1,\xi_t^2,\xi_t^3)=(-1,-1,95)
\]
for $t\le T$. As in the previous example we allow for the possibility that the option holder may refrain from exercising by adding an additional time step $T+1$ and taking $\xi_{T+1}=(0,0,0)$. Constructions \ref{const:seller} and \ref{const:buyer} give the ask and bid prices of this option in the three currencies as
\begin{align*}
 p^a_1(\xi) &= 0.22587, & p^a_2(\xi) &= 0.18070, & p^a_3(\xi) &= 8.98997,\\
 p^b_1(\xi) &= 0.12075, & p^b_2(\xi) &= 0.09660, & p^b_3(\xi) &= 4.85420.
\end{align*}

Let us now use Constructions \ref{const:seller:hedge} and \ref{const:buyer:hedge} to compute the hedging strategies for the buyer and seller in the scenario corresponding to the path
\begin{align*}
 S_0 = \begin{pmatrix}40\\50\end{pmatrix},
 S_1 = \left(\begin{array}{d}37.006\\46.641\end{array}\right),
 S_2 = \left(\begin{array}{d}34.235\\47.443\end{array}\right),
 S_3 = \left(\begin{array}{d}36.798\\50.733\end{array}\right),
 S_4 = \left(\begin{array}{d}39.553\\54.251\end{array}\right).
\end{align*}
Table \ref{tab:num2:1} gives the resulting strategy for the seller starting from the initial endowment $p^a_3(\xi)e^3$, with the bullet in each graph representing $y_t$. For $t=0,1$ the set $(y_t-\mathcal{K}_t)\cap\mathcal{W}^a_t$ has only one element, which becomes $y_{t+1}$. For $t=2,3$ we have $y_t\in(y_t-\mathcal{K}_t)\cap\mathcal{W}^a_t$ and so it was natural to let $y_{t+1}:=y_t$ to avoid trading (and the associated transaction costs). For $t=2,3$ the choice of $y_{t+1}$ is no longer unique; the choice $y_{t+1}:=y_t$ in Table \ref{tab:num2:1} avoids trading (and the associated transaction costs) but any other element of $(y_t-\mathcal{K}_t)\cap\mathcal{W}^a_t$ would have been acceptable in each case.
\begin{table}
\caption{Superhedging {strategy} for seller along a path, Example~\ref{ex:2}}
\label{tab:num2:1}
\begin{tabular}{@{}cc@{}c@{}c@{}c@{}}
\hline\noalign{\smallskip}
 $t$ & $S_t$ & $y_t$ & $\mathcal{Z}^a_t$ & $(y_t-\mathcal{K}_t)\cap\mathcal{W}^a_t$\\
 \noalign{\smallskip}\hline\noalign{\smallskip}
 0 & $\left(\begin{array}{d}40.000\\50.000\end{array}\right)$ & $\left(\begin{array}{d}0.000\\0.000\\8.990\end{array}\right)$ & \raisebox{-0.5\height}{\includegraphics[height=\tabfigheight]{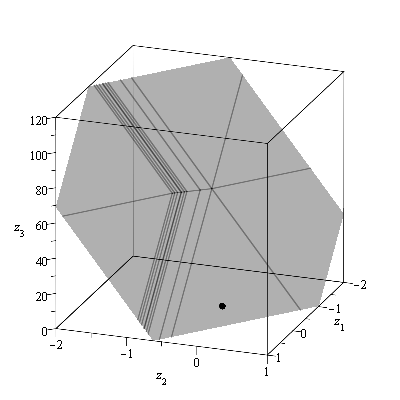}} & $\left\{\left(\begin{array}{d}-0.798\\-0.440\\62.668\end{array}\right)\right\}$ \\
 1 & $\left(\begin{array}{d}37.006\\46.641\end{array}\right)$ & $\left(\begin{array}{d}-0.798\\-0.440\\62.668\end{array}\right)$ & \raisebox{-0.5\height}{\includegraphics[height=\tabfigheight]{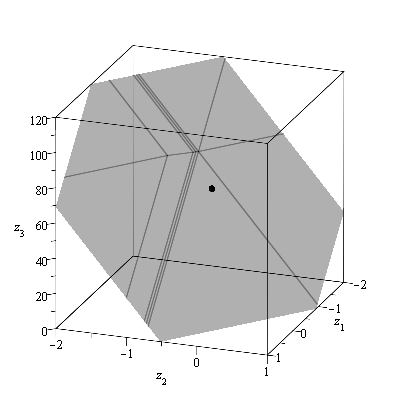}} & $\left\{\left(\begin{array}{d}-0.962\\-0.690\\80.300\end{array}\right)\right\}$\\
 2 & $\left(\begin{array}{d}34.235\\47.443\end{array}\right)$ & $\left(\begin{array}{d}-0.962\\-0.690\\80.300\end{array}\right)$ & \raisebox{-0.5\height}{\includegraphics[height=\tabfigheight]{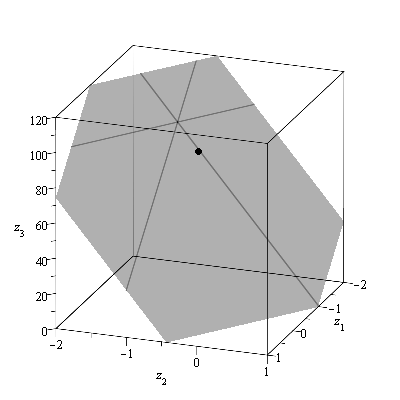}} & \raisebox{-0.5\height}{\includegraphics[height=\tabfigheight]{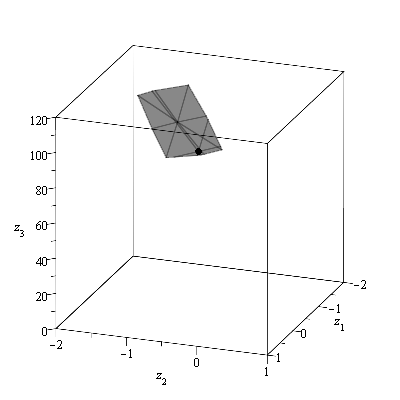}} \\
 3 & $\left(\begin{array}{d}36.798\\50.733\end{array}\right)$ & $\left(\begin{array}{d}-0.962\\-0.690\\80.300\end{array}\right)$ & \raisebox{-0.5\height}{\includegraphics[height=\tabfigheight]{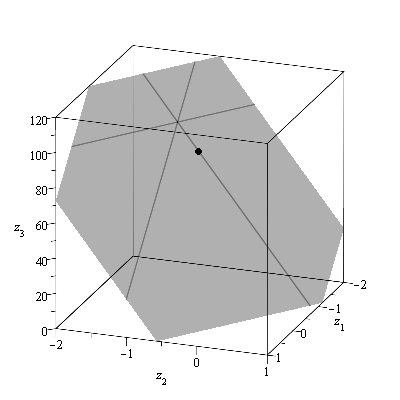}} & \raisebox{-0.5\height}{\includegraphics[height=\tabfigheight]{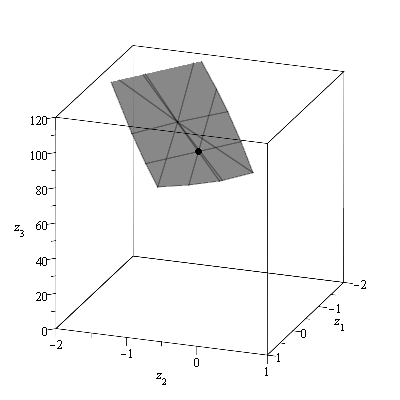}} \\
 4 & $\left(\begin{array}{d}39.553\\54.251\end{array}\right)$ & $\left(\begin{array}{d}-0.962\\-0.690\\80.300\end{array}\right)$ & \raisebox{-0.5\height}{\includegraphics[height=\tabfigheight]{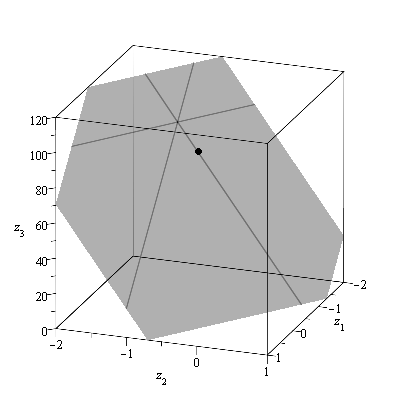}} & N/A \\
\noalign{\smallskip}\hline
 \end{tabular}
\end{table}

Table \ref{tab:num2:2} gives the optimal strategy for the buyer starting from the initial endowment $-p^b_3(\xi)e^3$ along the same path (omitted to save space). Again the bullet in each graph represents $y_t$. Since $y_0=-p^b_3(\xi)e^3\notin\mathcal{U}^b_0$ we have $\tau_0=1$, which reflects that it is in the buyer's best interest to wait rather than exercise the option at time $0$. Since $y_1\in\mathcal{U}^b_1=\mathcal{Z}^b_1$ we have $\tau_1=1$, which means that in this path it is optimal to exercise the option at time $1$. Construction \ref{const:buyer:hedge} completes the strategy by formally setting $y_4=y_3=y_2$ and $\tau_4=\tau_3=\tau_2$, but in practice a market agent exercising the option at time $1$ would create the portfolio
\[
 y_1 + \xi_1 = (-1,-1,95) = (-0.201,-0.365,26.143)\in\mathcal{K}_1
\]
and liquidate it immediately (for example, into $1.535$ units of currency $3$).
\begin{table}
\caption{Superhedging {strategy} for buyer along a path, Example~\ref{ex:2}}
\label{tab:num2:2}
\begin{tabular}{@{}cc@{}c@{}ccc@{}}
\hline\noalign{\smallskip}
 $t$ & $y_t$ & $\mathcal{Z}^b_t$ & $y_t\in\mathcal{U}^b_t$? & $\tau_t$ & $(y_t-\mathcal{K}_t)\cap\mathcal{W}^b_t$\\
 \noalign{\smallskip}\hline\noalign{\smallskip}
 $\left.0\right.$ & $\left(\begin{array}{d}0.000\\0.000\\-4.854\end{array}\right)$ & \raisebox{-0.5\height}{\includegraphics[height=\tabfigheight]{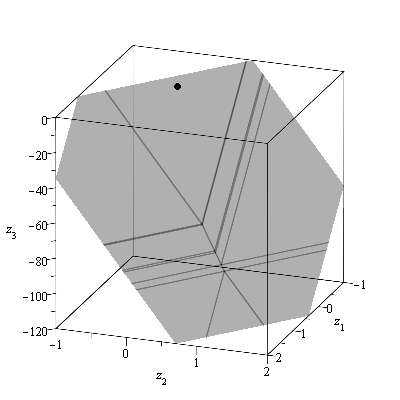}} & No & $1$ & $\left\{\left(\begin{array}{d}0.799\\0.635\\-68.857\end{array}\right)\right\}$ \\
 $\left.1\right.$ & $\left(\begin{array}{d}0.799\\0.635\\-68.857\end{array}\right)$ & \raisebox{-0.5\height}{\includegraphics[height=\tabfigheight]{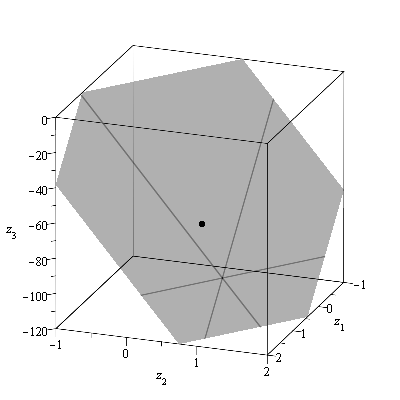}} & Yes & 1 & N/A \\
 2--4 & $\left(\begin{array}{d}0.799\\0.635\\-68.857\end{array}\right)$ & N/A & N/A & 1 & N/A \\
\noalign{\smallskip}\hline
\end{tabular}
\end{table}

 \end{example}

\appendix

\section{Appendix: Proof of Lemma \protect\ref{lem:support-technical}}
\label{appendix}

\normalsize

Lemma \ref{lem:support-technical} in Section \ref{sec:seller} depends on the following technical result.

\begin{lemma} \label{lem:convex-hull-support-functions}
Fix some $i=1,\ldots,d$, and let $A_1,\ldots A_n$ be non-empty closed convex sets in $\mathbb{R}^d$ such that $\dom\delta^\ast_{A_k}$ is compactly $i$-generated for all $k$. Define $A:=\bigcap_{k=1}^n A_k\neq\emptyset$; then
\[
 \delta^\ast_A = \conv\{\delta^\ast_{A_1},\ldots,\delta^\ast_{A_n}\},
\]
and for each $x\in\sigma_i(\dom\delta^\ast_{A})$ there exist $\alpha_1,\ldots,\alpha_n\ge0$ and $x_1,\ldots,x_n$ with $x_k\in\sigma_i(\dom\delta^\ast_{A_k})$ for all $k$ such that
\begin{align*}
 \delta^\ast_A(x) &= \sum_{k=1}^n\alpha_k\delta^\ast_{A_k}(x_k), & \sum_{k=1}^n\alpha_k &= 1, & \sum_{k=1}^n\alpha_kx_k &= x.
\end{align*}
The cone $\dom\delta^\ast_{A}$ is moreover compactly $i$-generated and
\begin{equation} \label{eq:lem:convex-hull-support-functions:4}
 \dom\delta^\ast_{A} = \conv\left[\bigcup_{k=1}^n\dom\delta^\ast_{A_k}\right].
\end{equation}
\end{lemma}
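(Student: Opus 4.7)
The plan is to interpret the stated identity as a Fenchel/conjugation duality and then use the compactly $i$-generated hypothesis to ensure that the infimum defining $\conv\{\cdot\}$ is attained, which in turn removes the closure operation appearing in the general formula. The easy direction $\delta^\ast_A \le \conv\{\delta^\ast_{A_1},\ldots,\delta^\ast_{A_n}\}$ holds because $A \subseteq A_k$ yields $\delta^\ast_A \le \delta^\ast_{A_k}$ for every $k$; thus $\delta^\ast_A$ is a convex function majorised by each $\delta^\ast_{A_k}$, and by the characterisation in the preliminaries of $\conv\{\cdot\}$ as the greatest such function, it lies below $\conv\{\delta^\ast_{A_1},\ldots,\delta^\ast_{A_n}\}$.

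For the reverse inequality I would exploit positive homogeneity of each $\delta^\ast_{A_k}$, which allows the weights $\alpha_k$ in the definition of $\conv\{\cdot\}$ to be absorbed into the vectors $x_k$ and identifies $\conv\{\delta^\ast_{A_1},\ldots,\delta^\ast_{A_n}\}$ with the infimal convolution of $\delta^\ast_{A_1},\ldots,\delta^\ast_{A_n}$. Standard conjugate duality applied to the identity that the sum of the indicator functions of the $A_k$ equals the indicator of $A$ then gives
\[
\delta^\ast_A = \cl\conv\{\delta^\ast_{A_1},\ldots,\delta^\ast_{A_n}\}.
\]
To dispose of the closure it suffices to prove that the infimum defining $\conv\{\cdot\}$ is attained, for this makes $\conv\{\delta^\ast_{A_1},\ldots,\delta^\ast_{A_n}\}$ itself closed and so equal to its closure.

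This attainment is the crux, and is where the compactly $i$-generated hypothesis is essential. Given $x \in \sigma_i(\dom \delta^\ast_A)$, take a minimising sequence of decompositions $x = \sum_k y_k^{(m)}$ with $y_k^{(m)} \in \dom \delta^\ast_{A_k}$. Compact $i$-generation forces $(y_k^{(m)})^i \ge 0$, and combined with $x^i = 1$ this gives $\sum_k (y_k^{(m)})^i = 1$, so each $(y_k^{(m)})^i \in [0,1]$. Writing $y_k^{(m)} = (y_k^{(m)})^i \hat x_k^{(m)}$ with $\hat x_k^{(m)} \in \sigma_i(\dom \delta^\ast_{A_k})$ whenever $(y_k^{(m)})^i > 0$, compactness of each $\sigma_i(\dom \delta^\ast_{A_k})$ and of the simplex of weights allows passage to a convergent subsequence. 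Lower semicontinuity of each $\delta^\ast_{A_k}$, which is closed because $A_k$ is closed, together with positive homogeneity, delivers a limit tuple $(\alpha_k, x_k)$ with $\alpha_k \ge 0$, $\sum_k \alpha_k = 1$, $x_k \in \sigma_i(\dom \delta^\ast_{A_k})$, $\sum_k \alpha_k x_k = x$, and $\sum_k \alpha_k \delta^\ast_{A_k}(x_k) \le \delta^\ast_A(x)$; indices where $\alpha_k = 0$ are filled in with an arbitrary element of the non-empty $\sigma_i(\dom \delta^\ast_{A_k})$, which costs nothing.

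With the identity and the structural attainment in place, the formula $\dom \delta^\ast_A = \conv\bigl[\bigcup_k \dom \delta^\ast_{A_k}\bigr]$ is immediate from the general relation $\dom \conv\{g_1,\ldots,g_n\} = \conv\bigl[\bigcup_k \dom g_k\bigr]$ recalled in the preliminaries. Finally, the attainment representation exhibits $\sigma_i(\dom \delta^\ast_A)$ as the image of the compact set $\{((\alpha_k),(x_k)) : \alpha_k \ge 0,\, \sum_k \alpha_k = 1,\, x_k \in \sigma_i(\dom \delta^\ast_{A_k})\}$ under the continuous map $((\alpha_k),(x_k)) \mapsto \sum_k \alpha_k x_k$, hence compact; non-emptiness follows from $A \ne \emptyset$ together with the non-emptiness of each $\sigma_i(\dom \delta^\ast_{A_k})$. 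This proves that $\dom \delta^\ast_A$ is compactly $i$-generated, completing the argument.
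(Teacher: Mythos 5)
Your overall strategy --- establish $\delta^\ast_A=\cl\conv\{\delta^\ast_{A_1},\dots,\delta^\ast_{A_n}\}$ by conjugate duality and then use compactness of the sections $\sigma_i(\dom\delta^\ast_{A_k})$ to remove the closure --- is the right one, and it differs from the paper's route only in mechanics (the paper passes to the compact-domain generators $g_k$ of the $\delta^\ast_{A_k}$ and invokes Rockafellar's Corollaries 9.8.2--9.8.3 for closedness and attainment of $\conv\{g_1,\dots,g_n\}$, then re-homogenises). However, the crux of your argument has a genuine gap. First, the reduction ``if the infimum defining $f:=\conv\{\delta^\ast_{A_1},\dots,\delta^\ast_{A_n}\}$ is attained then $f$ is closed'' is not a valid implication: pointwise attainment of a defining infimum says nothing about lower semicontinuity. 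Second, and more seriously, your minimising sequence consists of decompositions of the fixed point $x$ itself, so the compactness extraction only yields a limit decomposition whose value is bounded by $\liminf_m\sum_k\delta^\ast_{A_k}(y_k^{(m)})=f(x)$. Since $\delta^\ast_A=\cl f\le f$, a bound by $f(x)$ does \emph{not} give the claimed bound by $\delta^\ast_A(x)$ --- the inequality points the wrong way --- and the inequality you actually need, $f(x)\le\delta^\ast_A(x)$, is never established. (There is also the minor prior issue that for $x\in\sigma_i(\dom\delta^\ast_A)=\sigma_i(\dom\cl f)$ you do not yet know that any decomposition of $x$ exists, i.e.\ that $x\in\dom f$.)

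The gap is fillable with exactly the tools you already deploy: since $\cl f(x)=\liminf_{y\to x}f(y)$, choose $x^{(m)}\to x$ with $f(x^{(m)})\to\cl f(x)=\delta^\ast_A(x)$ and near-optimal decompositions $x^{(m)}=\sum_k y_k^{(m)}$; your normalisation $(y_k^{(m)})^i\ge0$ with $\sum_k(y_k^{(m)})^i=(x^{(m)})^i\to1$, together with compactness of the sections and lower semicontinuity plus boundedness below of each $\delta^\ast_{A_k}$ on its compact section, then produces a genuine decomposition of $x$ with value at most $\delta^\ast_A(x)$. This single extraction simultaneously yields $f(x)\le\delta^\ast_A(x)$ (hence $f=\delta^\ast_A$), the attainment statement, and the membership $x\in\dom f$. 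With that repair, the remaining steps --- the domain identity and the compact $i$-generation of $\dom\delta^\ast_A$ via the continuous image of a compact parameter set --- go through as you describe.
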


\begin{proof}
Let $f:=\conv\{\delta^\ast_{A_1}, \ldots, \delta^\ast_{A_n}\}$. Then
$
 \cl f = \delta^\ast_A;
$
see {\cite[Corollary 16.5.1]{rockafellar1996}}. Since $\delta^\ast_A$ is proper it follows that $f$ is proper and
\begin{equation} \label{eq:lem:convex-hull-support-functions:epi f}
 \overline{\epi f} = \epi \delta^\ast_A
\end{equation}
by \eqref{eq:closure-of-function}, so that $\delta^\ast_A=\cl f\le f$.

For any $k=1,\ldots,n$, the compact $i$-generation of $\dom \delta^\ast_{A_k}$ means that $\sigma_i(\dom\delta^\ast_{A_k})$ is compact and non-empty. Thus the positive homogeneity of $\delta^\ast_{A_k}$ guarantees the existence of a closed proper convex function $g_k$ with $\dom g_k=\sigma_i(\dom\delta^\ast_{A_k})$ compact such that $\delta^\ast_{A_k}$ is \emph{generated} by $g_k$, i.e.
\[
 \delta^\ast_{A_k}(y) = \begin{cases}
 \lambda g_k(x) &\text{if there exists } \lambda\ge0\text{ and }x\in\dom g_k\text{ such that }y=\lambda x,\\
\infty & \text{otherwise}.
\end{cases}
\]

Let $g:=\conv\{g_1,\ldots,g_n\}$; then
\[
 \dom g = \conv\left[\bigcup_{k=1}^n\sigma_i(\dom\delta^\ast_{A_k})\right]
\]
is compact {\cite[Corrolary 9.8.2]{rockafellar1996}}. Moreover, $g$ is closed and proper, and for each $x\in\dom g$ there exist $\alpha_1,\ldots,\alpha_n\ge0$ and $x_1,\ldots,x_n$ such that $x_k\in\sigma_i(\dom\delta^\ast_{A_k})$ for all $k$ and
\begin{align}
 g(x) &= \sum_{k=1}^n\alpha_kg_k(x_k), & \sum_{k=1}^n\alpha_k &= 1, & \sum_{k=1}^n\alpha_kx_k &= x; \label{eq:lem:support-duality:1}
\end{align}
see \cite[Corollary 9.8.3]{rockafellar1996} (the common recession function is $\delta^\ast_{\mathbb{R}^d}$ since $\dom g_k$ is compact for all $k$).

Let $h$ be the positively homogeneous function generated by $g$, i.e.
\[
h(y)
:= \begin{cases}
 \lambda g(x) &\text{if there exists } \lambda\ge0\text{ and }x\in\dom g\text{ such that }y=\lambda x,\\
\infty & \text{otherwise.}
\end{cases}
\]
Clearly, $h$ is a proper convex function and $\dom h = \cone (\dom g)$ is compactly $i$-generated. The function $h$ is moreover closed since
\[
 \epi h = (\cone (\epi g)) \cup\{(0,\lambda):\lambda\ge0\} = \overline{\epi h};
\]
see {\cite[Theorem 9.6]{rockafellar1996}}, and it is majorised by $\delta^\ast_{A_1}, \ldots, \delta^\ast_{A_n}$, hence $h\le f$. Since~$h$ is closed, it then follows from \eqref{eq:lem:convex-hull-support-functions:epi f} that
\begin{equation} \label{eq:lem:convex-hull-support-functions:fh}
 h  \le \delta^\ast_A \le f.
\end{equation}

Fix any $y\in\dom h$. There exist $\lambda\ge0$ and $x\in\sigma_i(\dom h)=\dom g$ such that $y=\lambda x$. Fix any $\alpha_1,\ldots,\alpha_n\ge0$ and $x_1,\ldots,x_n$ satisfying \eqref{eq:lem:support-duality:1} and where $x_k\in\sigma_i(\dom\delta^\ast_{A_k})$ for all $k$. Let $y_k:=\lambda x_k$ for all $k$. Then
\[
 \sum_{k=1}^n \alpha_k y_k = \lambda\sum_{k=1}^n \alpha_k x_k = \lambda x = y
\]
and
\[
 \sum_{k=1}^n\alpha_k\delta^\ast_{A_k}(y_k) = \lambda\sum_{k=1}^n\alpha_kg_k(x_k) = \lambda g(x) = h(y).
\]
By the definition of the convex hull, this means that $f(y)\le h(y)$. Combining this with \eqref{eq:lem:convex-hull-support-functions:fh} gives
\begin{equation*}
 f = h = \delta^\ast_A.
\end{equation*}
The properties of $\dom\delta^\ast_{A}$, in particular~\eqref{eq:lem:convex-hull-support-functions:4}, then follow upon observing that \[\dom g=\sigma_i(\dom h)=\sigma_i(\dom\delta^\ast_{A}).
\]
\end{proof}

The paper concludes with the proof of Lemma \ref{lem:support-technical}.

\begin{proof}[Proof of Lemma \ref{lem:support-technical}]
For each $t$, since $\mathcal{K}_t$ is a cone, the support function of $-\mathcal{K}_t$ is
\begin{align}
 \delta^\ast_{-\mathcal{K}_t}(x)
&= \begin{cases}
    0 &\text{if } x\cdot y \le 0 \text{ for all } y\in-\mathcal{K}_t,\\
    \infty &\text{otherwise}
   \end{cases}
= \begin{cases}
    0 &\text{if } x\in\mathcal{K}^\ast_t,\\
    \infty &\text{otherwise}.\label{eq:support-of-St}
   \end{cases}
\end{align}
Thus $\dom \delta^\ast_{-\mathcal{K}_t}=\mathcal{K}_t^\ast$, and so $\dom \delta^\ast_{-\mathcal{K}_t}$ is compactly $i$-generated.

For any $t$ we have $U^a_t=\delta^\ast_{\mathbb{R}^d}$ on $\Omega\setminus\mathcal{E}_t$, together with
\[
 U^a_t(y) = \delta^\ast_{\{-\xi_t\}-\mathcal{K}_t}(y) = \delta^\ast_{\{-\xi_t\}}(y) + \delta^\ast_{-\mathcal{K}_t}(y) = -y\cdot\xi_t + \delta^\ast_{-\mathcal{K}_t}(y)
\]
for $y\in\mathbb{R}^d$ on $\mathcal{E}_t$ {\cite[p.~113]{rockafellar1996}}. Similarly,
\[
 V^a_t = \delta^\ast_{-\mathcal{W}^a_t-\mathcal{K}_t} = \delta^\ast_{-\mathcal{W}^a_t} + \delta^\ast_{-\mathcal{K}_t} = W^a_t + \delta^\ast_{-\mathcal{K}_t}.
\]
Equalities \eqref{eq:U^a_t-support-function-formula} and \eqref{eq:V^a_t-support-function-formula} then follow from \eqref{eq:lem:support-of-cone:Rd} and \eqref{eq:support-of-St}.

We now turn to claims~\ref{claim:lem:support-technical:2} and~\ref{claim:lem:support-technical:3}. Note first that the sets $\mathcal{U}^a_t$, $\mathcal{V}^a_t$, $\mathcal{W}^a_t$ and $\mathcal{Z}^a_t$ are non-empty for all $t$. This is easy to check by taking the trivial superhedging strategy for the seller defined by \eqref{eq:seller:trivial-superhedging-strategy} and following the backward induction argument in the proof of Proposition~\ref{prop:seller:equivalence-construction}.

We show below by backward induction that $\dom Z^a_t$ is compactly $i$-generated
on $\mathcal{E}^\ast_t.$ While doing so we will establish claims~\ref{claim:lem:support-technical:2} and~\ref{claim:lem:support-technical:3} for all $t$. At time~$T$, using $Z^a_T=U^a_T$ and \eqref{eq:construction:Ut}, the set $ \dom Z^a_T = \mathcal{K}_T^\ast$ is compactly $i$-generated on $\mathcal{E}^\ast_T=\mathcal{E}_T$, while $Z^a_T=\delta^\ast_{\mathbb{R}^d}$ on $\Omega\setminus\mathcal{E}^\ast_T$. This establishes claim~\ref{claim:lem:support-technical:2} for $t=T$ since $\mathcal{E}^\ast_{T+1}=\emptyset$.

At any time $t<T$, suppose that $\dom Z^a_{t+1}$ is compactly $i$-generated
on $\mathcal{E}^\ast_{t+1}$. For any $\mu\in\Omega_t$ there are now two possibilities:
\begin{itemize}
 \item If $\mu\subseteq\mathcal{E}^\ast_{t+1}$, then Lemma \ref{lem:convex-hull-support-functions} applies to the sets $\{-\mathcal{Z}^{a\nu}_{t+1}:\nu\in\successors\mu\}$ since
\[
 \bigcap_{\nu\in\successors\mu}\mathcal{Z}^{a\nu}_{t+1} = \mathcal{W}^{a\mu}_t \neq \emptyset;
\]
this immediately gives claim~\ref{claim:lem:support-technical:3}. Moreover, the compact $i$-generation of $\dom W^{a\mu}_t$ in combination with
\[
 \dom V^{a\mu}_t = \dom W^{a\mu}_t \cap \mathcal{K}^{\ast\mu}_t
\]
shows that $\dom V^{a\mu}_t$ is also compactly $i$-generated. There are now two possibilities:
\begin{itemize}
\item If $\mu\subseteq\mathcal{E}_t$, then Lemma \ref{lem:convex-hull-support-functions} applies to the sets $-\mathcal{U}^{a\mu}_t$ and $-\mathcal{V}^{a\mu}_t$.
This gives claim~\ref{claim:lem:support-technical:2}\ref{claim:lem:support-technical:2a} after noting that
\[
 \dom Z^{a\mu}_t = \conv(\dom V^{a\mu}_t\cup\mathcal{K}^{\ast\mu}_t)=\mathcal{K}^{\ast\mu}_t
\]
by \eqref{eq:lem:convex-hull-support-functions:4}.
\item If $\mu\not\subseteq\mathcal{E}_t$, then $Z^{a\mu}_t=V^{a\mu}_t$ by Remark \ref{remark:construction:when-is-R^d:dual}, which gives claim~\ref{claim:lem:support-technical:2}\ref{claim:lem:support-technical:2c}.
\end{itemize}

\item If $\mu\not\subseteq\mathcal{E}^\ast_{t+1}$, then $Z^{a\mu}_t=U^{a\mu}_t$ by Remark \ref{remark:construction:when-is-R^d:dual}. There are again two possibilities:
\begin{itemize}
\item If $\mu\subseteq\mathcal{E}_t$, then \eqref{eq:U^a_t-support-function-formula} gives $\dom Z^{a\mu}_t=\mathcal{K}^{\ast\mu}_t$. This is claim~\ref{claim:lem:support-technical:2}\ref{claim:lem:support-technical:2b}.
\item If $\mu\not\subseteq\mathcal{E}_t$, then \eqref{eq:U^a_t-support-function-formula} immediately gives claim~\ref{claim:lem:support-technical:2}\ref{claim:lem:support-technical:2d}.
\end{itemize}
\end{itemize}
In summary, we have shown that $\dom Z^a_t$ is compactly $i$-generated whenever
\[\mu\subseteq [\mathcal{E}^\ast_{t+1}\cap\mathcal{E}_t] \cup [\mathcal{E}^\ast_{t+1}\setminus\mathcal{E}_t] \cup [\mathcal{E}_t\setminus\mathcal{E}^\ast_{t+1}] = \mathcal{E}^\ast_t.\]
This concludes the inductive step, and completes the proof of Lemma \ref{lem:support-technical}.
\end{proof}

\end{document}